\PassOptionsToPackage{unicode}{hyperref}
\PassOptionsToPackage{hyphens}{url}
\PassOptionsToPackage{dvipsnames,svgnames,x11names}{xcolor}

\documentclass[11pt]{article}

\usepackage{bm}
\usepackage{color}
\def\bSig\mathbf{\Sigma}

\usepackage{amsmath}
\usepackage{amssymb}
\usepackage{amsthm}
\usepackage[linesnumbered,ruled,vlined]{algorithm2e}
\usepackage{subcaption}
\usepackage{float}
\usepackage{booktabs}
\usepackage{multirow}

\usepackage[hidelinks,colorlinks=true,linkcolor=blue,citecolor=blue]{hyperref}

\usepackage{graphicx}
\usepackage{subcaption}
\usepackage{natbib}
\bibliographystyle{agsm}
\setcitestyle{authoryear,open={(},close={)}}
\usepackage{booktabs}
\usepackage{amsfonts}
\usepackage{amssymb}
\usepackage{dsfont}
\usepackage{eucal}
\usepackage{multirow}
\usepackage[dvipsnames]{xcolor}
\usepackage{caption}
\captionsetup{font=normal}

\newtheorem{theorem}{Theorem} 
\newtheorem{proposition}{Proposition} 
\newtheorem{corollary}{Corollary} 
\newtheorem{assumption}{Assumption}

\newtheorem{example}{Example}
\usepackage{apptools}
\AtAppendix{\counterwithin{lemma}{section}}
\AtAppendix{\counterwithin{corollary}{section}}
\AtAppendix{\counterwithin{proposition}{section}}
\AtAppendix{\counterwithin{theorem}{section}}
\AtAppendix{\counterwithin{figure}{section}}
\AtAppendix{\counterwithin{table}{section}}

\usepackage{xr-hyper}

\makeatletter
\newcommand*{\addFileDependency}[1]{%
  \typeout{(#1)}%
  \@addtofilelist{#1}%
  \IfFileExists{#1}{}{\typeout{No file #1.}}
}
\makeatother


\DeclareMathOperator*{\argmax}{arg\,max}

\def\bSig\mathbf{\Sigma}

\captionsetup{font=small}
\usepackage{iftex}
\ifPDFTeX
  \usepackage[T1]{fontenc}
  \usepackage[utf8]{inputenc}
  \usepackage{textcomp} 
\else 
  \usepackage{unicode-math}
  \defaultfontfeatures{Scale=MatchLowercase}
  \defaultfontfeatures[\rmfamily]{Ligatures=TeX,Scale=1}
\fi
\usepackage{lmodern}
\ifPDFTeX\else  
\fi
\IfFileExists{upquote.sty}{\usepackage{upquote}}{}
\IfFileExists{microtype.sty}{
  \usepackage[]{microtype}
  \UseMicrotypeSet[protrusion]{basicmath} 
}{}
\makeatletter
\@ifundefined{KOMAClassName}{
  \IfFileExists{parskip.sty}{%
    \usepackage{parskip}
  }{
    \setlength{\parindent}{0pt}
    \setlength{\parskip}{6pt plus 2pt minus 1pt}}
}{
  \KOMAoptions{parskip=half}}
\makeatother
\usepackage{xcolor}
\setlength{\emergencystretch}{3em} 
\setcounter{secnumdepth}{5}
\makeatletter
\ifx\paragraph\undefined\else
  \let\oldparagraph\paragraph
  \renewcommand{\paragraph}{
    \@ifstar
      \xxxParagraphStar
      \xxxParagraphNoStar
  }
  \newcommand{\xxxParagraphStar}[1]{\oldparagraph*{#1}\mbox{}}
  \newcommand{\xxxParagraphNoStar}[1]{\oldparagraph{#1}\mbox{}}
\fi
\ifx\subparagraph\undefined\else
  \let\oldsubparagraph\subparagraph
  \renewcommand{\subparagraph}{
    \@ifstar
      \xxxSubParagraphStar
      \xxxSubParagraphNoStar
  }
  \newcommand{\xxxSubParagraphStar}[1]{\oldsubparagraph*{#1}\mbox{}}
  \newcommand{\xxxSubParagraphNoStar}[1]{\oldsubparagraph{#1}\mbox{}}
\fi
\makeatother

\usepackage{longtable,booktabs,array}
\usepackage{calc} 
\usepackage{etoolbox}
\makeatletter
\patchcmd\longtable{\par}{\if@noskipsec\mbox{}\fi\par}{}{}
\makeatother
\IfFileExists{footnotehyper.sty}{\usepackage{footnotehyper}}{\usepackage{footnote}}
\makesavenoteenv{longtable}
\usepackage{graphicx}
\makeatletter
\def\maxwidth{\ifdim\Gin@nat@width>\linewidth\linewidth\else\Gin@nat@width\fi}
\def\maxheight{\ifdim\Gin@nat@height>\textheight\textheight\else\Gin@nat@height\fi}
\makeatother
\setkeys{Gin}{width=\maxwidth,height=\maxheight,keepaspectratio}
\makeatletter
\def\fps@figure{htbp}
\makeatother

\addtolength{\oddsidemargin}{-.5in}%
\addtolength{\evensidemargin}{-.1in}%
\addtolength{\textwidth}{1in}%
\addtolength{\textheight}{1.7in}%
\addtolength{\topmargin}{-1in}
\makeatletter
\@ifpackageloaded{caption}{}{\usepackage{caption}}
\AtBeginDocument{%
\ifdefined\contentsname
  \renewcommand*\contentsname{Table of contents}
\else
  \newcommand\contentsname{Table of contents}
\fi
\ifdefined\listfigurename
  \renewcommand*\listfigurename{List of Figures}
\else
  \newcommand\listfigurename{List of Figures}
\fi
\ifdefined\listtablename
  \renewcommand*\listtablename{List of Tables}
\else
  \newcommand\listtablename{List of Tables}
\fi
\ifdefined\figurename
  \renewcommand*\figurename{Figure}
\else
  \newcommand\figurename{Figure}
\fi
\ifdefined\tablename
  \renewcommand*\tablename{Table}
\else
  \newcommand\tablename{Table}
\fi
}
\@ifpackageloaded{float}{}{\usepackage{float}}
\floatstyle{ruled}
\@ifundefined{c@chapter}{\newfloat{codelisting}{h}{lop}}{\newfloat{codelisting}{h}{lop}[chapter]}
\floatname{codelisting}{Listing}

\makeatother
\makeatletter
\makeatother
\makeatletter
\@ifpackageloaded{caption}{}{\usepackage{caption}}
\@ifpackageloaded{subcaption}{}{\usepackage{subcaption}}
\makeatother

\ifLuaTeX
  \usepackage{selnolig}  
\fi
\usepackage[]{natbib}
\usepackage{bookmark}

\IfFileExists{xurl.sty}{\usepackage{xurl}}{} 
\urlstyle{same} 
\hypersetup{
  pdftitle={Title},
  pdfauthor={Author 1; Author 2},
  pdfkeywords={3 to 6 keywords, that do not appear in the title},
  colorlinks=true,
  linkcolor={blue},
  filecolor={Maroon},
  citecolor={Blue},
  urlcolor={Blue},
  pdfcreator={LaTeX via pandoc}}

\usepackage{authblk}
\title{\bf  {Bayesian Variable Selection with the Quasi-Posterior}}
\author[1]{Beniamino Hadj-Amar}
\author[2]{Jack Jewson}
\affil[1]{Department of Epidemiology and Biostatistics University of South Carolina, Columbia, SC}
\affil[2]{Department of Econometrics and Business Statistics \\ Monash University, Victoria, Australia}
\affil[ ]{\textit {\textcolor{blue}{hadjamar@mailbox.sc.edu, jack.jewson@monash.edu}}}
\date{March 2026}

\begin{document}




\def\spacingset#1{\renewcommand{\baselinestretch}%
{#1}\small\normalsize} \spacingset{1}

\setcounter{Maxaffil}{0}
\renewcommand\Affilfont{\itshape\small}

\spacingset{1.42} 

\maketitle
\begin{abstract}
The Bayesian approach provides powerful methods for variable selection. The ability to incorporate sparsity through prior beliefs and account for parameter uncertainty allows Bayesian variable selection to consistently identify which of the variables are active and exhibit strong finite-sample performance. However, Bayesian methods require the correct specification of full likelihoods for the data, and there is increasing awareness of the problems that model misspecification causes for variable selection. Current approaches to mitigate misspecification either require complex models, detracting from the interpretability of the variable selection task, or move outside rigorous Bayesian uncertainty quantification and provide no recognised method for variable selection. This paper establishes the model quasi-posterior as a principled tool for variable selection. We prove that the model quasi-posterior shares desirable properties of Bayesian variable selection without requiring full likelihood specification. Instead, the quasi-posterior combines a prior with a quasi-likelihood and requires only specification of mean and variance functions, and is therefore robust to other aspects of the data. Marginalising the quasi-likelihood is analytically possible for linear regression, and Laplace approximations are used beyond this to ensure computational tractability. Extensive simulation studies illustrate improved variable selection accuracy across diverse data-generating scenarios when compared with likelihood-based Bayesian variable selection and lasso-penalized methods. We further demonstrate practical relevance through applications to real datasets from social science and genomics.
\end{abstract}

\noindent\textbf{Keywords:} Variable Selection; Generalized-Bayes; Gibbs Posteriors; Quasi-Posteriors;

\spacingset{1.45}

\section{Introduction}
\label{sec:intro}


Consider wanting to identify the subset of possible predictors $x\in \mathcal{X}\subseteq \mathbb{R}^{p}$ that are relevant for predicting a response $y\in \mathcal{Y}\subseteq \mathbb{R}$. Such a task is generally  referred to as variable selection, a specific type of model selection. 
The Bayesian paradigm provides desirable tools for variable selection.
The prior encodes beliefs about the number of active predictors and the posterior quantifies the confidence that any given predictor is truly active. 
As a result, Bayesian variable selection naturally penalizes overly complex models, favoring simpler models that are well supported by the data. This enables consistent identification of the true model among a set of nested candidates \citep[e.g.][]{dawid2011posterior}, and has been shown to yield strong finite-sample performance in a variety of settings \citep[e.g.][]{rossell2018tractable}.


This article considers the class of generalised linear models (GLMs) \citep{mccullagh1989generalized} that use a linear predictor to define
\begin{equation}
    {E}\left[Y\mid X = x\right] = \mu(x^T\beta),\label{Equ:GLM}
\end{equation}
where $\beta\in\mathbb{R}^p$ are model parameters capturing the extent to which each predictor affects the response $y$ and $\mu^{-1}(\cdot):\mathbb{R}\mapsto \mathcal{Y}$ is known as the GLM link function.  
In order to conduct variable selection, we introduce $\gamma = (\gamma_1, \ldots, \gamma_p)\in \{0, 1\}^p$ with $\gamma_j = \mathbb{I}(\beta_j \neq 0)$. Here, $\gamma_j = 0$ means that variable $x_j$ is not related to response $y$, i.e. is inactive. We refer to each unique configuration of predictors $\gamma$, as a model. 

A considerable drawback of the Bayesian paradigm, however,  is the need to specify a full likelihood for $y\mid X$ in order to define the model posterior $\pi(\gamma\mid y, X)$. This requires \eqref{Equ:GLM} to be furnished with a distribution $F(\cdot\mid X)$, with density or mass function $f$, that describes in much greater detail how $y$ depends on $X$ than what is specified by \eqref{Equ:GLM}. 
Recent works have demonstrated that misspecifying the model distribution \( F \) relative to the true data-generating distribution \( F_0 \) can negatively affect the performance of Bayesian variable selection. \citet{rossell2018tractable, rossell2023additive} 
observed a decrease in finite-sample power to detect truly active covariates when the data have heavier tails than those assumed by $F$.
Conversely, \cite{grunwald2017inconsistency, heide2020safe} demonstrated an increased rate of false positives relative to the predictive risk minimising model when the data contain inliers. 
\cite{fontana2023scalable} show that Bayesian variable selection tends to select unnecessarily complex models when \eqref{Equ:GLM} is also misspecified. Similar phenomena have also  been observed in Bayesian mixture models \citep{miller2018robust, cai2021finite} and autoregressive time series models \citep{miller2018robust}. 

In this paper, we consider the quasi-posterior \citep{lin2006quasi, agnoletto2025bayesian}, a generalized Bayesian posterior \citep{bissiri2016general} based on the quasi-likelihood \citep{mccullagh1983quasi, wedderburn1974quasi}, for use in variable selection. 
We demonstrate that the resulting model quasi-posterior
retains 
desirable properties of Bayesian variable selection procedures without the need to correctly specify a full likelihood model $F$.
Instead, the quasi-likelihood enables calibrated inference on the regression parameters $\beta$ and corresponding model $\gamma$ that only requires specification of the mean and variance of the conditional distribution $y\mid x$. 
Furthermore, unlike other generalised Bayesian methods, there is no need to tune a `learning-rate' parameter to achieve this. 


Existing approaches to address model misspecification in Bayesian inference can be broadly categorized into two approaches: (i) modifying the model, or (ii) modifying the updating rule. Methods falling under  (i) include those that specify more complex models with greater flexibility to adapt to the data, e.g. Bayesian non-parametric regression models \citep[e.g.][]{de2004anova, muller2004nonparametric}, or models specifically designed to be robust to particular types of misspecifications, such as outliers \citep[e.g.][]{berger1994overview, rossell2018tractable}. However, more complex models require larger sample sizes and greater computational resources to fit, and are generally less interpretable than simpler structures such as \eqref{Equ:GLM}. Further, even models designed to be robust  are still inevitably misspecified to some degree. In contrast, category (ii) takes a different approach, modifying not the model itself, but the rule by which information from the data is incorporated into the prior. This includes methods that raise the likelihood to a power when conducting posterior inferences \citep{walker2001bayesian,zhang2006information, grunwald2017inconsistency, miller2018robust, lyddon2019general, syring2019calibrating, wu2023comparison} or update the prior using the exponential of a loss function rather than a likelihood \citep{bissiri2016general, syring2017gibbs, jewson2018principles, cherief2020mmd, knoblauch2022generalized}. However, departing from Bayes' rule as the basis for posterior updating removes the principled probabilistic foundation of these methods. As a result, the validity of the uncertainty quantification resulting from these methods often relies on carefully tuned `learning rate' hyperparameters which has so far prevented their applicability to variable and model selection problems.

 To overcome these limitations, we establish the model quasi-posterior as a principled tool for Bayesian variable selection.
We show that the model quasi-posterior converges asymptotically to the same distribution as the Bayesian posterior for $\gamma$ when the likelihood model is correctly specified.
To the best of our knowledge this represents the first instance of generalised Bayesian variable selection with  asymptotically valid uncertainty quantification.
%
From a computational perspective, we show that the marginal quasi-likelihood admits a closed-form expression in 
linear regression models, and use Laplace approximations in other cases. We further identify a broad class of models for which the log quasi-likelihood is concave, guaranteeing the asymptotic accuracy of the Laplace approximation and enabling stable and scalable optimisation. Building on these results, we use an efficient random-scan Gibbs sampler for exploring the model quasi-posterior, combined with a caching strategy that substantially reduces repeated optimisation and Hessian evaluations. 
Extensive simulation studies illustrate that the model quasi-posterior achieves superior finite-sample variable selection performance under a range of model misspecification scenarios, including heavy-tailed regression, inliers, and overdispersed count data.
We further illustrate its practical relevance through applications to real datasets from social science and genomics.

The remainder of the paper is organised as follows. Section~\ref{Sec:QP} introduces the quasi-likelihood and quasi-posterior framework for generalized Bayesian inference. Section~\ref{Sec:Inference} formalizes quasi-posterior model selection, defining the model quasi-posterior and the associated marginal quasi-likelihood.
Section~\ref{Sec:Theory} establishes the asymptotic properties of the proposed quasi-posterior methodology. Section~\ref{sec:posterior_inference} describes the posterior inference strategy for exploring the model space and performing variable selection. Section~\ref{sec:simul_studies} reports simulation studies illustrating the finite-sample advantages of the quasi-posterior under various forms of model misspecification, and Section~\ref{Sec:real_data} presents real-data applications. Section ~\ref{Sec:conclusion} concludes the paper with a discussion. All methods are implemented in the \textit{R} package \texttt{modelSelectionQP}
which will be made publicly available on GitHub upon acceptance.

\section{Generalized Bayesian inference using quasi-likelihoods} \label{Sec:QP}


In this section, we review the quasi-posterior framework, 
which requires only the specification of first and second moments of the data rather than a fully specified likelihood. 
%
%
Let \( y = (y_1, \ldots, y_n) \in \mathbb{R}^n \) denote the vector of responses, where each \( y_i \in \mathbb{R} \), and let \( n \) denote the number of observed data points. Let us also define \( x_i \in \mathbb{R}^p \) to be the associated \( p \)-dimensional covariate vector  for each \( i = 1, \ldots, n \), and $X\in\mathbb{R}^{n\times p}$ a matrix whose $i$th row is $x_i$.  We assume that the data are generated from an unknown distribution \( F_0 \)
such that
\(
y_i \mid x_i \sim F_0,
\) 
with conditional mean and variance given by
\[
{E}_{F_0}[y_i \mid x_i] = \mu(x_i^\top \beta^{\ast}), \qquad var_{F_0}[y_i \mid x_i] = \psi^{\ast} V\left( \mu(x_i^\top \beta^{\ast}) \right),
\]
respectively, where \( \mu: \mathbb{R} \to \mathcal{Y} \) is the inverse link function, \( \beta^{\ast} \in \mathbb{R}^p \) is the true regression parameter, \( V: \mathcal{Y} \to \mathbb{R}_+ \) is a variance function, and \( \psi^{\ast} > 0 \) is an unknown dispersion parameter. 
Common examples of \( \mu \) and \( V \) are illustrated in Examples~\ref{Ex:LinearRegression} and~\ref{Ex:Count} below; see also \citet{agnoletto2025bayesian}. 

Traditionally, conducting frequentist or Bayesian inference for $\beta^{\ast}$ from observations $y$ and $x$ requires the specification of a likelihood $f(y_i\mid x_i, \beta)$ that describes the distribution of response $y_i$ given predictors $x_i$ for parameter values $\beta$. 
%
%
Specifically, \( f \) must not only capture the mean and variance structure of the true data-generating process \( F_0 \), but also match the infinitely many other probability statements it implies \citep{goldstein1990influence}. This is rarely achievable in practice—meaning \( f \) is almost certainly misspecified. 

\subsection{Quasi-likelihood and quasi-posterior}


Rather than specifying a full model for \( F_0 \), \citet{wedderburn1974quasi, mccullagh1983quasi} showed that frequentist inference on \( \beta^{\ast} \) can be conducted by using
$\hat{\beta} := \arg\max_{\beta \in \mathbb{R}^p} Q_n(y, X; \beta, \psi)$
where
\begin{equation}
Q_n(y, X; \beta, \psi) = \frac{1}{n} \sum_{i=1}^n \ell_{\psi}(y_i; x_i, \beta),\quad \quad\ell_{\psi}(y_i; x_i, \beta) := \int_a^{\mu(x_i^\top \beta)} \frac{y_i - t}{\psi V(t)} \, dt \label{Equ:QP_loss}
\end{equation}
is the \textit{log quasi-likelihood} that only requires the specification of the mean $\mu(\cdot)$ and variance $V(\cdot)$ functions of $F_0$. Here, \( a \in \mathcal{Y} \) is an arbitrary constant that does not affect \( \hat{\beta} \).  While estimating $\psi^{\ast}$ is not required for obtaining $\hat{\beta}$, it is required to calculate the sampling distribution of $\hat{\beta}$ and therefore conduct inference. A consistent estimator of \( \psi^{\ast} \) is \citep{mccullagh1989generalized}
\begin{equation}
\hat{\psi} = \frac{1}{n - p} \sum_{i=1}^n \frac{(y_i - \mu(x_i^\top \hat{\beta}))^2}{V(\mu(x_i^\top \hat{\beta}))}, \label{Equ:psi_est}
\end{equation}
\cite{van2012quasi, liu2018penalized} further combined the quasi-likelihood with lasso penalties to estimate sparse $\hat{\beta}$ and conduct variable selection.

The \textit{quasi-posterior} \citep{lin2006quasi, agnoletto2025bayesian} combines the exponential of the log quasi-likelihood~\eqref{Equ:QP_loss} with a prior distribution \( \pi(\beta) \) over regression coefficients \( \beta \), yielding
\begin{align}
    \tilde{\pi}(\beta \mid y, X) :&= \frac{\pi(\beta) \exp\left\{ n Q_n(y, X; \beta, \hat{\psi}) \right\}}{\int_{\mathbb{R}^p}\pi(\beta) \exp\left\{ n Q_n(y, X; \beta, \hat{\psi}) \right\} d\beta}. \label{Equ:Quasi-posterior}
\end{align}
\citet{greco2008robust} and \citet{ventura2016pseudo} discussed a general class of robust pseudo-posteriors, 
of which the quasi-posterior is a special case. \citet{ventura2010default} further developed 
objective Bayesian priors suitable for quasi-posterior inference.  \citet{agnoletto2025bayesian} provided a rigorous formulation of the quasi-posterior as a special case of the generalised Bayesian update framework proposed by \citet{bissiri2016general}. They show that the quasi-posterior is asymptotically Gaussian and that it is \textit{well calibrated}—in the sense that it achieves correct asymptotic frequentist coverage for parameters $\beta$—when the dispersion parameter \( \psi^{\ast} \) is estimated using~\eqref{Equ:psi_est}. 
Bayes linear methods \citep{goldstein1999bayes, astfalck2024generalised} provide an alternative approach that only requires specifying means and variances. However, the quasi-posterior provides a more familiar output that naturally extends to model and variable selection.

\subsection{Motivating examples}

We now present two examples that  motivate using the quasi-posterior for variable selection.

\begin{example}[Linear regression]
In Bayesian linear regression, the practitioner not only specifies the conditional mean \( {E}_F[y \mid x] = x^\top\beta \) (i.e., \( \mu(s) = s \)) and decide whether the data is homoskedastic (e.g., \( var_F[y \mid x] = \psi^{\ast} \), so \( V(s) = 1 \)), but they must also choose a completely known distribution for the residuals \( y - x^\top\beta \). This choice could be Gaussian, double-exponential, Student-\( t \) (with varying degrees of freedom), skew-normal, or another alternative. Prior work has shown that posterior inference over model indicators \( \gamma \) is sensitive to this specification \citep{rossell2018tractable, grunwald2017inconsistency}. Alternatively, one can conduct quasi-posterior inference for \( \beta \) using the log quasi-likelihood
\begin{align}
\ell_{\psi}(y_i; x_i, \beta) = \frac{y_i x_i^\top\beta - \frac{1}{2}(x_i^\top\beta)^2}{\psi}, 
\end{align}
(see Section \ref{Sec:Ex1_derv} for the derivation) which allows for inference on $\beta$ that is invariant to the assumed distribution of the residuals. 
\label{Ex:LinearRegression}
\end{example}


\begin{example}[Count data]
In modeling count data, practitioners typically assume that responses \( y \in \mathbb{Z}_+ \) follow either a Poisson or Negative Binomial distribution. The Poisson model assumes a rate parameter \( \lambda_i = \exp(x_i^\top\beta) \), which imposes the restrictive condition that ${E}_F[y \mid x] = var_F[y \mid x] = \exp(x_i^\top\beta)$.
When the data exhibit overdispersion (i.e., the variance exceeds the mean), a common alternative is the Negative Binomial model, which allows ${E}_F[y \mid x] = \exp(x_i^\top\beta)$, and $var_F[y \mid x] = \exp(x_i^\top\beta) + \frac{\exp(x_i^\top\beta)^2}{\theta},$
where \( \theta > 0 \) is an overdispersion parameter. However, \citet{holmes2017assigning} and \citet{agnoletto2025bayesian} show that Bayesian inference under a Poisson model can lead to posterior overconcentration when applied to overdispersed data, while the Negative Binomial model yields inconsistent mean estimates under misspecification \citep[e.g.][Sec.~18.3.1]{wooldridge2010econometric}. 
Alternatively, using \( \mu(s) = \exp(s) \) and \( V(s) = s \), the log quasi-likelihood becomes
\begin{align}
    \ell_{\psi}(y_i; x_i, \beta) = \frac{y_i x_i^\top\beta - \exp(x_i^\top \beta)}{\psi},
\end{align}
(see Section \ref{Sec:Ex2_derv} for the derivation) which enables robust inference for both over- and underdispersed count data without the structural assumptions imposed by a full likelihood. 
\label{Ex:Count}
\end{example}


While the quasi-posterior is, in principle, more general, both Examples~\ref{Ex:LinearRegression} and~\ref{Ex:Count} reduce to familiar forms known as tempered~\citep{walker2001bayesian, zhang2006information, grunwald2012safe, holmes2017assigning, syring2019calibrating, lyddon2019general} or coarsened~\citep{miller2018robust} posteriors, where the Gaussian and Poisson likelihoods, respectively, are raised to a power—often referred to as the \textit{learning rate}—in this case, \( \frac{1}{\psi} \). 
Calibrating the learning rate has recently been established as a tool to improve Bayesian inference under model misspecification with \cite{grunwald2012safe} proposing methods to set this based on minimise predictive risk, \cite{lyddon2019general} setting this to match the information in a Bayesian boostrap and \cite{syring2019calibrating} using it to calibrate the coverage of credibilty intervals. See \cite{wu2023comparison} for a details comparison. In particular, \cite{grunwald2017inconsistency} shows that the method of \cite{grunwald2012safe} helps improve the predictive risk of Bayesian ridge regression. However, the focus there is on shrinkage and prediction rather than model or variable selection.
A key distinction of the quasi-posterior 
is that it has a natural estimate of the learning rate via~\eqref{Equ:psi_est}, whose implications for posterior uncertainty quantification are well understood \citep{agnoletto2025bayesian}. 




\section{Quasi-posterior model selection}{\label{Sec:Inference}}


In this section, we formalize the quasi-posterior framework for model selection. 
Throughout, we use the following conventions: 
for model $\gamma\in\{0, 1\}^p$  define \( |\gamma|_0 = \sum_{j=1}^p \gamma_j \) as the number of active (included) predictors. Let \( \beta_{\gamma} \in \mathbb{R}^{|\gamma|_0} \) and \( X_{\gamma} \in \mathbb{R}^{n \times |\gamma|_0} \) denote the subvector and submatrix corresponding to the active coefficients and covariates, respectively. We say that model \( \gamma^{(1)} \) is nested in model \( \gamma^{(2)} \), written \( \gamma^{(1)} \subseteq \gamma^{(2)} \), if \( |\gamma^{(1)}|_0 \leq |\gamma^{(2)}|_0 \) and \( \gamma^{(1)}_j = 1 \Rightarrow \gamma^{(2)}_j = 1 \) for all \( j = 1, \ldots, p \). Finally, for variable \( j \), we denote by \( \gamma_{-j} = \gamma \setminus \{ \gamma_j \} \in \{0,1\}^{p-1} \) the variable inclusion indicators without $\gamma_j$.

\subsection{Prior specification}

We adopt a spike-and-slab prior \citep{mitchell1988bayesian, george1993variable, smith1996nonparametric}, see \cite{tadesse2021handbook} for a review, of the form
\begin{equation}
    \beta_j \mid \gamma_j \sim (1 - \gamma_j) \, \delta_0(d\beta_j) + \gamma_j \, \mathcal{N}(\beta_j; 0, s^2), \qquad \gamma_j \sim \text{Bernoulli}(w),
    \label{Equ:spike_and_slab}
\end{equation}
where \( \delta_0 \) denotes a point mass at zero (the ``spike''), and \( \mathcal{N}(0, s^2) \) is the ``slab'' — a Normal distribution with mean zero and variance \( s^2 \), allowing for nonzero coefficients. This leads to the joint prior
$\pi(\beta, \gamma) = \prod_{j=1}^p \pi(\beta_j \mid \gamma_j)\pi(\gamma_j)$
which encourages sparsity by shrinking many coefficients exactly to zero, while allowing the remaining ones to vary according to the slab prior. As part of our prior specification, we assign a Beta hyperprior \( w \sim \text{Beta}(a, b) \) to the sparsity level, which induces  a Beta-Binomial prior over model configurations \( \gamma \) \citep{scott2010bayes}.
Although we focus on spike–and–slab priors with independent Gaussian slabs, our framework is flexible and can incorporate alternative slab specifications, such as 
$g$-priors \citep{zellner1986assessing}, without requiring any modification to the inference procedure.

\subsection{The model quasi-posterior} \label{sec:model_quasi_post}

We now present the \textit{model quasi-posterior}, which serves as a generalized posterior distribution over model configurations 
\( \gamma \). Specifically, we define
\begin{equation}
\tilde{\pi}(\gamma\mid y, X) \propto \tilde{f}(y\mid X, \gamma)\pi(\gamma),
\label{Equ:model quasi-posterior}
\end{equation}
where the marginal quasi-likelihood is 
\begin{equation}
\tilde{f}(y\mid X, \gamma) := \int_{\mathbb{R}^{|\gamma|_0}} \pi(\beta_{\gamma}) \exp\left\{nQ_{n}(y, X_{\gamma}; \beta_{\gamma}, \hat{\psi})\right\} d\beta_{\gamma},
\label{Equ:Quasi-marginal-likelihood}
\end{equation}
and \( \pi(\gamma) \) denotes the Beta-Binomial prior introduced above. Here, \( \hat{\psi} \) is fixed across models and denotes the estimate obtained from \eqref{Equ:psi_est}, computed under the full model that includes all predictors.  We further define $\tilde{B}_{\gamma, \gamma^{\prime}} = {\tilde{f}(y\mid X, \gamma)}/{\tilde{f}(y\mid X, \gamma^{\prime})}$ as the \textit{quasi-Bayes-factor} for comparing model $\gamma$ with $\gamma^{\prime}$.

The model quasi-posterior \eqref{Equ:model quasi-posterior}, constructed through the marginal quasi-likelihood \eqref{Equ:Quasi-marginal-likelihood}, was {considered} in \cite{lin2006quasi}. However, the methodology was justified via a complexity penalization argument that only applied to Example \ref{Ex:LinearRegression}. {No rigorous justification for its uncertainty quantification was provided, and posterior sampling details or empirical evaluation were not included to support its practical utility} 

A key innovation 
of \eqref{Equ:model quasi-posterior} is the construction of a generalized posterior over models $\gamma$ using the marginal quasi-likelihood in \eqref{Equ:Quasi-marginal-likelihood}.  In the standard Bayesian framework, the marginal likelihood plays a central role in model and variable selection, and its use is justified by the probabilistic rules of marginalization and conditioning. In contrast, the quasi-posterior replaces the likelihood with the contributions \( \exp\left\{ \ell_{\hat{\psi}}(y_i; x_{i}, \beta) \right\} \) from \eqref{Equ:QP_loss}, which do not form a proper probability model for \( y_i \mid x_i \). As such, it is not immediately clear whether marginalization over \( \beta_{\gamma} \) remains valid in this generalized setting. To address this concern, we later show in Section~\ref{Sec:Theory} that \( \tilde{\pi}(\gamma \mid y, X) \) has the same asymptotic distribution as  \( \pi(\gamma \mid y, X) \) when the model \( f \) is correctly specified. This result establishes the theoretical validity of inference using the model quasi-posterior $\tilde{\pi}(\gamma\mid y, X)$ and constitutes one of the main contributions of this work.

\subsection{Marginal quasi-likelihood: closed-form and Laplace approximation} \label{sec:quasi_marginal}
The marginal quasi-likelihood \( \tilde{f}(y\mid X, \gamma) \) plays a key role in defining the model quasi-posterior \eqref{Equ:model quasi-posterior}, and its tractability is crucial for practical implementation. Although it generally involves an intractable integral, Proposition~\ref{Prop:LinearRegression}  below confirms that a closed-form expression is available in the case of Example \ref{Ex:LinearRegression} (see Supplementary Material, Section \ref{sec:supp_theory}, for the proof). Notably, the marginal quasi-likelihood in this case coincides with the standard marginal likelihood in a Gaussian linear model with residual variance fixed at $\hat{\psi}$.

\begin{proposition}[Closed form marginal quasi-likelihood in linear regression]
\label{Prop:LinearRegression}
    Under the specification of Example \ref{Ex:LinearRegression} and prior \eqref{Equ:spike_and_slab} the marginal quasi-likelihood is
    \begin{align}
        \tilde{f}(y\mid X, \gamma) &= \frac{\hat{\psi}^{|\gamma|_0/2}}{s^{|\gamma|_0}|U_{\gamma}|^{1/2}}\exp\left\{\frac{1}{2\hat{\psi}}m_{\gamma}^\top U_{\gamma}m_{\gamma}\right\}.
        \label{Equ:quasi-marginal-likelihood-Linearregression}
    \end{align}
    where $U_{\gamma} = \left(X_{\gamma}^\top X_{\gamma} + \frac{\hat{\psi}}{s^2}  I_{|\gamma|_0}\right)$ and $m_{\gamma} = y^\top X_{\gamma} U_{\gamma}^{-1}$.
\end{proposition}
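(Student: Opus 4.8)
The plan is to observe that, under Example~\ref{Ex:LinearRegression}, the integrand in \eqref{Equ:Quasi-marginal-likelihood} is the exponential of a quadratic form in $\beta_\gamma$, so $\tilde f(y\mid X,\gamma)$ reduces to a standard Gaussian integral. First I would substitute the Example~\ref{Ex:LinearRegression} quasi-log-likelihood $\ell_\psi(y_i;x_i,\beta)=\{y_i x_i^\top\beta-\tfrac12(x_i^\top\beta)^2\}/\psi$ (taking the arbitrary constant $a=0$, which removes the $\beta$-free term), giving $nQ_n(y,X_\gamma;\beta_\gamma,\hat\psi)=\hat\psi^{-1}\{y^\top X_\gamma\beta_\gamma-\tfrac12\beta_\gamma^\top X_\gamma^\top X_\gamma\beta_\gamma\}$. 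Since $\gamma$ is held fixed in \eqref{Equ:Quasi-marginal-likelihood}, only the slab component of \eqref{Equ:spike_and_slab} enters, so $\pi(\beta_\gamma)=(2\pi s^2)^{-|\gamma|_0/2}\exp\{-\tfrac1{2s^2}\beta_\gamma^\top\beta_\gamma\}$.

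Next I would collect the two exponents into $-\tfrac12\beta_\gamma^\top A_\gamma\beta_\gamma+b_\gamma^\top\beta_\gamma$, where $A_\gamma:=\hat\psi^{-1}X_\gamma^\top X_\gamma+s^{-2}I_{|\gamma|_0}=\hat\psi^{-1}U_\gamma$ and $b_\gamma:=\hat\psi^{-1}X_\gamma^\top y$; note $A_\gamma$ is positive definite because $s^2>0$, so the integral converges. Completing the square yields $-\tfrac12(\beta_\gamma-A_\gamma^{-1}b_\gamma)^\top A_\gamma(\beta_\gamma-A_\gamma^{-1}b_\gamma)+\tfrac12 b_\gamma^\top A_\gamma^{-1}b_\gamma$, and the Gaussian integral $\int_{\mathbb R^{|\gamma|_0}}\exp\{-\tfrac12(\cdot)^\top A_\gamma(\cdot)\}\,d\beta_\gamma=(2\pi)^{|\gamma|_0/2}|A_\gamma|^{-1/2}$ cancels the $(2\pi)^{|\gamma|_0/2}$ from the prior normaliser, leaving $\tilde f(y\mid X,\gamma)=s^{-|\gamma|_0}|A_\gamma|^{-1/2}\exp\{\tfrac12 b_\gamma^\top A_\gamma^{-1}b_\gamma\}$.

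Finally I would translate $A_\gamma$ back into $U_\gamma$: $|A_\gamma|=\hat\psi^{-|\gamma|_0}|U_\gamma|$ produces the prefactor $\hat\psi^{|\gamma|_0/2}/(s^{|\gamma|_0}|U_\gamma|^{1/2})$, while $A_\gamma^{-1}=\hat\psi\,U_\gamma^{-1}$ gives $\tfrac12 b_\gamma^\top A_\gamma^{-1}b_\gamma=\tfrac1{2\hat\psi}y^\top X_\gamma U_\gamma^{-1}X_\gamma^\top y=\tfrac1{2\hat\psi}m_\gamma^\top U_\gamma m_\gamma$, using the symmetry of $U_\gamma$ so that $m_\gamma^\top=(y^\top X_\gamma U_\gamma^{-1})^\top=U_\gamma^{-1}X_\gamma^\top y$. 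This is exactly \eqref{Equ:quasi-marginal-likelihood-Linearregression}. As a side check, the same formula arises from the marginal likelihood of the Gaussian linear model $y\mid X_\gamma,\beta_\gamma\sim\mathcal N(X_\gamma\beta_\gamma,\hat\psi I_n)$ with slab prior, up to the $y$-only factor $(2\pi\hat\psi)^{-n/2}\exp\{-\|y\|^2/(2\hat\psi)\}$ that is common to all $\gamma$ and hence irrelevant for the model quasi-posterior.

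There is no genuine analytic obstacle: the argument is a one-page Gaussian computation. The only points requiring care are the bookkeeping of the powers of $\hat\psi$, which enter both the determinant and the quadratic form, and the transpose conventions in the definition of $m_\gamma$, together with noting positive-definiteness of $U_\gamma$ so that the marginalisation over $\beta_\gamma$ is well defined.
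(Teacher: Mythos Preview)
Your proposal is correct and follows essentially the same route as the paper's own proof: substitute the linear-regression quasi-log-likelihood, combine with the Gaussian slab prior, complete the square, and evaluate the resulting Gaussian integral. The only cosmetic difference is that you introduce the auxiliary precision matrix $A_\gamma=\hat\psi^{-1}U_\gamma$ and translate back at the end, whereas the paper completes the square directly in terms of $U_\gamma$; the substance of the argument is identical.
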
 

Outside the linear setting, however, the marginal quasi-likelihood is typically unavailable in closed form. To address this, we follow \cite{rossell2018tractable, rossell2023additive} and approximate \( \tilde{f}(y\mid X, \gamma) \)  using its Laplace approximation
\begin{align}
    \tilde{f}^{\textsc{LA}}(y\mid X, \gamma) := \left(2\pi\right)^{|\gamma|_0/2}\frac{\pi(\tilde{\beta}_{\gamma})\exp\left\{nQ_n(y, X; \tilde{\beta}_{\gamma}, \hat{\psi})\right\}}{n^{|\gamma|_0/2}\left|H_n(y, X; \tilde{\beta}_{\gamma}, \hat{\psi}) - \frac{1}{n}\nabla^2_{\beta_{\gamma}}\log \pi(\tilde{\beta}_{\gamma})\right|^{1/2}},
    \label{eq:laplace_approx}
\end{align}
where $\tilde{\beta}_{\gamma} := \argmax_{\beta_{\gamma}\in\mathbb{R}^{|\gamma|_0}} nQ_n(y, X; \beta_{\gamma}, \hat{\psi}) + \log\pi(\beta_{\gamma})$ is the maximum a posteriori (MAP) estimate for parameter $\beta_{\gamma}$ of model $\gamma$, and $H_n(y, X; \beta_{\gamma}, \hat{\psi}) := -\nabla^2_{\beta_{\gamma}} nQ_n(y, X; \beta_{\gamma}, \hat{\psi})$ is the Hessian matrix of the log quasi-likelihood. 

\section{Theoretical guarantees}{\label{Sec:Theory}}


In this section, we establish the main theoretical results of our proposed methodology. 
Specifically, 
we show $\sqrt{n}$-consistency of the quasi–log-likelihood maximiser of any model $\gamma$, and use this to prove that the model quasi-posterior shares the same asymptotic distribution as the traditional Bayesian model posterior under correct specification. 
The proofs of all of our results are deferred to the Supplementary Material, Section~\ref{sec:supp_theory}.

The regularity conditions, discussed in Section \ref{sec:regularity}, are largely standard. A key requirement 
is the correct specification of the mean and variance functions; i.e. that 
${E}_{F_0}\left[y\mid x\right] = \mu(x^\top\beta^{\ast})$ and $var_{F_0}\left[y\mid x\right] = \psi^{\ast}V(\mu(x^\top\beta^{\ast}))$. 
{This condition substantially relaxes the requirement of full likelihood specification. A correctly specified full likelihood requires correct specification of the mean and variance functions, along with infinitely many additional features of the conditional distribution.} 
Further, the mean and variance functions correspond to interpretable summaries of how $x$ affects the distribution of $y$, which may have been well studied in prior work or can be reliably elicited from experts \citep[e.g.][]{o2004probability} and assessed visually.  For example, \cite{agnoletto2025bayesian} propose diagnostics for assessing 
$\mu(\cdot)$ and $V(\cdot)$,  
which we extend in  the Supplementary Material, Sections~\ref{sec:variance_function_diagnostics} and \ref{Sec:homo_diagnostics}, and apply in 
Section~\ref{Sec:real_data}.




\subsection{Parameter consistency}

A key step in establishing the asymptotic behaviour of the model quasi-posterior is to show that, for any model $\gamma$, the quasi log-likelihood maximiser $\hat{\beta}_{\gamma} := \argmax_{\beta_{\gamma}\in\mathbb{R}^{|\gamma|_0}} Q_n(y, X; \beta_{\gamma}, \psi)$ achieves $O_p(1/\sqrt{n})$ consistency for its population parameter $\beta^{\ast}_{\gamma} := \argmax_{\beta_{\gamma}\in\mathbb{R}^{|\gamma|_0}} Q(F_0; \beta_{\gamma}, \psi)$, where $Q(F_0; \beta_{\gamma}, \psi) = \lim_{n\rightarrow\infty}\frac{1}{n}\sum_{i=1}^nQ(F_0; x_{i\gamma}, \beta_{\gamma}, \psi)$, and $Q(F_0; x_{i\gamma}, \beta_{\gamma}, \psi):= {E}_{F_0}\left[\ell_{\psi}(y; x_{i\gamma}, \beta)\right]$.
\cite{mccullagh1983quasi} established  $O_p(1/\sqrt{n})$ consistency 
of the full vector $\hat{\beta}$, 
without considering the model–specific subvector $\beta_{\gamma}$ relevant for variable selection. This result extends trivially to any 
over-specified model $\gamma$ with $\gamma^{\ast} \subseteq \gamma$, where $\gamma^{\ast}$ denotes the model associated with generating parameter $\beta^{\ast}$. However, additional work is 
required for under-specified models with $\gamma^{\ast} \not\subseteq \gamma$ as these are by definition misspecified i.e. $\beta^{\ast}_{\gamma}\neq \beta^{\ast}_{\gamma^{\ast}}$. We apply the general conditions of Theorem 5 of \cite{miller2021asymptotic} to prove $o_p(1)$ consistency in Theorem \ref{Thm:ConsistencyMiller} and use Theorem 5.23 of \cite{van2000asymptotic} to extend this to $O_p(1/\sqrt{n})$ consistency in Theorem \ref{Thm:ConsistencyRR23_Osqrtn} below.

\begin{theorem}[Consistency]
    Assume A1.1 and A2.1-6 and fix $\gamma\in \{0, 1\}^p$. Then $||\hat{\beta}_{\gamma} -\beta_{\gamma}^{\ast}||_2 = O_p(1/\sqrt{n})$, where $||\cdot||_2$ is the $L_2$-norm.
    \label{Thm:ConsistencyRR23_Osqrtn}
\end{theorem}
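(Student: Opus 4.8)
The plan is to bootstrap from plain consistency to the $\sqrt n$-rate via the standard asymptotic-linearity argument for $M$-estimators. The first ingredient, $\hat\beta_\gamma \xrightarrow{p} \beta^\ast_\gamma$, is Theorem~\ref{Thm:ConsistencyMiller}: under A1.1 and A2.1--6 the population criterion $\beta_\gamma\mapsto Q(F_0;\beta_\gamma,\psi)$ has a well-separated maximum at $\beta^\ast_\gamma$, $Q_n$ converges to it uniformly on compacts, and the maximiser is eventually interior, so Theorem~5 of \citet{miller2021asymptotic} applies and I would simply cite it. Given this, the remaining task is to verify the hypotheses of Theorem~5.23 of \citet{van2000asymptotic} for the random map $\beta_\gamma\mapsto \ell_\psi(y;x_\gamma,\beta_\gamma)$ and read off the rate from its conclusion. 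Note that $\psi$ enters $\ell_\psi$ only as an overall multiplicative constant, so it plays no role in either $\hat\beta_\gamma$ or $\beta^\ast_\gamma$ and can be carried along passively.

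The concrete verifications are as follows. By the Leibniz rule the quasi-score is
\begin{equation}
u(\beta_\gamma; y, x_\gamma) := \nabla_{\beta_\gamma}\ell_\psi(y;x_\gamma,\beta_\gamma) = \frac{\mu'(x_\gamma^\top\beta_\gamma)}{\psi\,V\bigl(\mu(x_\gamma^\top\beta_\gamma)\bigr)}\bigl(y-\mu(x_\gamma^\top\beta_\gamma)\bigr)\,x_\gamma ,
\end{equation}
which is continuously differentiable in $\beta_\gamma$ because $\mu,\mu',V$ are smooth and $V$ is bounded away from $0$ on compacts (A2); the first-order condition at the population optimum gives $\tfrac1n\sum_{i=1}^n\mathbb{E}_{F_0}[u(\beta^\ast_\gamma;y_i,x_{i\gamma})]\to 0$, which supplies the centring for the linearisation. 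Differentiating once more and using dominated convergence (licensed by $\mathbb{E}_{F_0}[y^2\mid x]<\infty$, which is part of the variance assumption), $\beta_\gamma\mapsto \mathbb{E}_{F_0}[\ell_\psi]$ is twice differentiable at $\beta^\ast_\gamma$ and its negative Hessian $V_\gamma$ is finite; the full-rank design condition in A2 together with $\mu'\neq 0$ and $V$ bounded then forces $V_\gamma\succ 0$, giving the nonsingular-derivative hypothesis. Finally, a mean-value-theorem bound on $\ell_\psi$ over a neighbourhood of $\beta^\ast_\gamma$ produces an $L_2$ Lipschitz envelope $|\ell_\psi(y;x_\gamma,\beta_1)-\ell_\psi(y;x_\gamma,\beta_2)|\le \dot m(y,x_\gamma)\,\|\beta_1-\beta_2\|_2$ with $\sup_i\mathbb{E}_{F_0}[\dot m(y_i,x_{i\gamma})^2]<\infty$, again using the moment bounds on $y$; this is the last condition of Theorem~5.23.

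With consistency from the first step and these conditions in hand, Theorem~5.23 of \citet{van2000asymptotic} delivers the asymptotic linear representation
\begin{equation}
\sqrt n\,(\hat\beta_\gamma - \beta^\ast_\gamma) = V_\gamma^{-1}\,\frac{1}{\sqrt n}\sum_{i=1}^n u(\beta^\ast_\gamma; y_i, x_{i\gamma}) + o_p(1).
\end{equation}
The leading term is $O_p(1)$: $V_\gamma^{-1}$ is a fixed matrix and the average is a normalised sum of independent summands with uniformly bounded second moments and negligible average mean, so Chebyshev (or a Lindeberg triangular-array CLT) gives tightness. Therefore $\|\hat\beta_\gamma-\beta^\ast_\gamma\|_2 = O_p(1/\sqrt n)$, as claimed.

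The step I expect to be the main obstacle is the treatment of under-specified models $\gamma^\ast\not\subseteq\gamma$. These are genuinely misspecified, so $\beta^\ast_\gamma$ is only a pseudo-true parameter, the conditional residual $\mathbb{E}_{F_0}[y-\mu(x_\gamma^\top\beta^\ast_\gamma)\mid x]$ need not vanish, and the population Hessian at $\beta^\ast_\gamma$ picks up an extra curvature term that is absent in the well-specified case; ruling out degeneracy of $V_\gamma$ there, and carrying the uniform control of the Taylor remainder and the centring of the CLT through without identically distributed $y_i$, is where A2.1--6 are really needed and is the part I would write out most carefully.
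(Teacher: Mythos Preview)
Your proposal is correct and follows essentially the same route as the paper: first invoke Theorem~\ref{Thm:ConsistencyMiller} (via Theorem~5 of \citet{miller2021asymptotic}) for $o_p(1)$ consistency, then verify the hypotheses of Theorem~5.23 of \citet{van2000asymptotic} to upgrade to the $\sqrt{n}$-rate. The only cosmetic difference is that the paper obtains the Lipschitz envelope by citing Theorem~7 of \citet{miller2021asymptotic} (which is where A2.6 enters) rather than sketching it directly via the mean-value theorem, and your concern about degeneracy of $V_\gamma$ for under-specified models is handled in the paper simply by assuming A2.4 for every $\gamma$.
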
 
Proposition \ref{Prop:psi_hat_Osqrtn} further proves that the estimated dispersion parameter 
$\hat{\psi} = \psi^{\ast} + O_p(n^{-1/2})$.



\subsection{Model selection}

We now derive the main theoretical contribution of the paper. Theorem \ref{thm:variable_selection_quasi_posterior_RR23} establishes the asymptotic behaviour of the Laplace approximation to the quasi-Bayes factor. 

\begin{theorem}[Asymptotic behaviour of variable selection quasi-posterior]
\label{thm:variable_selection_quasi_posterior_RR23}
Assume A1 and A2 and fix $\gamma\in \{0, 1\}^p$. Then
\begin{enumerate}
    \item[(i)] If $\gamma^{\ast}\not\subset \gamma$, then
    \begin{equation}
        \log(\tilde{B}^{\textsc{LA}}_{\gamma\gamma^{\ast}}) := \log\frac{\tilde{f}^{\textsc{LA}}(y\mid X, \gamma)}{\tilde{f}^{\textsc{LA}}(y\mid X, \gamma^{\ast})} = n\left\{Q(F_0;\beta_{\gamma}, \psi^{\ast}) - Q(F_0;\beta_{\gamma^{\ast}}, \psi^{\ast})\right\} + O_p(\sqrt{n}), \nonumber
    \end{equation}
    with $\left\{Q(F_0;\beta_{\gamma}, \psi^{\ast}) - Q(F_0;\beta_{\gamma^{\ast}}, \psi^{\ast})\right\} < 0$.
    \item[(ii)] If $\gamma^{\ast}\subset \gamma$, then
    \begin{align}
        \log(\tilde{B}^{\textsc{LA}}_{\gamma\gamma^{\ast}}) := \log\frac{\tilde{f}^{\textsc{LA}}(y\mid X, \gamma)}{\tilde{f}^{\textsc{LA}}(y\mid X, \gamma^{\ast})}  =&  \, \,  C(\gamma, \gamma^{\ast}) + U_n + o_p(1), 
        \label{Equ:log_B_nested}
    \end{align}
    where $\,\,\, U_n\sim \frac{1}{2}\chi^2_{|\gamma|_0 - |\gamma^{\ast}|_0}$ and
    \begin{align*}
    C(\gamma, \gamma^{\ast}) := \frac{1}{2}(|\gamma|_0 - |\gamma^{\ast}|_0)\log\left(\frac{2\pi}{n}\right) + \log\left(\frac{\pi(\beta^{\ast}_{\gamma}\mid \gamma)}{\pi(\beta^{\ast}_{\gamma^{\ast}}\mid \gamma^{\ast})}\right)+ \frac{1}{2}\log\left(\frac{H(F_0; \beta^{\ast}_{\gamma^{\ast}}, \psi^{\ast}))}{H(F_0; \beta^{\ast}_{\gamma}, \psi^{\ast})}\right).
\end{align*}
 
\end{enumerate}
\end{theorem}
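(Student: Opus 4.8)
Since $\hat{\tilde B}_{\gamma\gamma^\ast}$ is by construction the ratio of the explicit Laplace expressions in \eqref{eq:laplace_approx}, the plan is to work with this closed form directly, so no bound on the Laplace approximation error against the true integral \eqref{Equ:Quasi-marginal-likelihood} is required. Writing $k=|\gamma|_0$, $k^\ast=|\gamma^\ast|_0$, $d=k-k^\ast$, and $\hat H_n^{(\gamma)}=H_n(y,X;\tilde\beta_\gamma,\hat\psi)-\tfrac1n\nabla^2_{\beta_\gamma}\log\pi(\tilde\beta_\gamma)$, taking logs in \eqref{eq:laplace_approx} gives
\begin{align*}
\log(\hat{\tilde B}_{\gamma\gamma^\ast}) = \tfrac{d}{2}\log\!\left(\tfrac{2\pi}{n}\right) + \log\frac{\pi(\tilde\beta_\gamma)}{\pi(\tilde\beta_{\gamma^\ast})} - \tfrac12\log\frac{|\hat H_n^{(\gamma)}|}{|\hat H_n^{(\gamma^\ast)}|} + D_n, \qquad D_n := n\big\{Q_n(y,X;\tilde\beta_\gamma,\hat\psi)-Q_n(y,X;\tilde\beta_{\gamma^\ast},\hat\psi)\big\}.
\end{align*}
By Theorem~\ref{Thm:ConsistencyRR23_Osqrtn} and Proposition~\ref{Prop:psi_hat_Osqrtn}, $\tilde\beta_\gamma\to\beta^\ast_\gamma$ and $\hat\psi\to\psi^\ast$ in probability; continuity of the Gaussian slab density then gives $\log\{\pi(\tilde\beta_\gamma)/\pi(\tilde\beta_{\gamma^\ast})\}=\log\{\pi(\beta^\ast_\gamma\mid\gamma)/\pi(\beta^\ast_{\gamma^\ast}\mid\gamma^\ast)\}+o_p(1)$, and a uniform law of large numbers for the smooth map $\beta\mapsto\nabla^2_\beta\ell_\psi$ (over a neighbourhood of $\beta^\ast_\gamma$, $\psi$ near $\psi^\ast$) gives $\hat H_n^{(\gamma)}\to H(F_0;\beta^\ast_\gamma,\psi^\ast)$, the population quasi-information of model $\gamma$, so $-\tfrac12\log(|\hat H_n^{(\gamma)}|/|\hat H_n^{(\gamma^\ast)}|)\to\tfrac12\log(\det H(F_0;\beta^\ast_{\gamma^\ast},\psi^\ast)/\det H(F_0;\beta^\ast_\gamma,\psi^\ast))$. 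In case (ii) these three terms are exactly $C(\gamma,\gamma^\ast)+o_p(1)$; in case (i) they are $O(\log n)+O_p(1)=o_p(\sqrt n)$ and get absorbed into the remainder. The whole proof therefore reduces to controlling $D_n$.

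\emph{Case (i), $\gamma^\ast\not\subset\gamma$.} Here $\gamma$ is misspecified and $\beta^\ast_\gamma\neq\beta^\ast$. Using $\ell_\psi=\psi^{-1}\ell_1$, consistency of $\tilde\beta_\gamma,\hat\psi$, a second-order Taylor expansion of $Q_n(\cdot,\hat\psi)$ around $\beta^\ast_\gamma$, the first-order condition $\mathbb{E}_{F_0}[\nabla_\beta\ell_{\psi^\ast}(y;x,\beta^\ast_\gamma)]=0$ (so $\nabla_\beta Q_n(\beta^\ast_\gamma,\hat\psi)=O_p(n^{-1/2})$), $\tilde\beta_\gamma-\beta^\ast_\gamma=O_p(n^{-1/2})$, and a CLT for $\sqrt n\{Q_n(\beta^\ast_\gamma,\psi^\ast)-Q(F_0;\beta^\ast_\gamma,\psi^\ast)\}$ together with $\hat\psi-\psi^\ast=O_p(n^{-1/2})$, I would obtain $n\{Q_n(\tilde\beta_\gamma,\hat\psi)-Q(F_0;\beta^\ast_\gamma,\psi^\ast)\}=O_p(\sqrt n)$, and analogously for $\gamma^\ast$. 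Subtracting, $D_n=n\{Q(F_0;\beta^\ast_\gamma,\psi^\ast)-Q(F_0;\beta^\ast_{\gamma^\ast},\psi^\ast)\}+O_p(\sqrt n)$. Strict negativity of the bracket is the one genuinely model-specific ingredient: it follows from the identifiability condition in A2 (uniqueness of the global maximiser $\beta^\ast$ of $\beta\mapsto Q(F_0;\beta,\psi^\ast)$ on $\mathbb{R}^p$), since $\gamma^\ast\not\subset\gamma$ forces $\beta^\ast$ outside the coordinate subspace of model $\gamma$, so the constrained maximum $Q(F_0;\beta^\ast_\gamma,\psi^\ast)$ is strictly below $Q(F_0;\beta^\ast,\psi^\ast)=Q(F_0;\beta^\ast_{\gamma^\ast},\psi^\ast)$.

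\emph{Case (ii), $\gamma^\ast\subset\gamma$.} Now $\tilde\beta_\gamma$ and $\tilde\beta_{\gamma^\ast}$ are consistent for zero-padded copies of the same $\beta^\ast$, an interior point of each model's unconstrained parameter space, so no boundary issues arise. Since the MAP and the quasi-log-likelihood maximiser differ by $O_p(1/n)$ (the prior adds $O_p(1)$ to $\nabla(nQ_n)$ while the inverse Hessian is $O_p(1/n)$), expanding $nQ_n(\cdot,\hat\psi)$ quadratically around $\beta^\ast$ in each model gives
\begin{align*}
D_n = \tfrac12 S_n^{(\gamma)\top}\big(\mathcal I_n^{(\gamma)}\big)^{-1}S_n^{(\gamma)} - \tfrac12 S_n^{(\gamma^\ast)\top}\big(\mathcal I_n^{(\gamma^\ast)}\big)^{-1}S_n^{(\gamma^\ast)} + o_p(1),
\end{align*}
where $S_n^{(\gamma)}=\nabla_{\beta_\gamma}(nQ_n)(\beta^\ast,\hat\psi)$, $S_n^{(\gamma^\ast)}$ is its subvector on the coordinates of $\gamma^\ast$, $\mathcal I_n^{(\gamma)}=-\nabla^2_{\beta_\gamma}(nQ_n)$, and the $Q_n(\beta^\ast,\cdot)$ constants cancel. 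The crucial step is the quasi-likelihood information identity: from $\nabla_\beta\ell_\psi=\frac{y-\mu(x^\top\beta)}{\psi V(\mu(x^\top\beta))}\mu'(x^\top\beta)\,x$ one computes $\mathrm{Var}_{F_0}[\nabla_\beta\ell_\psi]=\tfrac{\psi^\ast}{\psi}H(F_0;\beta,\psi)$, so at $\psi=\psi^\ast$ — and hence asymptotically at $\hat\psi$, using $\sqrt n\,\nabla Q_n(\beta^\ast,1)=O_p(1)$ and $\hat\psi\to\psi^\ast$ — the score variance and the scaled Hessian limit coincide, $n^{-1/2}S_n^{(\gamma)}\Rightarrow N(0,\mathcal I)$ with $\mathcal I=H(F_0;\beta^\ast_\gamma,\psi^\ast)=\lim\mathcal I_n^{(\gamma)}/n$. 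The partitioned-quadratic-form identity $S^\top\mathcal I^{-1}S-S_1^\top\mathcal I_{11}^{-1}S_1=(S_2-\mathcal I_{21}\mathcal I_{11}^{-1}S_1)^\top\mathcal I_{22\cdot1}^{-1}(S_2-\mathcal I_{21}\mathcal I_{11}^{-1}S_1)$ then yields $D_n\Rightarrow\tfrac12\chi^2_d$, i.e.\ $D_n=U_n+o_p(1)$ with $U_n\sim\tfrac12\chi^2_d$; combining with the low-order terms gives \eqref{Equ:log_B_nested}.

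The main obstacle is the rigorous bookkeeping behind these quadratic expansions — the stochastic-equicontinuity / empirical-process arguments needed to upgrade the pointwise Taylor expansions to the stated $o_p(1)$ and $O_p(\sqrt n)$ remainders — together with carefully propagating the estimated dispersion $\hat\psi$ so that the information identity is invoked exactly at $\psi^\ast$; it is this matching that turns $D_n$ into a clean $\tfrac12\chi^2_d$ rather than a weighted sum of independent $\tfrac12\chi^2_1$ terms, and is precisely what distinguishes the quasi-posterior from a generic Gibbs posterior with an unmatched learning rate. A secondary subtlety is checking that A2 supplies the global identifiability used for the strict negativity in (i).
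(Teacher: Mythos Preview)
Your proposal is correct and follows essentially the same decomposition as the paper: split $\log(\hat{\tilde B}_{\gamma\gamma^\ast})$ into the dimension constant, the prior ratio, the Hessian ratio, and the quasi-log-likelihood difference $D_n$ (the paper's $T_1$); handle the first three via consistency of $\tilde\beta_\gamma$ and $\hat\psi$ plus continuous mapping; and treat $D_n$ separately in the two cases, using Taylor/CLT arguments for (i) and the $\chi^2$ asymptotics for (ii).

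The one genuine difference is in case (ii). The paper does not re-derive the $\chi^2$ limit but simply invokes a restated result of \citet{mccullagh1983quasi} (their Equation~11, recorded in the paper as Theorem~\ref{thm:chi_squared}), which directly gives $2n\{Q_n(\hat\beta_\gamma,\psi^\ast)-Q_n(\hat\beta_{\gamma^\ast},\psi^\ast)\}\sim\chi^2_d+O_p(n^{-1/2})$ and states that plugging in $\hat\psi$ does not change this order. Your route---expanding the score quadratic forms, invoking the quasi-likelihood information identity $\mathrm{Var}_{F_0}[\nabla_\beta\ell_{\psi^\ast}]=H(F_0;\beta^\ast,\psi^\ast)$ at the true $\beta^\ast$, and applying the partitioned-quadratic-form identity---is exactly how McCullagh's result is proved, so you are rebuilding the cited lemma rather than citing it. This buys you a self-contained argument and makes explicit why the calibrated $\hat\psi$ is essential (your closing remark on matched learning rate), at the cost of more bookkeeping. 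On the technical side, the paper resolves the equicontinuity issues you flag by appealing to Theorem~7 of \citet{miller2021asymptotic} for the equi-Lipschitz property of $Q_n$ and $H_n$, which is what your ``uniform LLN for $\nabla^2_\beta\ell_\psi$'' and ``stochastic-equicontinuity'' steps amount to.
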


 Theorem \ref{thm:variable_selection_quasi_posterior_RR23} establishes the validity of the model quasi-posterior for variable selection by showing that Laplace approximated \textit{quasi-Bayes-factor}, $\tilde{B}^{\textsc{LA}}_{\gamma\gamma^{\ast}}$, exhibits the same asymptotic behaviour as the standard Bayes factor under correct model specification \citep{dawid2011posterior}. In particular, (i) shows that for any under-specified model $\gamma$ with $\gamma^{\ast} \not\subset \gamma$—that is, when $\gamma$ omits at least one truly active variable—the quasi-Bayes factor $\tilde{B}^{\textsc{LA}}_{\gamma\gamma^{\ast}}$ converges to zero at an exponential rate in $n$. Further, and perhaps more importantly, (ii) reveals that when a model $\gamma$ is over-specified ($\gamma^{\ast} \subset \gamma$), meaning it contains all active variables plus at least one inactive one, the quasi-Bayes factor $\tilde{B}^{\textsc{LA}}_{\gamma\gamma^{\ast}}$ reflects a trade-off between a complexity penalty $C(\gamma,\gamma^{\ast})$, which diverges to $-\infty$ as $n$ increases, and a strictly positive $\chi^2$  random variable that is $O_p(1)$. In this sense, the prior specification has the same asymptotic effect on the model quasi-posterior as it does on the standard model-selection posterior. Not only does this guarantee consistent model selection among nested models, but it also indicates that the model quasi-posterior inherits the desirable finite-sample performance of standard Bayesian variable selection for correctly specified models.

While replacing $\hat{\psi}$ with $\psi^{\ast}$ in the log quasi-likelihood \eqref{Equ:psi_est} is sufficient to obtain the asymptotic $\chi^2$ distribution in \eqref{Equ:log_B_nested} (see Theorem \ref{thm:chi_squared}), the accuracy of $\hat{\psi}$ will affect the finite sample behaviour of $\log(\tilde{B}^{\textsc{LA}}_{\gamma\gamma^{\ast}})$. Because $1/\hat{\psi}$ acts as a multiplicative factor on the difference in log quasi-likelihoods, underestimation of $\hat{\psi}$ leads to unnecessarily complex models receiving greater quasi-posterior mass while overestimation will increase the posterior mass of models that are simpler than $\gamma^{\ast}$. 
We discuss this further in Section \ref{sec:simul_studies}.

Lastly, we are not the first to use $\chi^2$ asymptotics to justify the validity of uncertainty quantification in generalised Bayesian frameworks. A salient example is \cite{ribatet2012bayesian}, where the composite likelihood is adjusted so that its difference is asymptotically $\chi^2$.





\subsection{Asymptotic accuracy of the Laplace approximation} \label{sec:accuracy_laplace}

Finally, we examine the implications of using Laplace approximations as the basis of our inferential strategy.
We rely on the Laplace-approximated model quasi-posterior because it is computationally efficient and, crucially, Theorem \ref{thm:variable_selection_quasi_posterior_RR23} characterises the behaviour of the criteria we actually use (see also \cite{rossell2018tractable, rossell2023additive} for related arguments). In any case, under the conditions of Theorem~\ref{thm:variable_selection_quasi_posterior_RR23}, Proposition~\ref{Prop:Lapalce_accuracy} formalises that the Laplace approximation is asymptotically equivalent to the marginal quasi-likelihood. 


\begin{proposition}[Accuracy of Laplace Approximations]
\label{Prop:Lapalce_accuracy}
Assume A1 and A2.1-5 and fix $\gamma\in \{0, 1\}^p$. Then, as $n\rightarrow\infty$
\begin{align*}
    \tilde{f}(y\mid X, \gamma, \hat{\psi}) \times\frac{\left|H(F_0, \beta^{\ast}_{\gamma}, \hat{\psi})\right|^{1/2}}{\pi(\beta_{\gamma}^{\ast}\mid \gamma)\exp\left\{nQ_n(y, X, \hat{\beta}_{\gamma}, \hat{\psi})\right\}}\left(\frac{2\pi}{n}\right)^{|\gamma|_0}\rightarrow 1.
\end{align*}
\end{proposition}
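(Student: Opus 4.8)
The argument is a Laplace-approximation computation carried out on an event of probability tending to one, on which $\hat\beta_\gamma$, $\hat\psi$ and the empirical Hessian are close to their population counterparts; the asserted convergence is consequently in probability. Write $d=|\gamma|_0$ and treat $\hat\psi$ as fixed at its realised value. Starting from \eqref{Equ:Quasi-marginal-likelihood}, the substitution $\beta_\gamma=\hat\beta_\gamma+u/\sqrt{n}$ gives
\[
\tilde f(y\mid X,\gamma,\hat\psi)=n^{-d/2}e^{nQ_n(y,X;\hat\beta_\gamma,\hat\psi)}\int_{\mathbb{R}^d}\pi\!\left(\hat\beta_\gamma+u/\sqrt{n}\mid\gamma\right)\exp\!\left\{nQ_n(y,X;\hat\beta_\gamma+u/\sqrt{n},\hat\psi)-nQ_n(y,X;\hat\beta_\gamma,\hat\psi)\right\}du.
\]
Since $\hat\beta_\gamma$ maximises $Q_n(y,X;\cdot,\hat\psi)$ and, by the $\sqrt{n}$-consistency of Theorem~\ref{Thm:ConsistencyRR23_Osqrtn} together with the interiority of $\beta^\ast_\gamma$ built into A2, lies in the interior of the parameter space with probability tending to one, the first-order condition $\nabla_{\beta_\gamma}Q_n(y,X;\hat\beta_\gamma,\hat\psi)=0$ holds. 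A second-order Taylor expansion (the linear term vanishing) then gives, for each fixed $u$,
\[
nQ_n(y,X;\hat\beta_\gamma+u/\sqrt{n},\hat\psi)-nQ_n(y,X;\hat\beta_\gamma,\hat\psi)=-\tfrac12 u^\top\bar H_n u+r_n(u),\qquad \bar H_n:=-\nabla^2_{\beta_\gamma}Q_n(y,X;\hat\beta_\gamma,\hat\psi),
\]
where the remainder, controlled by a third-derivative bound from A2, is $r_n(u)=O_p(|u|^3/\sqrt{n})\to 0$.

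I would then justify interchanging the limit and the integral by dominated convergence. Pointwise in $u$, the integrand tends to $\pi(\beta^\ast_\gamma\mid\gamma)\exp\{-\tfrac12 u^\top H(F_0;\beta^\ast_\gamma,\psi^\ast)u\}$, using consistency $\hat\beta_\gamma\to\beta^\ast_\gamma$, continuity and boundedness of the Gaussian slab density, $\hat\psi\to\psi^\ast$, and $\bar H_n\to H(F_0;\beta^\ast_\gamma,\psi^\ast)$; the last follows from a uniform law of large numbers for $\nabla^2_{\beta_\gamma}\ell_\psi$ (part of A2) combined with consistency. For the dominating function I would split $\mathbb{R}^d$ into $\{|u|\le M\sqrt{n}\}$, i.e. $\beta_\gamma$ in a fixed neighbourhood $N$ of $\beta^\ast_\gamma$, and its complement. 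On the ball, a uniform lower bound over $N$ on the smallest eigenvalue of $-\nabla^2_{\beta_\gamma}Q_n$ (holding with probability tending to one under A2) yields a quadratic majorant $Q_n(\beta_\gamma)-Q_n(\hat\beta_\gamma)\le-c|u|^2/n$ and hence an integrable Gaussian envelope. Off the ball, I would bound $\int_{\mathbb{R}^d\setminus N}\pi(\beta_\gamma\mid\gamma)e^{nQ_n(\beta_\gamma)}d\beta_\gamma$ against the peak value $e^{nQ_n(\hat\beta_\gamma)}$: the well-separated-maximum property $\sup_{\beta_\gamma\notin N}Q(F_0;\beta_\gamma,\psi^\ast)<Q(F_0;\beta^\ast_\gamma,\psi^\ast)$ --- available in particular under the concavity of $\ell_\psi$ discussed in Section~\ref{sec:accuracy_laplace} --- together with a uniform LLN for $Q_n$ and the integrability of the Gaussian slab tails, makes this contribution exponentially negligible relative to the peak.

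With the interchange justified, the Gaussian integral gives $\int(\cdots)\,du\to(2\pi)^{d/2}|H(F_0;\beta^\ast_\gamma,\psi^\ast)|^{-1/2}\pi(\beta^\ast_\gamma\mid\gamma)$, so $\tilde f(y\mid X,\gamma,\hat\psi)$ equals, up to a $1+o_p(1)$ factor, $(2\pi/n)^{d/2}|H(F_0;\beta^\ast_\gamma,\psi^\ast)|^{-1/2}\pi(\beta^\ast_\gamma\mid\gamma)\exp\{nQ_n(y,X;\hat\beta_\gamma,\hat\psi)\}$; since $\hat\psi\to\psi^\ast$ (Proposition~\ref{Prop:psi_hat_Osqrtn}) and $\psi\mapsto H(F_0;\beta^\ast_\gamma,\psi)$ is continuous, $H(F_0;\beta^\ast_\gamma,\psi^\ast)$ may be replaced by $H(F_0;\beta^\ast_\gamma,\hat\psi)$ at no asymptotic cost, and rearranging yields the stated convergence. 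The main obstacle is the tail step of the dominated-convergence argument --- showing that the mass of $\pi(\beta_\gamma\mid\gamma)e^{nQ_n(\beta_\gamma)}$ outside a fixed neighbourhood of $\beta^\ast_\gamma$ is exponentially small relative to its peak, with probability tending to one. This is where the concavity of the quasi-log-likelihood (or, more generally, the combination of a well-separated population maximum, a uniform LLN, and integrable slab tails) does the real work; the remaining steps are routine bookkeeping with Taylor's theorem and the consistency results already in hand.
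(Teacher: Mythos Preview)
Your argument is correct in substance but takes a genuinely different route from the paper. The paper's proof is a one-line appeal to Theorem~5 of \citet{miller2021asymptotic}: the proof of Theorem~\ref{Thm:ConsistencyMiller} already verified that A1 and A2.1--5 imply Miller's conditions for $f_n(\cdot)=-Q_n(y,X_\gamma;\cdot,\psi)$ with any fixed $\psi>0$, and Laplace accuracy is one of the conclusions of that theorem. You instead re-derive the content of Miller's result from scratch---rescaling $\beta_\gamma=\hat\beta_\gamma+u/\sqrt{n}$, Taylor-expanding to second order with a third-derivative remainder, and justifying dominated convergence via a local quadratic envelope plus a well-separated-maximum tail bound. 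Your route is longer but fully self-contained and makes explicit which of A2.1--5 does what (e.g.\ A2.2 for the remainder, A2.4 for the limiting Gaussian, A2.5 for the tail); the paper's route is terse but off-loads all the analysis to an external reference, which is exactly what that reference was designed for. One small wrinkle: you invoke Proposition~\ref{Prop:psi_hat_Osqrtn} to pass from $H(F_0;\beta^\ast_\gamma,\psi^\ast)$ to $H(F_0;\beta^\ast_\gamma,\hat\psi)$, but that proposition requires A2.7, which is not among the hypotheses here. This step is in fact unnecessary: since $\psi$ enters $\ell_\psi$, $Q_n$, $H_n$ and $H$ only as a common multiplicative factor $1/\psi$, the ratio $|H(F_0;\beta^\ast_\gamma,\hat\psi)|/|H_n(y,X;\hat\beta_\gamma,\hat\psi)|$ does not depend on $\hat\psi$ at all, and the Laplace expansion can be carried out directly at the realised $\hat\psi$ (this is why the paper simply writes ``with $\psi>0$'').
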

Both the theory and implementation of the quasi-posterior variable selection benefit from concavity of the log quasi-likelihood. Concavity ensures efficient computation of $\tilde{\beta}_{\gamma}$ in \eqref{eq:laplace_approx} and guarantees Condition A2.5. Proposition \ref{Prop:Concavity} in the Supplementary Material provides conditions under which the log quasi-likelihood is concave.



\section{Posterior inference} \label{sec:posterior_inference}

This section describes our posterior inference strategy, which consists of exploring the model space, selecting active variables, and sampling the corresponding regression coefficients. A closed-form expression for the model posterior \eqref{Equ:model quasi-posterior} would require computing a normalizing constant obtained by summing over all $2^p$ models, reflecting every possible inclusion–exclusion configuration of the $p$ predictors. Such a normalisation is infeasible for even moderate values of $p$.
Instead, we use Markov chain Monte Carlo (MCMC) methods to explore the model space by drawing samples from $\tilde{\pi}(\gamma\mid y, X)$.

\subsection{Sampling from the model quasi-posterior}{\label{Sec:GibbsModelPosterior}}

We employ a \textit{random-scan} Gibbs sampler to explore 
\( \tilde{\pi}(\gamma \mid y, X) \), alternating between randomly updating individual inclusion indicators 
\( \gamma_j \) 
and sampling the sparsity level \( w \) from their full conditional posterior distributions. The full MCMC procedure is summarized in Algorithm~\ref{Alg:Gibbs_BetaBinom}. At each iteration, a variable index \( j \in \{1, \ldots, p\} \) is selected uniformly at random and updated according to \( \tilde{\pi}(\cdot \mid y, X, \gamma_{-j}, w) \). 
In particular, since \( \gamma_j \in \{0, 1\} \), we sample \( \gamma_j \mid y, X, \gamma_{-j}, w \) from a Bernoulli distribution with inclusion probability
\begin{align}
    P(\gamma_{j} = 1\mid y, X, \gamma_{-j}, w) &= \frac{\tilde{\pi}(\gamma_j = 1, \gamma_{-j}\mid y, X, w)}{\tilde{\pi}(\gamma_j = 1, \gamma_{-j}\mid y, X, w) + \tilde{\pi}(\gamma_j = 0, \gamma_{-j}\mid y, X, w)} \notag \\
    &= \left(1 + \tilde{B}_{\gamma^{-}\gamma^{+}}\frac{(1-w)}{w}\right)^{-1}
\end{align}
Here, $\gamma^{-} = \{\gamma_1,\allowbreak\ \ldots,\allowbreak\ \gamma_{j-1},\allowbreak\ 0,\allowbreak\ 
\gamma_{j+1},\allowbreak\ \ldots,\allowbreak\ \gamma_p\}$, and $\gamma^{+} = \{\gamma_1, \ldots, \gamma_{j-1}, 1, \gamma_{j+1}, \ldots, \gamma_p\}$, denote the models obtained by setting the $j$-th inclusion indicator to 0 or 1, respectively, while keeping all other components of $\gamma$ fixed. 

\begin{algorithm}
\small
\caption{Gibbs sampling algorithm for quasi-posterior model selection}\label{Alg:Gibbs_BetaBinom}
\KwIn{Data $y$ and $X$, dispersion $\hat{\psi}$, prior hyperparameters $\{a, b, s\}$, initial state $\{\gamma^{(0)}, w^{(0)}\}$, and number of MCMC samples $N$.}
\KwOut{$\{\gamma^{(t)}, w^{(t)}\}_{t=1}^N$, approximate samples from $\tilde{\pi}(\gamma\mid y, X)$.} 
\For{$t =1, \ldots, N$}{
    Generate $l=(l_1,\ldots,l_p)$ a random permutation of $\{1,\ldots,p\}$. Set $\gamma = \gamma^{(t-1)}$. \\
    \For{$j =1, \ldots, p$}{
        Set $\gamma^{+}_{-l_j} = \gamma_{-l_j}$, $\gamma^{+}_{l_j} = 1$, $\gamma^{-}_{-l_j} = \gamma_{-l_j}$, and $\gamma^{-}_{l_j} = 1$.\\
        Set $\gamma_{l_j} = 1$ with probability $\left(1 + \tilde{B}_{\gamma^{-}\gamma^{+}}\frac{(1-w^{(t-1)})}{w^{(t-1)}}\right)^{-1}$ or $\gamma_{l_j} = 0$ otherwise. \\
    }
Set $\gamma^{(t)} = \gamma$.\\
Sample $w^{(t)}\sim \textrm{Beta}\left(a + |\gamma^{(t)}|_0, b + p - |\gamma^{(t)}|_0\right)$
}
\end{algorithm}


As discussed in Section~\ref{sec:quasi_marginal}, when the marginal quasi-likelihood 
is not available in closed form, we compute 
$\tilde{B}^{\textsc{LA}}_{\gamma^{-}\gamma^{+}}$ using the Laplace approximation in 
\eqref{eq:laplace_approx}. This requires an optimization step to obtain the mode 
$\tilde{\beta}_{\gamma}$. In our implementation, we use the L-BFGS optimizer in 
\texttt{Stan} \citep{carpenter2017stan} to obtain $\tilde{\beta}_{\gamma}$ and rely on 
\texttt{Stan}’s automatic differentiation to evaluate
$H_n(y, X; \tilde{\beta}_{\gamma}, \hat{\psi})
    - n^{-1}\nabla^2_{\beta_{\gamma}} \log \pi(\tilde{\beta}_{\gamma})$.
Although bespoke optimization routines and analytic Hessian calculations could yield  faster inference, we prioritise the usability  and generality of a probabilistic-programming implementation. We implemented a caching scheme in which every visited model  $\gamma$ is stored together with its mode $\tilde{\beta}_{\gamma}$, Hessian evaluation, and  Laplace-approximated marginal quasi-likelihood and so if this model is revisited the required quantities are retrieved from memory rather than recomputed. 
This substantially  reduces the number of optimisation steps required—especially when the posterior concentrates on a  relatively small subset of models, as seen in our empirical studies.

Our implementation of Algorithm~\ref{Alg:Gibbs_BetaBinom} not only generates samples
${\gamma^{(t)}}_{t=1}^N$, but also keeps track of the average probability that
$\gamma_j = 1$ across iterations, providing a Rao–Blackwellised estimate of the posterior
inclusion probability for variable $j$ \citep{gelfand1990sampling}. We diagnose convergence of the sampler by monitoring the cumulative quasi-posterior inclusion probabilities for each variable across iterations (See Figures \ref{fig:NMES_diags} and \ref{fig:DLD_diags}). If a practitioner prefers to fix the sparsity level 
$w$ in advance, the corresponding version of the sampler is provided in the Supplementary Material (Algorithm~\ref{Alg:Gibbs}).  

 \subsection{Variable selection} \label{sec:variable_selection}
 
 To perform variable selection, the marginal quasi-posterior probability of inclusion (qPPI), \( \tilde{\pi}(\gamma_j = 1 \mid y, X) \), is thresholded to determine whether a variable is considered active. A common strategy is the so-called median probability model, which, 
 corresponds to thresholding the qPPI at 0.5 \citep{barbieri2004optimal}. Alternatively, when controlling for false positives is a priority, one can set the threshold such that the average inclusion probability among selected variables is at least \( 1 - \alpha \), achieving a target posterior expected false discovery rate (FDR) level \( \alpha \) \citep{mueller:2004} (see Suppelementary Material, Section~\ref{sec:FDR-control} for further discussion). For a broader treatment of thresholding posterior distributions and their connection to FDR control, see \citet{castillo2020spike}.  

\subsection{Sampling of the regression coefficients}

For any selected model configuration $\tilde{\gamma}$, 
we then sample the corresponding 
regression coefficients from $\tilde{\pi}(\beta_{\tilde{\gamma}} \mid y, X, \tilde{\gamma})$, 
which has the same quasi-posterior form as \eqref{Equ:Quasi-posterior}. This is achieved 
using the No-U-Turn sampler (NUTS; \citealt{hoffman2014no}) implemented in \texttt{Stan}. 
NUTS automatically tunes the step size of the Hamiltonian Monte Carlo (HMC; 
\citealt{duane1987hybrid}) integrator, which discretises the Hamiltonian dynamics to 
propose joint parameter updates and accepts or rejects these proposals using the usual 
Metropolis--Hastings criterion. Finally, for each draw we set $\beta_j = 0$ whenever 
$\tilde{\gamma}_j = 0$.  
Alternatively, posterior inference on $\beta$ could be carried out via Bayesian model averaging by sampling $\beta^{(t)}\sim \tilde{\pi}(\cdot \mid y, X, \gamma^{(t)})$ at each MCMC iteration $t$. 
%
%
%
%
In either case, marginally exploring the model space before sampling the parameters conditional 
on each model, 
(i) restricts inference to 
the lower-dimensional subspace $\beta_{\gamma} \in \mathbb{R}^{|\gamma|_0}$, which yields 
substantial savings when $\beta$ is sparse, and (ii) mitigates the strong 
auto-correlation that typically arises when Gibbs sampling from the joint posterior of $\gamma$ and $\beta$ \citep[e.g.][]{george1993variable}.

\section{Simulation studies}
\label{sec:simul_studies}



In this section, we compare the empirical performance of the model quasi-posterior with standard Bayesian procedures across a range of challenging variable–selection settings and sample sizes. 

\subsection{Parameter settings and performance metrics}

Across all experiments, we run 3{,}000 MCMC iterations and discard the first 1{,}500 as burn–in, yielding estimates of the qPPIs, $\hat{\pi}_j$ for $j=1,\dots,p$. We set Beta-Binomial hyperparameters $a = b =  1$, 
corresponding to an uninformative prior on the sparsity level. The slab variance is further fixed at $s^2 = 9$.
Variable selection is performed by: (i) thresholding at $\hat{\pi}_j \geq 0.5$ (median–probability model), and (ii) a Bayesian FDR approach with target level $\alpha = 0.05$, as discussed in Section \ref{sec:variable_selection}.  Variable selection performance is evaluated using the false discovery rate (FDR), statistical power,  F1 score and the Matthews correlation coefficient (MCC) (see Section \ref{Sec:metrics} for their definition). 
The FDR is the proportion the covariates that the model claims to be active that were truly inactive, a measure of Type I error. Power is the proportion of the truly active covariates that the model finds, a measure of  1 minus the Type II error. 
The F1 score is the harmonic mean between 1-FDR and Power, penalising imbalance between the two, while the  MCC seeks to evaluate overall classification performance that remains informative when the number of active and in-active variables differs greatly, i.e. sparse settings. 

\subsection{Overdispersed count data}{\label{Sec:sim_count}}

We assess the performance of the proposed method in a sparse regression setting with overdispersed counts as introduced in Example \ref{Ex:Count}. 
The number of covariates is fixed at $p=20$ (including an intercept) and sample size varies over $n \in \{25, 35, 50, 100, 200\}$. This reflects a nontrivial scenario where $n$ is never very large relative to $p$. 
The design matrix entries are drawn as $x_{ij} \stackrel{\text{i.i.d.}}{\sim} \mathcal{N}(0,1)$ for $j=2,\dots,p$, with the first column set to one.  The true coefficient vector ${\beta}^\ast \in \mathbb{R}^p$ is sparse, with nonzero values given by $\beta^\ast_1 = 3.5$, $\beta^\ast_7 = 1.5$, $\beta^\ast_9 = -0.3$, and $\beta^\ast_{10} = 0.3$. 
For each observation $i$, the mean is defined as  $\mu_i = \exp\{{x}_i^\top {\beta}^\ast\}$, 
and responses are generated following the scheme in \citet{agnoletto2025bayesian} 
where
\[
Y_i^\ast \,\big|\, \mu_i,\psi^\ast \sim \mathrm{Gamma}\!\left(\frac{\mu_i}{\psi^\ast},\ \psi^\ast\right),
\]  
with  $\psi^\ast=5.5$, so that ${E}[Y_i^\ast\mid x_i] = \mu_i$ and $var[Y_i^\ast\mid x_i] = \mu_i \psi^{\ast}$. This setup emulates overdispersion relative to the widely applied Poisson model. The observed counts are then obtained as  $Y_i = \lfloor Y_i^\ast \rceil$,  where $\lfloor \cdot \rceil$ denotes rounding to the nearest integer.

We benchmark four variable–selection procedures: our proposed quasi–posterior (QP) approach, with mean and variance function described in Example \ref{Ex:Count}, a Bayesian Poisson regression model (Pois), a Bayesian negative binomial regression model (NB), and a lasso penalised Poisson regression model with lasso hyper-parmerter set via cross-validation (LASSO). 
For comparability, we use Laplace approximations of marginal likelihood and a Gibbs sampler analogous to Algorithm \ref{Alg:Gibbs_BetaBinom} to sample from the traditional Poisson and NB model posteriors $\pi(\gamma\mid y, X)$.

Figure~\ref{fig:countsim-metrics} (top) reports the average classification metrics, computed over 50 simulation replicates and based on thresholding to achieve a Bayesian FDR of at most 0.05. 
We report in the Supplementary Material (Figure~\ref{fig:coutsim-metrics-0.5}) the corresponding plots for the median–probability model thresholding.
In these simulations, the \emph{quasi-posterior} (QP)  consistently yields the most accurate variable-selection performance, with the largest improvements observed at small sample size.
For $n > 25$ the QP maintains power above 90\% and FDR much lower than  5\%, while the NB model exhibits power as low as 50\% and never higher than 90\% for any sample size, and the Poisson and lasso models are unable to achieve FDR of less than 20\%. As a result, the MCC and F1 metrics are uniformly larger for the QP.
The bottom of Figure~\ref{fig:countsim-metrics} compares the estimated inclusion probabilities of the QP with the Poisson and NB models respectively for each covariate in each repeat. These shows that for small samples the QP generally assigned higher probability to active covariates than the NB and lower probability to non-active covariates than the Poisson. We further provide in the Supplementary Material (Figure~\ref{fig:coutsim-metrics-0.5}) the corresponding comparisons for the remaining values of $n$.
Overall, these experiments illustrate a situation where  the quasi-posterior model selection provides valid inference, in terms of acceptably controlling false discovery rate, and out performs standard model based procedures in terms of power, F1 and MCC.

\begin{figure}[ht!]
    \centering
    
    \includegraphics[width=0.95\textwidth]{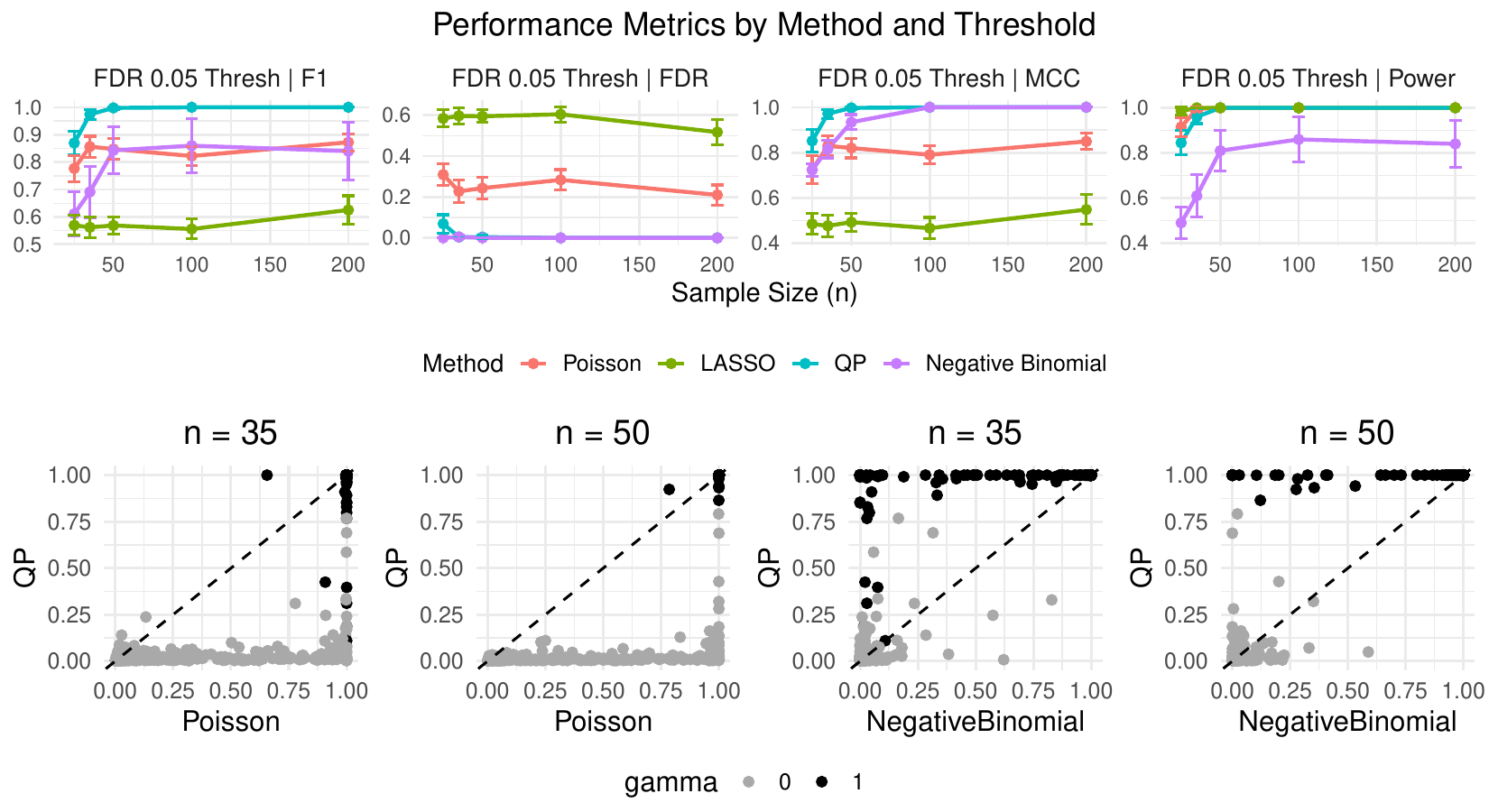}
    \caption{
    Overdispersed count data. 
Top: Variable selection performance of quasi–posterior (QP), negative binomial (NB), Poisson (Pois), and lasso (LASSO) models across sample sizes using the Bayesian FDR control $\alpha = 0.05$. Error bars denote $\pm 2$ standard errors across simulation replicates. 
Bottom: Estimated posterior probabilities of inclusion  across repeats, coloured according to whether the generating $\beta^\ast$ was zero (grey) or non-zero (black).}
    \label{fig:countsim-metrics}
\end{figure}

\subsection{Linear Regression with heavy-tailed errors}{\label{Sec:sim_heavy}}

We now consider a simulation settings aligned with the linear regression framework in Example~\ref{Ex:LinearRegression} with heavy-tailed error distributions. 
Specifically, we consider $p=20$ covariates (including an intercept) and sample sizes $n \in \{25, 50, 75, 100, 125, 150\}$.
We generate features such that $x_{ij} \stackrel{\text{i.i.d.}}{\sim} \mathcal{N}(0,1)$ for $j=2,\dots,p$, with the first column set to one.  The true coefficient vector is ${\beta}^\ast = \{0.3, 0.6, -0.6, 0.3, 0, \ldots, 0\}$. For each observation $i$, the mean is defined as  $\mu_i = {x}_i^\top {\beta}^\ast$, and responses are generated such that,  
\[
Y_i \,\big|\, \mu_i,\psi^\ast \sim \mathrm{Student-}t_{\nu}\!\left(\mu_i, {\frac{\nu-2}{\nu}\psi^\ast}\right),
\]  
with $\nu = 3$ degrees of freedom and $\psi^\ast=1$, so that ${E}[Y_i\mid x_i] = \mu_i$ and $var[Y_i\mid x_i] = \psi^{\ast}$. 

We compare the QP, with mean and variance function described in Example \ref{Ex:LinearRegression}, with traditional Bayesian variable selection in linear regression which assumes a Gaussian error distribution as implemented in the \texttt{mombf} package \citep{rossell2015package}, and lasso penalised ordinary least squares using cross-validation to set the lasso hyper-parameter. \texttt{mombf} implements a Gibbs sampler analogous to Algorithm \ref{Alg:Gibbs_BetaBinom} using the closed form linear regression marginal likelihood to sample form $\pi(\gamma\mid y, X)$. The Student-$t$ with $\nu = 3$ degrees of freedom has heavier tails than a Gaussian and therefore following \cite{rossell2018tractable} we expect traditional Bayesian variable selection to have reduced power to detect truly active variables.

Figure~\ref{fig:multcomp_probs_homo_student} (top) reports the average classification metrics over 50 simulation replicates, based on thresholding to achieve a Bayesian FDR of at most 0.05. 
Figure~\ref{fig:multcomp_probs_homo_student0.5} in the Supplementary Material presents the corresponding plots for the median–probability model thresholding.  
In these simulations both the QP and \texttt{mombf} control the FDR below 5\%, while the QP has power up to 10 percentage point larger than \texttt{mombf}  for small $n$. The QP has a uniformly larger F1 than \texttt{mombf} while the MCC of the two methods is very similar. 
The lasso approach estimates much higher power than both of the Bayesian methods when $n$ is small, but correspondingly achieves FDR above 40\% for all $n$. 
The bottom of Figure~\ref{fig:multcomp_probs_homo_student} plots the posterior inclusion probabilities estimated by the quasi-posterior for each parameter across simulation repeats against those estimated for the traditional posterior by \texttt{mombf} for several values of $n$, with points coloured according to whether the generating $\beta^\ast$ was non-zero. Across the simulations, the QP generally estimates larger probabilities of inclusion than \texttt{mombf}. When $n$ is only slightly greater than $p$ (i.e. $n = 25$) the QP estimates a larger probability of inclusion for many truly active covariates, but also some truly inactive covariates. This behaviour is likely due to the imprecise estimation of $\hat{\psi}$ when $n \approx p$. However, for $n > 25$, the QP method generally assigns larger inclusion probabilities than \texttt{mombf} only for the truly active variables, with both methods agreeing on near-zero probabilities for inactive variables.

\begin{figure}[ht!]
    \centering
    \includegraphics[width=0.95\linewidth]{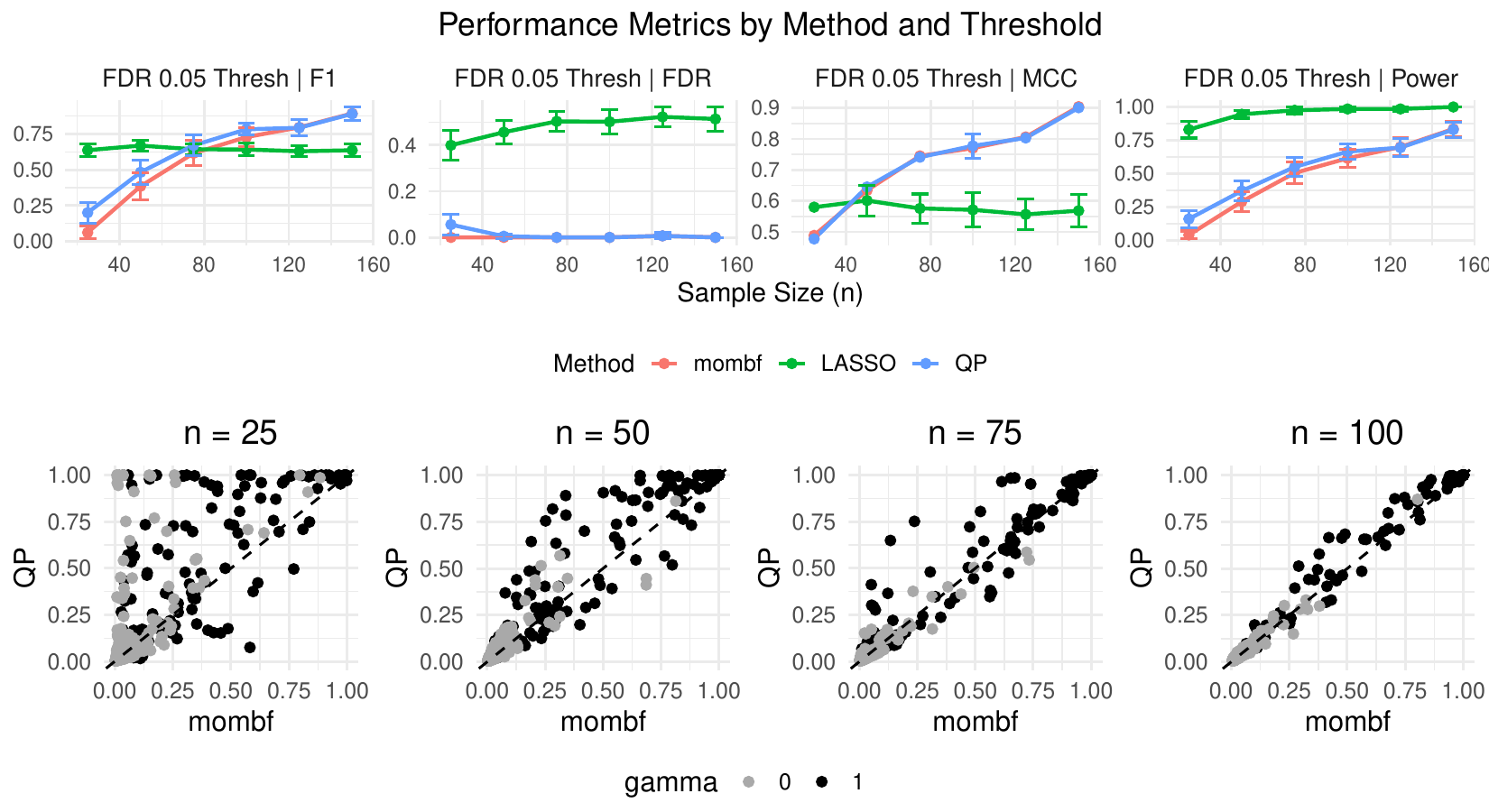}
    \caption{
    Heavy-tailed linear regression.  
Top: Variable selection performance of quasi–posterior (QP), traditional posterior (mombf) and frequentist LASSO models across sample sizes using the Bayesian FDR control $\alpha = 0.05$. Error bars denote $\pm 2$ standard errors across simulation replicates. 
Bottom: Estimated posterior probabilities of inclusion across repeats, coloured according to whether the generating $\beta^\ast$ was zero (grey) or non-zero (black).}
    \label{fig:multcomp_probs_homo_student}
\end{figure}

\subsection{Linear regression with inliers}{\label{Sec:inliers}}
While heavy-tailed data have been shown to affect Bayesian inference's power to detect truly active covariates, 
inliers have been shown to cause covariates to be included in the model that are not necessarily predictively optimal \citep{grunwald2017inconsistency}.
To examine this further, we consider the simulation setting introduced in 
\cite{grunwald2017inconsistency}. In this setting,  $p = 51$ covariates are generated with $x_{i1} = 1$ and $x_{i, 2:p} \stackrel{\text{i.i.d.}}{\sim} 0.5\mathcal{N}_{p-1}(0,I_{p-1}) + 0.5\prod_{j=2}^{p}\delta_{x_{ij} = 0}$, and the true coefficient vector is ${\beta}^{\ast} = \{0, 0.1, 0.1, 0.1, 0.1, 0, \ldots, 0\}$. For each observation $i$, the mean is defined as  $\mu_i = {x}_i^\top {\beta}^{\ast}$ and responses are generated such that,
\[
Y_i \,\big|\, \mu_i,\psi^{\ast} \sim \mathcal{N}\!\left(\mu_i, 2\psi^{\ast}\right)\left(1 - \prod_{j=2}^{p}\mathbb{I}(x_{ij} = 0)\right) + \mu_i\prod_{j=2}^{p}\mathbb{I}(x_{ij} = 0),
\]  
with $\psi^{\ast}=\frac{1}{40}$, so that ${E}[Y_i\mid x_i] = \mu_i$ and $\frac{1}{n}\sum_{i=1}^nvar[Y_i\mid x_i] = \psi^{\ast}$. 
In this design, half of the observations have Gaussian covariates with Gaussian errors, while the other half have all covariates equal to zero and the linear predictor observed without error—i.e., they are inliers. When assuming a Gaussian model, these inliers cause the posterior interval to shrink, 
generating false positives.


We compare the model quasi-posterior, using the mean and variance specifications in Example~\ref{Ex:LinearRegression}, with standard Bayesian variable selection under Gaussian errors, implemented via the \texttt{mombf} package \citep{rossell2015package}, and lasso penalised ordinary least squares using cross-validation to set the lasso hyper-parameter. 
Assumption A1.1, which requires correct specification of the quasi-posterior variance function, is violated in this setting. However, the mean function remains correctly specified, and importantly, although the true variance is not constant, it does not depend on $\beta^{\ast}$.

Figure~\ref{fig:multcomp_selestion_homo_inlier} (top) reports the average classification metrics, computed over 50 simulation replicates and based on thresholding to achieve a Bayesian FDR of at most 0.05. 
The corresponding results for the median–probability model are shown in the Supplementary Material (Figure~\ref{fig:multcomp_selestion_homo_inlier0.5}). In these simulations, for $n \geq 250$, the QP halves the FDR of the traditional Bayesian posterior (\texttt{mombf}), which remains above the 5\% level even at $n = 750$, and it does so without sacrificing power. The lasso approach has FDR greater than 60\% for all values of $n$. 
Figure~\ref{fig:multcomp_selestion_homo_inlier} (bottom) shows that this improvement results from the model quasi-posterior assigning consistently lower inclusion probabilities to truly inactive covariates than \texttt{mombf}. The corresponding results for the remaining values of $n$ are reported in the Supplementary Material (Figure~\ref{fig:multcomp_selestion_homo_inlier0.5}). 
However, for $n < 250$, while the quasi-posterior attains higher power than \texttt{mombf}, it also exhibits a higher FDR. This likely reflects difficulties in estimating $\hat{\psi}$ when the sample size is small relative to $p$ and the variation in the data is low.
To address this issue, we report in the Supplementary Material (Figure \ref{fig:multcomp_metrics_homo_inlier_psiLASSO}) the corresponding results obtained by replacing $\hat{\beta}$ in \eqref{Equ:psi_est} with a lasso-regularised estimate. 
This adjustment leads to lower FDR for small $n$, accompanied by a slight reduction in power. This presents a promising avenue for the quasi-posterior when $n$ is small relative to $p$, with a full theoretical analysis left for future work.

\begin{figure}[ht!]
    \centering
    \includegraphics[width=0.95\linewidth]{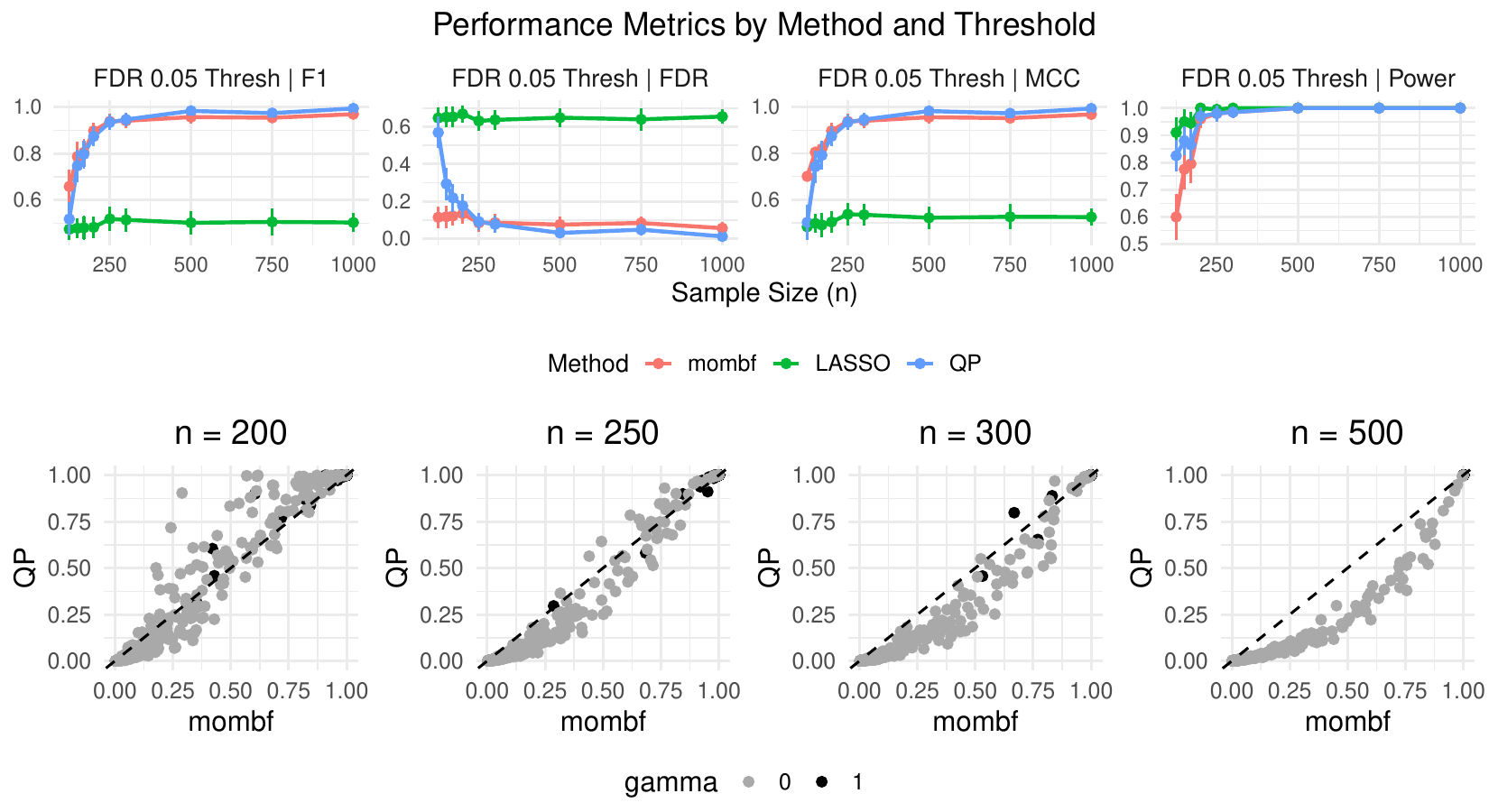}
    \caption{
    Linear regression with inliers. 
    Top: Variable selection performance of quasi–posterior (QP), traditional posterior (mombf) and frequentist lasso models across sample sizes using the Bayesian FDR control $\alpha = 0.05$. Error bars denote $\pm 2$ standard errors across simulation replicates. 
Bottom: Estimated posterior probabilities of inclusion across repeats, coloured according to whether the generating $\beta^\ast$ was zero (grey) or non-zero (black).}
    \label{fig:multcomp_selestion_homo_inlier}
\end{figure}





\section{Real data}{\label{Sec:real_data}}

We illustrate the utility of the model quasi-posterior for inference through applications in social science and genomics.

\subsection{NMES 1988 dataset}

This analysis uses data from the 1987--1988 U.S.\ National Medical Expenditure Survey (NMES), a nationally representative study of health-care utilization among older adults. 
We focus on the subsample of $n = 4{,}406$ Medicare-covered individuals aged 66 and older, a benchmark dataset in the econometrics literature on count-data regression
\citep{deb1997demand}. 
The response variable $y$ records the number of physician office visits. 
The design matrix $X$ contains $p = 16$ predictors, including measures of self-reported health, chronic conditions, functional limitations, region of residence, age, race, gender, marital status, years of education, income, employment status, and insurance coverage. 

We compare the model QP, using the mean and variance functions defined in Example \ref{Ex:Count}, with standard Poisson  and NB regression models. 
Convergence of the Gibbs sampler for estimating posterior inclusion probabilities is assessed in Figure \ref{fig:NMES_diags} in the Supplementary Material. 
The QP and NB results closely align, and the selected variables are consistent with established findings in health-care behaviour \citep{deb1997demand}: poorer self-reported health, chronic conditions, and medical insurance indicators are strongly associated with higher physician-visit counts. The Poisson model on the other hand highlights additional predictors that are not supported under QP or NB. Figure \ref{fig:NMES1988_ppi_comparison} in the supplementary material reports the corresponding posterior inclusion probabilities.

We further assess whether the proposed mean and variance assumptions provide an adequate description of this dataset. 
To this end, we follow a diagnostic similar to that of \citet{agnoletto2025bayesian}, which compares empirical mean–variance patterns with those implied by the fitted models. Full details of the diagnostic procedure and additional results are provided in the Supplementary Material (Section~\ref{sec:variance_function_diagnostics}).
The left panel of Figure \ref{fig:mean_variance-adequacy_NMES1988} compares empirical and model-implied conditional means, estimated by binning, and shows that all three models achieve similar accuracy in estimating the mean structure.
The right panel Figure \ref{fig:mean_variance-adequacy_NMES1988} compares the  mean–variance relationship present in the binned data with what is assumed by each model.  
The Poisson model substantially underestimates variability, the NB captures overdispersion but tends to overstate variance at larger fitted means, while the proposed model QP provides the closest agreement with the empirical variability across the full range of  values. These differences are reflected quantitatively by the mean square error (MSE) between empirical and model-implied variances, which equals 1570.0 for the Poisson model, 263.0 for the NB model, and 62.3 for the model QP. In the Supplementary Material, we further assess out-of-sample predictive performance via cross-validation using a variance-weighted mean-squared error, and find that the proposed model QP provides the most accurate predictions.

\begin{figure}[ht!]
  \centering
  \includegraphics[width=0.95\textwidth]{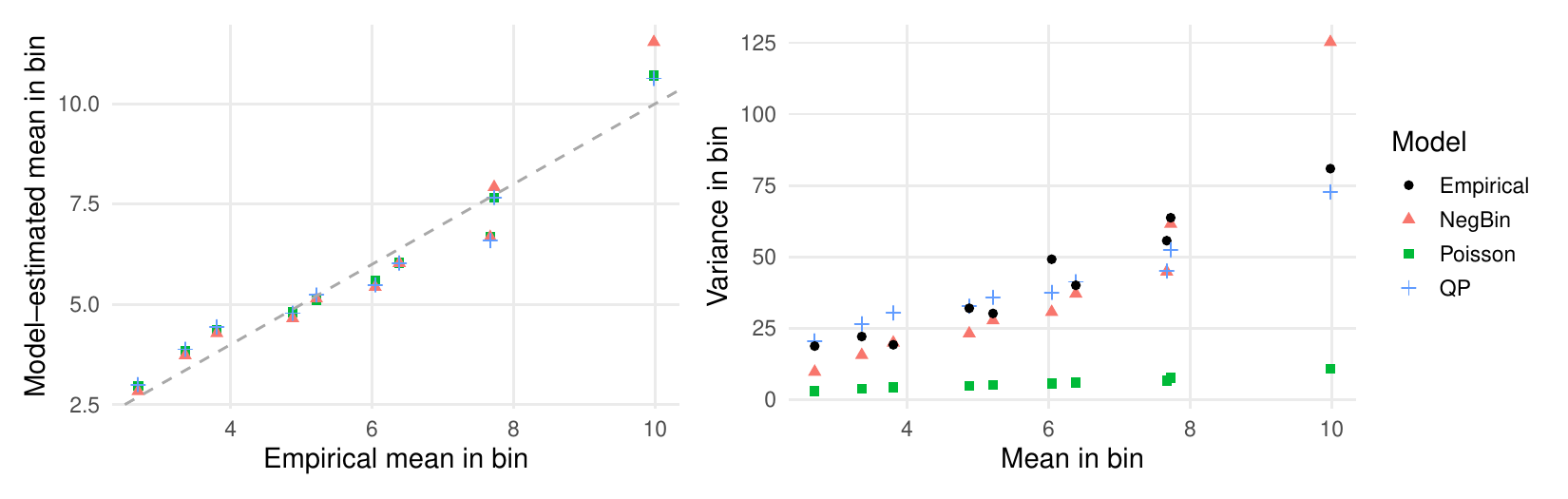}
\caption{
NMES data. (Left) Empirical vs model-implied average for $y$ within bins of the fitted linear predictor.
(Right) Empirical and model-implied mean vs variance for $y$ within bins of the fitted linear predictor.
Each model is fitted using its own selected subset of predictors.
}
\label{fig:mean_variance-adequacy_NMES1988}
\end{figure}

\subsection{DLD dataset}

We further illustrate our approach using the DLD dataset \citep{yuan2016plasma}, which profiled plasma extracellular RNA (exRNA) from $n = 192$ individuals comprising colorectal, prostate, and pancreatic cancer patients and healthy controls. The data include expression measures (reads per million, RPM) for a collection of messenger RNA genes identified as part of the exRNA species diversity analysis. Similarly to \citet{rossell2018tractable}, we study the problem of predicting the expression of \textit{DLD} (dihydrolipoamide dehydrogenase), a gene involved in mitochondrial energy metabolism and linked to several metabolic and immune-system disorders, using the $p = 55$ other genes as as explanatory variables. 

We compare the model QP, using the mean and variance functions described in Example \ref{Ex:LinearRegression}, with traditional Bayesian variable selection methods for linear regression that assume a Gaussian error distribution, as implemented in the \texttt{mombf} package \citep{rossell2015package}.  Both the quasi-posterior and  \texttt{mombf} approaches 
identify C6orf226, ECH1, CSF2RA, and FBX119 as having high posterior inclusion probabilities, while assigning negligible support to the remaining genes (see Supplementary Material for the corresponding posterior inclusion probability plots). We further evaluated the adequacy of the homoscedasticity assumption for the DLD dataset using the diagnostic introduced in the previous example and detailed in Section~\ref{Sec:homo_diagnostics} of the Supplementary Material. 

To compare variable selection performance between QP and \texttt{mombf} across different sample sizes, we conducted a subsampling analysis similar in spirit to the simulation study. Specifically, we repeatedly drew subsamples of size \(n \in \{60, 80, 100, 120, 140, 160\}\) from the full dataset and applied both QP and \texttt{mombf} to each subsample. Since the true active set is unknown for real data, we treat the variables selected using the full dataset (\(n=192\)) as a ground truth. 
The top of Figure~\ref{fig:DLD_multcomp_probs_homo} shows that in these simulations the QP maintains higher power and F1 score 
when $n < 150$, while controlling false discoveries below 5\% for $n\geq 100$. 
The bottom panels further illustrate that for small sample sizes the QP assigns larger posterior inclusion probabilities than \texttt{mombf} to genes identified as active in the full-data analysis. 




\begin{figure}[ht!]
    \centering
    \includegraphics[width=0.95\linewidth]{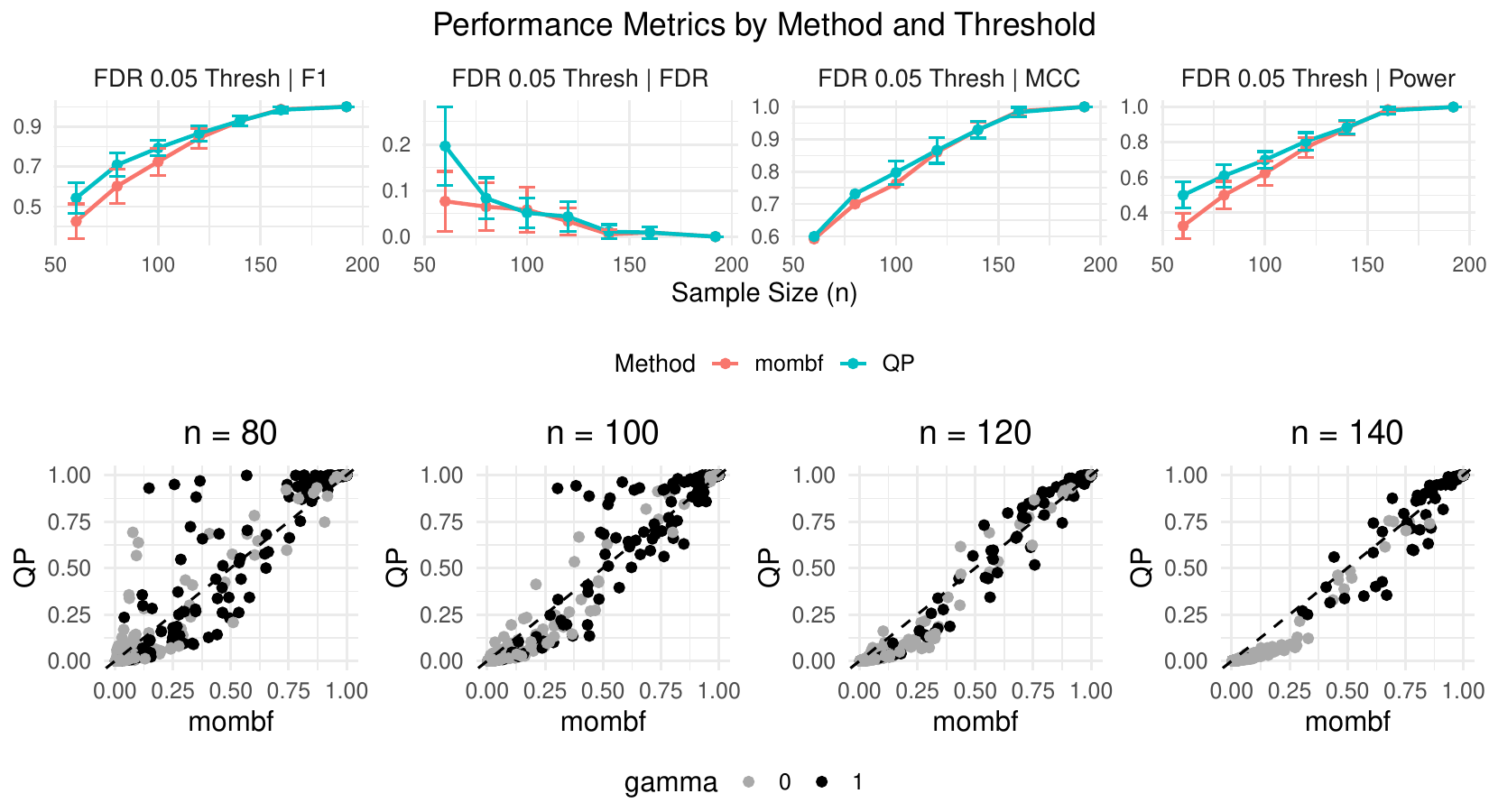}
    \caption{
    DLD data. 
Top: Variable selection performance of quasi–posterior (QP) and traditional posterior (mombf) across sub-sample sizes using the Bayesian FDR control $\alpha = 0.05$. Error bars denote $\pm 2$ standard errors across simulation replicates. 
Bottom: Estimated posterior probabilities of inclusion across repeats, coloured according to whether  $\beta$ was estimated to be zero (grey) or non-zero (black) when the full $n = 192$ data points are considered.}
    \label{fig:DLD_multcomp_probs_homo}
\end{figure}

\section{Discussion}{\label{Sec:conclusion}}


This paper establishes the validity of the model quasi-posterior
for Bayesian variable selection in generalized linear models without requiring full specification of the data-generating distribution. 
We proved that the model quasi-posterior provides valid inference, in the sense that it exhibits the same asymptotic behaviour as the standard Bayesian model posterior under a correctly specified full likelihood, while simulation studies and showed improved variable selection performance under model misspecification.
This enables practitioners to quantify model uncertainty and perform variable selection using interpretable mean and variance structures relating $y$ to $x$, while avoiding the need to fully specify a likelihood function.

To the best of our knowledge this represents the first instance of generalised Bayesian variable selection and suggests there may be a link between correct asymptotic coverage and calibrated variable selection. 
A limitation of this approach is its reliance on correct specification of the mean and variance functions. While this requirement is weaker than assuming a fully specified probability model, it may still be violated in practice. To address this, we provided diagnostic tools for assessing the adequacy of the mean and variance specifications. Further work will  investigate whether generalised Bayesian updates with loss functions with different robustness properties can be calibrated for variable selection.  

Although the present work focuses on moderate-dimensional settings, further development is needed to extend quasi-posterior variable selection to high-dimensional regimes. 
Penalized regression methods may offer a viable approach for estimating the dispersion parameter \( \hat{\psi} \) in this setting. 
Non-local priors \citep{rossell2017nonlocal} may also be used to improve quasi-Bayes factor rates. From a computational perspective, 
birth–death–swap samplers \citep{yang2016computational} or tempered Gibbs sampling \citep{zanella2019scalable} may yield improved sampling efficiency in higher dimensions.

\section*{Acknowledgements}

The authors would like to thank Rami Tabri for useful discussions during the preparation of this manuscript

\bibliography{bib.bib}

\newpage

\appendix

This supplement provides additional theoretical, methodological, and empirical details supporting the results presented in the main paper. 
Section~\ref{sec:supp_theory} contains full notation, regularity conditions, and complete proofs of the main theoretical results, together with auxiliary propositions on concavity, consistency, and the accuracy of Laplace approximations for quasi--posterior model comparison. 
Section~\ref{sec:supp_experiments} presents implementation details for Bayesian false discovery rate control and Gibbs sampling, along with supplementary simulation results. 
Finally, additional analyses for the real data applications are reported, including diagnostic checks, convergence assessments, and out--of--sample predictive comparisons.

\section{Proofs of Theoretical Results}{\label{sec:supp_theory}}

This section contains full definitions of all notation, discussion of technical conditions, preliminary theoretical results and proofs of the paper's main theorems. 

\subsection{Notation}{\label{Sec:Notation}}

First, we review the important notation defined in the main paper. Let \( X \in \mathbb{R}^{n \times p} \) be the design matrix whose rows are the covariate vectors \( x_i \). Let \( \gamma = (\gamma_1, \ldots, \gamma_p) \in \{0,1\}^p \) denote a vector of variable inclusion indicators, where \( \gamma_j = \mathbb{I}(\beta_j \neq 0) \) indicates whether the \( j \)-th predictor is active. Define \( |\gamma|_0 = \sum_{j=1}^p \gamma_j \) as the number of active (included) predictors. Let \( \beta_{\gamma} \in \mathbb{R}^{|\gamma|_0} \) and \( X_{\gamma} \in \mathbb{R}^{n \times |\gamma|_0} \) denote the subvector and submatrix corresponding to the active coefficients and covariates, respectively. We say that model \( \gamma^{(1)} \) is nested in model \( \gamma^{(2)} \), written \( \gamma^{(1)} \subseteq \gamma^{(2)} \), if \( |\gamma^{(1)}|_0 \leq |\gamma^{(2)}|_0 \) and \( \gamma^{(1)}_j = 1 \Rightarrow \gamma^{(2)}_j = 1 \) for all \( j = 1, \ldots, p \). Finally, for variable \( j \), we denote by \( \gamma_{-j} = \gamma \setminus \{ \gamma_j \} \in \{0,1\}^{p-1} \) the variable inclusion indicators without $\gamma_j$. 

 Further, for any model $\gamma \in \{0, 1\}^p$, define 
\begin{align*}
	Q_{n}(y, X_{\gamma}; \beta_{\gamma}, \psi) :&= \frac{1}{n}\sum_{i=1}^n \ell_{\psi}(y_i; x_{i\gamma}, \beta_{\gamma}),\\
    Q(F_0; x_{\gamma}, \beta_{\gamma}, \psi) :&= {E}_{F_0}\left[\ell_{\psi}(y; x_{\gamma}, \beta)\right]\\
    Q(F_0; \beta_{\gamma}, \psi) :&= \lim_{n\rightarrow \infty}\frac{1}{n}\sum_{i=1}^nQ(F_0; x_{i\gamma}, \beta_{\gamma}, \psi),
\end{align*}
where \( x_{i\gamma} \) denotes the \( i \)-th row of the sub matrix matrix \( X_\gamma \). By construction, $\beta^{\ast} := \beta^{\ast}_{\{1, \ldots, 1\}}$ is such that $\beta^{\ast} = \argmax_{\beta\in \mathbb{R}^p} Q(F_0; \beta, \psi)$ for any $\psi > 0$ (see Theorem 1 of \cite{agnoletto2025bayesian}). Whenever $\beta$, $\beta^{\ast}$ or $\hat{\beta}$ appears without a $\gamma$ subscript this corresponds to the full $p$-dimensional vector. Let us then further define
\begin{align*}
    \beta^{\ast}_{\gamma} :&= \argmax_{\beta_{\gamma}\in \mathbb{R}^{|\gamma|_0}} Q(F_0; \beta_{\gamma}, \psi),\\
    \hat{\beta}_{\gamma} :&= \argmax_{\beta_{\gamma}\in\mathbb{R}^{|\gamma|_0}} Q_{n}(y, X_{\gamma}; \beta_{\gamma}, \psi),\\
    \tilde{\beta}_{\gamma} :&= \argmax_{\beta_{\gamma}\in\mathbb{R}^{|\gamma|_0}} nQ_{n}(y, X_{\gamma}; \beta_{\gamma}, \psi) + \log\pi(\beta_{\gamma}\mid \gamma),
\end{align*}
and note that $\beta^{\ast}_{\gamma}$ and $\hat{\beta}_{\gamma}$ are both invariant to the value of $\psi>0$ while $\tilde{\beta}_{\gamma}$ is not. We then define
\begin{align*}
\hat{\psi} &:= \frac{1}{n-p}\sum_{i=1}^n\frac{\left(y_i - \mu(x_i^\top\hat{\beta})\right)^2}{V(\mu(x_i^\top\hat{\beta}))},\\
\psi^{\ast} &:= {E}_{F_0}\left[\frac{\left(y - \mu(x^\top\beta^{\ast})\right)^2}{V(\mu(x^\top\beta^{\ast}))}\right].
\end{align*}
Finally,
\begin{align*}
H_n(y; X, \beta, \psi)
:&= - \nabla_\beta^2 Q_n(y, X; \beta, \psi)
= -\frac{1}{n}\sum_{i=1}^n \nabla_\beta^2 \ell_\psi(y_i; x_i, \beta)\\
H(F_0; \beta, \psi) :&= \lim_{n\rightarrow \infty}\frac{1}{n}\sum_{i=1}^n{E}_{F_0}\left[-\nabla^2_{\beta}\ell_{\psi}(y; x_i, \beta) \right],
\end{align*}
are the observed and limiting expected Hessian matrices of the quasi-log-likelihood. 

\subsection{Regularity Conditions}{\label{sec:regularity}}

The paper's theoretical conditions require the following technical conditions.

\begin{assumption}[Prior and Data Regularity] The following is true about the data generating process and the prior used
\begin{enumerate}
    \item[A1.1] Data $y_i\mid x_i$, $i = 1,\ldots, n$, are independently distributed according to $F_0(dy_i\mid x_i)$  with ${E}_{F_0}\left[y_i\mid x_i\right] = \mu(x_i^\top\beta^{\ast})$ and $var_{F_0}\left[y_i\mid x_i\right] = \psi^{\ast}V\left(\mu(x_i^\top\beta^{\ast})\right)$
    \item[A1.2] Prior $\pi(\beta_{\gamma}\mid \gamma)$ is continuous at $\beta^{\ast}_{\gamma}$ and such that $\pi(\beta^{\ast}_{\gamma}) > 0$ for all $\gamma$.
\end{enumerate}
\end{assumption}

A1.1 states that the data are independent given their features and that the adopted quasi-likelihood has correctly specified the mean and variance of the data generating process $F_0$. Although restrictive, we note that this assumption is significantly weaker than a requirement to produce a full likelihood model that correctly captures $F_0$. 
In Sections \ref{sec:variance_function_diagnostics} and \ref{Sec:homo_diagnostics} we extend and deploy the diagnostics of \cite{agnoletto2025bayesian} to help verify this condition. 
A1.2 states that prior used is continuous and has positive density at the data generating parameters which is immediately satisfied under \eqref{Equ:spike_and_slab} provided $w\in (0, 1)$.


\begin{assumption}[Quasi-likelihood Regularity]
Let $B_{\gamma}\subseteq \Gamma_{\gamma} \subseteq \mathbb{R}^{|\gamma|_0}$ be an open and convex set such that for all $\psi > 0$
\begin{enumerate}
    \item[A2.1] $\beta^{\ast}_\gamma \in B_{\gamma}$, 
    \item[A2.2]  $Q_{n}(y, X_{\gamma}; \beta_{\gamma}, \psi)$ has continuous third derivatives on $B_{\gamma}$ and that the third derivative is uniformly bounded on $B_{\gamma}$
    \item[A2.3] $Q(F_0; \beta_{\gamma}, \psi)<\infty$ and $\frac{1}{n^2}\sum_{i=1}^{\infty}var_{F_0}\left[\ell_{\psi}(y_i; x_{i\gamma}, \beta_{\gamma})\right] < \infty$

    \item[A2.4] Matrix $H(F_0; \beta_{\gamma}^{\ast}, \psi)$ exists and is positive definite
    \item[A2.5] Either of the following two conditions must be satisfied
    \begin{itemize}
        \item[i)]  $Q(F_0; \beta_{\gamma}, \psi) < Q(F_0; \beta^{\ast}_{\gamma}, \psi)$ for all $\beta_{\gamma} \in B_{\gamma}^{\ast}\setminus\{\beta^{\ast}_{\gamma}\}$ and \\$\lim \sup_n \sup_{\beta_{\gamma}\in\mathbb{R}^|\gamma|_0\setminus B_{\gamma}^{\ast}} Q_{n}(y, X_{\gamma}; \beta_{\gamma}, \psi) < Q(F_0; \beta^{\ast}_{\gamma}, \psi)$ for some compact set $B_{\gamma}^{\ast}\subseteq B_{\gamma}$ with $\beta^{\ast}_{\gamma}$ in the interior of $B^{\ast}_{\gamma}$. 
        \item[ii)] $Q_{n}(y, X; \beta, \psi)$ is concave and $\nabla_{\beta}Q(F_0; \beta^{\ast}_{\gamma}, \psi) = 0$
    \end{itemize}
    \item[A2.6] $Q_{n}(y, X_{\gamma}; \beta_{\gamma}, \psi)$ pointwise bounded on $B_{\gamma}$
    \item[A2.7] $\frac{1}{n}\sum_{i=1}^n var_{F_0}\left[\ell_{\psi^{\ast}}(y_i; x_{i\gamma}, \beta^{\ast}_{\gamma})\right] < \infty$ and $\frac{1}{n}\sum_{i=1}^n var_{F_0}\left[\frac{\left(y_i - \mu(x_i^\top\beta^{\ast})\right)^2}{V(\mu(x_i^\top\beta^{\ast}))}\right] < \infty$
\end{enumerate}
    
\end{assumption}

A2.1-5 appear in \cite{agnoletto2025bayesian}.
They argue that A2.2 is satisfied for  any reasonable choice of the mean $\mu(\cdot)$ and variance $V(\cdot)$ functions including Examples \ref{Ex:LinearRegression} and \ref{Ex:Count}. A2.3 requires that the quasi-log-likelihood is well defined at the limit. This was shown by \cite{ccinlar2011probability} 
Proposition 6.13 (Chapter 3) to be sufficient for pointwise convergence of $Q_{n}(y, X; \beta, \psi)$ to $Q(F_0; \beta, \psi)$ as required by Theorem 5 of \cite{miller2021asymptotic}. A2.4 is a standard assumption that allows for asymptotic Taylor expansions. Proposition \ref{Prop:Concavity} shows that A2.4 is straightforward to show for overparametrised models $\gamma^{\ast}\subseteq \gamma$. A2.5 (i) is a Wald-type condition and means that $Q(F_0; \beta^{\ast}, \psi)$ is maximised at $\beta^{\ast}$. Condition (ii) is stronger, but simplifies some of the argument. A2.5 (ii) implies (i) and is shown to be satisfied in Examples \ref{Ex:LinearRegression} and \ref{Ex:Count} in Corollary \ref{Cor:ConcavityCount}.

A2.6 and A2.7 are additional conditions to those imposed by \cite{agnoletto2025bayesian} that are required 
when analysing the ratio of Laplace approximations to the (quasi) marginal likelihoods. A2.6 is required in Theorem 7 of \cite{miller2021asymptotic} which established that $Q_n$ is equi-Lipschitz which itself is a condition of Theorem 5.23 of \cite{van2000asymptotic} when proving $\sqrt{n}$ consistency for $\hat{\beta}_{\gamma}$ in Theorem \ref{Thm:ConsistencyRR23_Osqrtn}. 
A2.7 is needed for $\sqrt{n}$-consistency of $Q_n$ and $\hat{\psi}$ which are both required for Theorem \ref{thm:variable_selection_quasi_posterior_RR23}. 




\subsection{Proofs of preliminary results}

\subsubsection{Derivation of Example \ref{Ex:LinearRegression}}{\label{Sec:Ex1_derv}}

We derive the quasi-log-likelihood under the mean and variance specification of Example \ref{Ex:LinearRegression}.

\begin{proof}
    Let $\mu(x_i^\top\beta) = x_i^\top\beta$ and $V(t) = 1$. Then the quasi likelihood is
\begin{align*}
    \ell_{\psi}(y_i; x_i, \beta) := \int_a^{\mu(x_i^\top\beta)}\frac{y_i - t}{\psi V(t)}dt &= \int_a^{\mu(x_i^\top\beta)}\frac{y_i - t}{\psi }dt\\
    &= \left[\frac{y_i t - \frac{1}{2}t^2}{\psi }\right]_a^{\mu(x_i^\top\beta)}\\
    &= \frac{y_i \mu(x_i^\top\beta) - \frac{1}{2}\mu(x_i^\top\beta)^2}{\psi } + C\\
    &= \frac{y_i x_i^\top\beta - \frac{1}{2}{x_i^2}^\top\beta^2}{\psi } + C.
\end{align*}
\end{proof}

\subsubsection{Derivation of Example \ref{Ex:Count}}{\label{Sec:Ex2_derv}}

We derive the quasi-log-likelihood under the mean and variance specification of Example \ref{Ex:Count}.

\begin{proof}
    Let $\mu(x_i^\top\beta) = \exp\left\{x_i^\top \beta\right\}$ and $V(t) = t$. then the quasi likelihood is
\begin{align*}
    \ell_{\psi}(y_i; x_i, \beta) := \int_a^{\mu(x_i^\top\beta)}\frac{y_i - t}{\psi V(t)}dt &= \int_a^{\mu(x_i^\top\beta)}\frac{y_i}{\psi t } - \frac{1}{\psi}dt\\
    &= \left[\frac{y_i \log(t) - t}{\psi }\right]_a^{\mu(x_i^\top\beta)}\\
    &= \frac{y_i \log \mu(x_i^\top\beta) - \mu(x_i^\top\beta)}{\psi } + C\\
    &= \frac{y_i x_i^\top\beta - \exp\left\{x_i^\top \beta\right\}}{\psi } + C.
\end{align*}
\end{proof}

\subsubsection{Proof of Proposition \ref{Prop:LinearRegression}}

We prove that the quasi-marginal likelihood has the closed form provided by Proposition \ref{Prop:LinearRegression} under the mean and variance specification in Example \ref{Ex:LinearRegression}

\begin{proof}
Following the derivation of $\ell_{\psi}(y_i; x_i, \beta)$ in Example \ref{Ex:LinearRegression}
\begin{align*}
    \tilde{f}(y; X, \gamma, \psi) &= \int \exp\left\{\sum_{i=1}^n \ell_{\psi}(y_i; x_{i, \gamma}, \beta_{\gamma})\right\}\pi(\beta_{\gamma})d\beta_{\gamma}\\
    &= \int \exp\left\{\sum_{i=1}^n \frac{y_i x_{i,\gamma}^\top\beta_{\gamma} - \frac{1}{2}{x_{i,\gamma}^2}^\top\beta_{\gamma}^2}{\psi }\right\}\pi(\beta_{\gamma})d\beta_{\gamma}\\
    &= \int \exp\left\{\frac{1}{\psi}\left(y^\top X_{\gamma}\beta_{\gamma} - \frac{1}{2}\beta_{\gamma}^\top X_{\gamma}^\top X_{\gamma}\beta_{\gamma}\right)\right\}\pi(\beta_{\gamma})d\beta_{\gamma}.
\end{align*}
Using that $\pi(\beta_j|\gamma_j = 1) = \mathcal{N}(\beta_j; 0, s^2)$ we can write
\begin{align*}
    p(y; X, \gamma, \psi) &= \int \exp\left\{-\frac{1}{2\psi}\left(\beta_{\gamma}^\top X_{\gamma}^\top X_{\gamma}\beta_{\gamma} - 2y^\top X_\gamma\beta_\gamma\right)\right\}\mathcal{N}_{|\gamma|_0}(\beta_\gamma; 0, s^2I_{|\gamma|_0})d\beta_{\gamma}\\
    &=\int \frac{1}{(2\pi)^{|\gamma|_0/2}s^{|\gamma|_0}}\exp\left\{-\frac{1}{2\psi}\left(\beta_{\gamma}^\top \left(X_{\gamma}^\top X_{\gamma} + \frac{\psi}{s^2} I_{|\gamma|_0}\right)\beta_{\gamma} - 2y^\top X_{\gamma}\beta_\gamma\right)\right\}d\beta_{\gamma}
\end{align*}
Let $U_{\gamma} = \left(X_{\gamma}^\top X_{\gamma} + \frac{\psi}{s^2}  I_{|\gamma|_0}\right)$ and 
$m_{\gamma} = U_{\gamma}^{-1} X_{\gamma}^\top y$
\begin{align*}
    &p(y; X, \gamma, \psi) =\frac{1}{s^{|\gamma|_0}}\exp\left\{\frac{1}{2\psi}m_{\gamma}^\top U_{\gamma}m_{\gamma}\right\}\int \frac{1}{(2\pi)^{|\gamma|_0/2}}\exp\left\{-\frac{1}{2\psi}\left(\beta_\gamma - m_{\gamma}\right)^\top U_{\gamma}\left(\beta_\gamma - m_{\gamma}\right)\right\}d\beta_\gamma\\
    =&\frac{\psi^{|\gamma|_0/2}}{s^{|\gamma|_0}|U_{\gamma}|^{1/2}}\exp\left\{\frac{1}{2\psi}m_{\gamma}^\top U_{\gamma}m_{\gamma}\right\}\int \frac{|U_{\gamma}|^{1/2}}{(2\pi)^{|\gamma|_0/2}\psi^{|\gamma|_0/2}}\exp\left\{-\frac{1}{2\psi}\left(\beta_\gamma - m_{\gamma}\right)^\top U_{\gamma}\left(\beta_\gamma - m_{\gamma}\right)\right\}d\beta_\gamma\\
    =&\frac{\psi^{|\gamma|_0/2}}{s^{|\gamma|_0}|U_{\gamma}|^{1/2}}\exp\left\{\frac{1}{2\psi}m_{\gamma}^\top U_{\gamma}m_{\gamma}\right\}.
\end{align*}
\end{proof}

\subsubsection{The concavity of the quasi-log-likelihood}{\label{Sec:Concavity}}

Both the theory and practice of the quasi-posterior variable selection approach is more straightforward when the quasi-log-likelihood is concave. Concavity was one option for regularity condition A2.5, and further facilitates efficient methods to find $\tilde{\beta}_{\gamma}$ in \eqref{eq:laplace_approx}.
Proposition \ref{Prop:Concavity} provides conditions under which the quasi-log-likelihood is concave. 

\begin{proposition}[Concavity of log-quasi posterior ]
\label{Prop:Concavity}
    Under prior specification \eqref{Equ:spike_and_slab} and provided $X_{\gamma}$ has full rank and $\psi > 0$
    \begin{enumerate}
        \item[i)] $m_n(\beta_{\gamma}) := -nQ_n(y, X_{\gamma}; \beta_{\gamma}, \psi) - \log(\pi(\beta_{\gamma}))$ is convex in $\beta_{\gamma}$ if mean and variance function $\mu(\cdot)$ and $V(\cdot)$ satisfy
        \begin{align}
            \left(\frac{V^{\prime}(\mu(x_{i\gamma}^\top\beta_{\gamma}))}{V(\mu(x_{i\gamma}^\top\beta_{\gamma}))}\left(\mu^\prime(x_{i\gamma}^\top\beta_{\gamma})\right)^2 - \mu^{\prime\prime}(x_{i\gamma}^\top\beta_{\gamma})\right) = 0.
        \end{align}
        \item[ii)] If $\gamma^{\ast}\subseteq \gamma$ then ${E}_{F_0}\left[\nabla_{\beta_{\gamma}}^2m_n(y, X; \beta_{\gamma}^{\ast}, \psi)\right] = \frac{1}{\hat{\psi}}X_{\gamma}^\top D^{\ast} X_{\gamma}  + \frac{1}{s^2} I > 0$ where
        $D^{\ast}  = diag(d_1^{\ast} , \ldots, d_n^{\ast} )$ with $d_i^{\ast} = \frac{\mu^\prime(x_{i\gamma}^\top\beta_{\gamma}^{\ast})^2}{V(x_{i\gamma}^\top\beta_{\gamma}^{\ast})}$
    \end{enumerate}
\end{proposition}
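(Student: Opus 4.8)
The plan is to compute the Hessian of the quasi-log-likelihood directly and read off the stated conditions. First I would recall that $\nabla_{\beta_{\gamma}}\ell_{\psi}(y_i; x_{i\gamma}, \beta_{\gamma}) = \frac{y_i - \mu(\eta_i)}{\psi V(\mu(\eta_i))}\mu^{\prime}(\eta_i)\, x_{i\gamma}$, writing $\eta_i := x_{i\gamma}^\top\beta_{\gamma}$ for brevity, which follows by differentiating the integral in \eqref{Equ:QP_loss} via the fundamental theorem of calculus and the chain rule. Differentiating once more in $\beta_{\gamma}$ produces a rank-one term in $x_{i\gamma}x_{i\gamma}^\top$ with scalar coefficient $\frac{\partial}{\partial\eta}\!\left[\frac{(y_i-\mu(\eta))\mu^{\prime}(\eta)}{\psi V(\mu(\eta))}\right]$. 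Expanding this derivative with the quotient and product rules gives two groups of terms: one proportional to $\frac{\mu^{\prime}(\eta_i)^2}{\psi V(\mu(\eta_i))}$ (always contributing the ``right sign'' for convexity of $m_n$, since $-\nabla^2$ of this piece is positive semidefinite) and one proportional to $(y_i-\mu(\eta_i))$ times exactly the bracketed expression $\bigl(\tfrac{V^{\prime}(\mu(\eta_i))}{V(\mu(\eta_i))}\mu^{\prime}(\eta_i)^2 - \mu^{\prime\prime}(\eta_i)\bigr)$ appearing in part~(i). Under the hypothesis of (i) this second group vanishes identically in every $\beta_{\gamma}$, so that
\[
\nabla^2_{\beta_{\gamma}} m_n(\beta_{\gamma}) = \frac{1}{\psi}\sum_{i=1}^n \frac{\mu^{\prime}(\eta_i)^2}{V(\mu(\eta_i))}\, x_{i\gamma}x_{i\gamma}^\top \;-\; \nabla^2_{\beta_{\gamma}}\log\pi(\beta_{\gamma}).
\]
Since $V>0$ and $X_{\gamma}$ has full rank the first term is positive semidefinite, and for the prior term I would note that under \eqref{Equ:spike_and_slab}, conditionally on the active set $\gamma$, the slab is Gaussian so $-\nabla^2_{\beta_{\gamma}}\log\pi(\beta_{\gamma}\mid\gamma) = \frac{1}{s^2}I_{|\gamma|_0} \succ 0$; hence $m_n$ is (strictly) convex. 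The point mass on inactive coordinates plays no role because those coordinates are fixed at zero within model $\gamma$.

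For part~(ii) I would take expectations of the Hessian expression at $\beta_{\gamma}=\beta^{\ast}_{\gamma}$. The key observation is that under A1.1, $\mathbb{E}_{F_0}[y_i - \mu(x_i^\top\beta^{\ast})\mid x_i] = 0$ whenever $\gamma^{\ast}\subseteq\gamma$, because then $\mu(x_{i\gamma}^{\top}\beta^{\ast}_{\gamma}) = \mu(x_i^\top\beta^{\ast})$ (the extra included columns carry zero coefficients, so the linear predictor is unchanged). Therefore the ``second group'' of terms — the one multiplying $(y_i-\mu(\eta_i))$ — has zero conditional expectation regardless of whether the condition in (i) holds, and taking expectations kills it. What survives is exactly $\mathbb{E}_{F_0}[\nabla^2_{\beta_{\gamma}}m_n] = \frac{1}{\psi}X_{\gamma}^\top D^{\ast}X_{\gamma} + \frac{1}{s^2}I$ with $D^{\ast} = \mathrm{diag}(d_1^{\ast},\dots,d_n^{\ast})$, $d_i^{\ast} = \mu^{\prime}(x_{i\gamma}^\top\beta^{\ast}_{\gamma})^2 / V(\mu(x_{i\gamma}^\top\beta^{\ast}_{\gamma}))$, which is positive definite by the same full-rank and positivity arguments. (I would use $\hat{\psi}$ or $\psi$ consistently with the surrounding notation; the statement writes $\hat\psi$, so I would just substitute the estimated dispersion in place of $\psi$ at the end.)

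The main obstacle is purely bookkeeping: correctly organising the second derivative of the ratio $\tfrac{(y-\mu(\eta))\mu^{\prime}(\eta)}{\psi V(\mu(\eta))}$ so that the term controlled by the hypothesis of (i) is isolated cleanly, and making sure the vanishing-expectation argument in (ii) is applied at the right point ($\beta^{\ast}_{\gamma}$, using that the residual has conditional mean zero only when $\gamma\supseteq\gamma^{\ast}$). There is no analytic subtlety beyond careful application of the product and quotient rules and the tower property of conditional expectation; the only genuine modelling point to flag is that (ii) requires $\gamma^{\ast}\subseteq\gamma$ precisely so that $\beta^{\ast}_{\gamma}$ reproduces the true mean function and the residual term averages out.
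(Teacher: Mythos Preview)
Your proposal is correct and follows essentially the same route as the paper: compute the quasi-score via the fundamental theorem of calculus, differentiate once more to obtain $\nabla^2_{\beta_\gamma} m_n = \tfrac{1}{\psi}X_\gamma^\top D X_\gamma + \tfrac{1}{s^2}I$ with $d_i$ splitting into a positive piece and a residual-weighted piece that vanishes under the hypothesis of~(i), and for~(ii) use that $\gamma^\ast\subseteq\gamma$ forces $\mu(x_{i\gamma}^\top\beta^\ast_\gamma)=\mu(x_i^\top\beta^\ast)$ so the residual term has zero conditional mean. The paper's proof is identical in structure, differing only in that it writes out the mixed partial $\partial^2\ell_\psi/\partial\beta_{\gamma j}\partial\beta_{\gamma k}$ componentwise before collecting into the matrix form you state directly.
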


Note, Part ii) of Proposition \ref{Prop:Concavity} ensures A2.4 holds when the model is overspecified $\gamma^{\ast}\subseteq \gamma$. 

\begin{proof}

\textbf{Proof of Part i)}

First, remember that
\begin{align*}
    m_n(\beta_{\gamma}) := -nQ_n(y, X_{\gamma}; \beta_{\gamma}, \psi) -\log(\pi(\beta_{\gamma})) = -\sum_{i=1}^n \ell_{\psi}(y_i; x_{i\gamma}, \beta_{\gamma}) + \frac{\beta_{\gamma}^\top\beta_{\gamma}}{2s^2}.
\end{align*}
Now Eq. (6) of \cite{mccullagh1983quasi} represent the quasi-log-likelihood via
\begin{align*}
    \frac{\partial \ell_{\psi}(y_i; x_{i\gamma}, \beta_{\gamma})}{\partial \mu(x_{i\gamma}^\top\beta_{\gamma})} = \frac{y_i - \mu(x_{i\gamma}^\top\beta_{\gamma})}{\psi V(\mu(x_{i\gamma}^\top\beta_{\gamma}))},
\end{align*}
which allows us to use the chain rule to write
\begin{align*}
    \frac{\partial \ell_{\psi}(y_i; x_{i\gamma}, \beta_{\gamma})}{\partial \beta_{\gamma j}} &= \frac{\partial \ell_{\psi}(y_i; x_{i\gamma}, \beta)}{\partial \mu(x_{i\gamma}^\top\beta_{\gamma})}\frac{\partial \mu(x_{i\gamma}^\top\beta_{\gamma})}{\partial\beta_{\gamma j}}\\
    &= \frac{y_i - \mu(x_{i\gamma}^\top\beta_{\gamma})}{\psi V(\mu(x_{i\gamma}^\top\beta_{\gamma}))}\frac{\partial \mu(x_{i\gamma}^\top\beta_{\gamma})}{\partial x_{i\gamma}^\top\beta_{\gamma}}x_{i\gamma,j}\\
    \nabla_{\beta_{\gamma}} \ell_{\psi}(y_i; x_{i\gamma}, \beta_\gamma) &= \frac{y_i - \mu(x_{i\gamma}^\top\beta_{\gamma})}{\psi V(\mu(x_{i\gamma}^\top\beta_{\gamma}))}\mu^\prime(x_{i\gamma}^\top\beta_{\gamma})x_{i\gamma}
\end{align*}
and differentiating once more 
\begin{align*}
    \frac{\partial^2 \ell_{\psi}(y_i; x_{i\gamma}, \beta_{\gamma})}{\partial \beta_{\gamma j}\partial \beta_{\gamma k}} &= \frac{ - x_{i\gamma, k}}{\psi V(\mu(x_{i\gamma}^\top\beta_{\gamma}))}\left(\frac{\partial \mu(x_{i\gamma}^\top\beta_{\gamma})}{\partial x_{i\gamma}^\top\beta_{\gamma}}\right)^2x_{i\gamma,j} + \frac{x_{i\gamma,k}(y_i - \mu(x_{i\gamma}^\top\beta_{\gamma}))}{\psi V(\mu(x_{i\gamma}^\top\beta_{\gamma}))}\frac{\partial^2 \mu(x_{i\gamma}^\top\beta_{\gamma})}{\partial (x_{i\gamma}^\top\beta_{\gamma})^2}x_{i\gamma,j}\\
    &\quad- \frac{x_{i\gamma,k}(y_i - \mu\left(x_{i\gamma}^\top\beta_{\gamma}\right))V^{\prime}(\mu(x_{i\gamma}^\top\beta_{\gamma}))\mu^{\prime}(x_{i\gamma}^\top\beta_{\gamma})}{\psi V(\mu(x_{i\gamma}^\top\beta_{\gamma}))^2}\frac{\partial \mu(x_{i\gamma}^\top\beta_{\gamma})}{\partial x_{i\gamma}^\top\beta_{\gamma}}x_{i\gamma,j}.
\end{align*}

As a result, 
\begin{align}
    \nabla_{\beta_{\gamma}}^2 m_n(\beta_{\gamma}) = \frac{1}{\psi}X_{\gamma}^\top D X_{\gamma} + \frac{1}{s^2} I
\end{align}
where $D = diag(d_1, \ldots, d_n)$ with
\begin{align*}
    d_i = \frac{1}{V(\mu(x_{i\gamma}^\top\beta_{\gamma}))}\left(\mu^\prime(x_{i\gamma}^\top\beta_{\gamma})^2 + (y_i - \mu(x_{i\gamma}^\top\beta_{\gamma}))\left(\frac{V^{\prime}(\mu(x_{i\gamma}^\top\beta_{\gamma}))}{V(\mu(x_{i\gamma}^\top\beta_{\gamma}))}\left(\mu^\prime(x_{i\gamma}^\top\beta_{\gamma})\right)^2 - \mu^{\prime\prime}(x_{i\gamma}^\top\beta_{\gamma})\right)\right)
\end{align*}
and so under the assumption that $X_{\gamma}$ has full rank, sufficient conditions for positive definite $\nabla_{\beta_{\gamma}}^2 m_n(\beta_{\gamma})$ are that $d_i > 0$, $i = 1,\ldots, n$, which is guaranteed if
\begin{align}
    \left(\frac{V^{\prime}(\mu(x_{i\gamma}^\top\beta_{\gamma}))}{V(\mu(x_{i\gamma}^\top\beta_{\gamma}))}\left(\mu^\prime(x_{i\gamma}^\top\beta_{\gamma})\right)^2 - \mu^{\prime\prime}(x_{i\gamma}^\top\beta_{\gamma})\right) = 0
\end{align}
for all $i$.

\textbf{Proof of Part ii)}

Further, if $\gamma^{\ast} \subseteq \gamma$ then $\mu(x_i^\top\beta^{\ast}) = \mu(x_{i\gamma}^\top\beta^{\ast}_{\gamma})$ and therefore,
\begin{align*}
    {E}_{F_0}\left[\nabla_{\beta}^2m_n(\beta^{\ast}_{\gamma})\right] = \frac{1}{\psi}X_{\gamma}^\top D^{\ast} X_{\gamma}  + \frac{1}{s^2} I
\end{align*}
where
$D^{\ast}  = diag(d_1^{\ast} , \ldots, d_n^{\ast} )$ with
\begin{align*}
    d_i^{\ast}  &= \frac{1}{V(x_{i\gamma}^\top\beta_{\gamma}^{\ast})}{E}_{F_0}\left[\left(\mu^\prime(x_{i\gamma}^\top\beta_{\gamma}^{\ast})^2 + (y_i - \mu(x_{i\gamma}^\top\beta_{\gamma}^{\ast}))\left(\frac{V^{\prime}(x_{i\gamma}^\top\beta_{\gamma}^{\ast})}{V(x_{i\gamma}^\top\beta_{\gamma}^{\ast})}\left(\mu^\prime(x_{i\gamma}^\top\beta_{\gamma}^{\ast})\right)^2 - \mu^{\prime\prime}(x_{i\gamma}^\top\beta_{\gamma}^{\ast})\right)\right)\right]\\
    &= \frac{1}{V(x_{i\gamma}^\top\beta_{\gamma}^{\ast})}\left(\mu^\prime(x_{i\gamma}^\top\beta_{\gamma}^{\ast})^2 + (\mu(x_i^\top\beta^{\ast}) - \mu(x_{i\gamma}^\top\beta_{\gamma}^{\ast}))\left(\frac{V^{\prime}(x_{i\gamma}^\top\beta_{\gamma}^{\ast})}{V(x_{i\gamma}^\top\beta_{\gamma}^{\ast})}\left(\mu^\prime(x_{i\gamma}^\top\beta_{\gamma}^{\ast})\right)^2 - \mu^{\prime\prime}(x_{i\gamma}^\top\beta_{\gamma}^{\ast})\right)\right)\\
    &= \frac{\mu^\prime(x_{i\gamma}^\top\beta_{\gamma}^{\ast})^2}{V(x_{i\gamma}^\top\beta_{\gamma}^{\ast})} > 0
\end{align*}
\end{proof}

Note that Part ii) of Proposition \ref{Prop:Concavity} can be found in Section S3.1 of \cite{agnoletto2025bayesian}.

Corollary \ref{Cor:ConcavityCount} verifies the conditions of Proposition \ref{Prop:Concavity} for Examples \ref{Ex:LinearRegression} and \ref{Ex:Count}. 

\begin{corollary}[Concavity of the log-quasi posterior]
\label{Cor:ConcavityCount}
Under the specification of Examples \ref{Ex:LinearRegression} and \ref{Ex:Count} and prior \eqref{Equ:spike_and_slab}, $m_n(\beta_{\gamma}) := -nQ_n(y, X_{\gamma}; \beta_{\gamma}, \psi) - \log(\pi(\beta_{\gamma}))$ is convex in $\beta_{\gamma}$, provided $X_{\gamma}$ has full rank and $\psi > 0$.    
\end{corollary}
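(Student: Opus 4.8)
The plan is to invoke Proposition~\ref{Prop:Concavity}(i), which reduces convexity of $m_n(\beta_{\gamma}) := -nQ_n(y, X_{\gamma}; \beta_{\gamma}, \psi) - \log\pi(\beta_{\gamma})$ to verifying the pointwise identity
\[
\frac{V'(\mu(x_{i\gamma}^\top\beta_{\gamma}))}{V(\mu(x_{i\gamma}^\top\beta_{\gamma}))}\bigl(\mu'(x_{i\gamma}^\top\beta_{\gamma})\bigr)^2 - \mu''(x_{i\gamma}^\top\beta_{\gamma}) = 0
\]
for all $i$ and all $\beta_{\gamma}$. Since this identity involves only the fixed scalar functions $\mu$ and $V$ evaluated along the linear predictor $s = x_{i\gamma}^\top\beta_{\gamma}$, it suffices to check it for the two $(\mu, V)$ pairs of Examples~\ref{Ex:LinearRegression} and~\ref{Ex:Count}. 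Given the identity, the diagonal correction term in the Hessian computation of Proposition~\ref{Prop:Concavity} vanishes, leaving $\nabla^2_{\beta_{\gamma}} m_n(\beta_{\gamma}) = \tfrac{1}{\psi}X_{\gamma}^\top D X_{\gamma} + \tfrac{1}{s^2} I$ with $D = \mathrm{diag}(d_1,\dots,d_n)$ and $d_i = \mu'(s)^2/V(\mu(s))$; full column rank of $X_{\gamma}$ and $\psi > 0$ then yield positive definiteness and hence convexity.

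For Example~\ref{Ex:LinearRegression} I would substitute $\mu(s) = s$ and $V(t) = 1$, so that $\mu'(s) = 1$, $\mu''(s) = 0$, $V' \equiv 0$, and the left-hand side is $0\cdot 1 - 0 = 0$; here $d_i = 1$. For Example~\ref{Ex:Count} I would substitute $\mu(s) = \exp(s)$ and $V(t) = t$, so that $\mu'(s) = \mu''(s) = \exp(s)$ and $V'(t) = 1$; evaluating the variance terms at $t = \mu(s) = \exp(s)$ gives
\[
\frac{1}{\exp(s)}\bigl(\exp(s)\bigr)^2 - \exp(s) = \exp(s) - \exp(s) = 0,
\]
and $d_i = \exp(x_{i\gamma}^\top\beta_{\gamma}) > 0$. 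In both cases the hypothesis of Proposition~\ref{Prop:Concavity}(i) holds, and the claimed convexity of $m_n(\beta_{\gamma})$ follows at once.

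There is essentially no obstacle here: the corollary is a direct specialisation of Proposition~\ref{Prop:Concavity}, and the only content is the two elementary derivative checks above. The one detail worth making explicit is that, since $d_i > 0$ for every $i$ in both examples, the Hessian is in fact strictly positive definite whenever $X_{\gamma}$ has full column rank, so $m_n(\beta_{\gamma})$ is strictly convex — which is the property actually used for uniqueness of $\tilde{\beta}_{\gamma}$ in~\eqref{eq:laplace_approx} and for condition A2.5(ii).
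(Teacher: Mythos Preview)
Your proposal is correct and follows essentially the same approach as the paper: both reduce the corollary to Proposition~\ref{Prop:Concavity}(i) and verify the identity $\tfrac{V'(\mu)}{V(\mu)}(\mu')^2 - \mu'' = 0$ by direct substitution of $(\mu,V)$ for each of the two examples. Your write-up is slightly more explicit in spelling out the resulting form of $d_i$ and the strict positive definiteness of the Hessian, but the argument is the same.
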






\begin{proof}
    Under the specification of Example \ref{Ex:LinearRegression} we have that
    \begin{align*}
        \mu(x_{i\gamma}^\top\beta_{\gamma}) &= x_{i\gamma}^\top\beta_{\gamma}\\
        \Rightarrow \mu^{\prime}(x_{i\gamma}^\top\beta_{\gamma}) &= 1\\
        \Rightarrow \mu^{\prime\prime}(x_{i\gamma}^\top\beta_{\gamma}) &= 0\\
        V(\mu(x_{i\gamma}^\top\beta_{\gamma})) &= 1\\
        \Rightarrow V^{\prime}(\mu(x_{i\gamma}^\top\beta_{\gamma})) &= 0
    \end{align*}

    Further, 
    under the specification of Example \ref{Ex:Count} we have that
    \begin{align*}
        \mu(x_{i\gamma}^\top\beta_{\gamma}) &= \exp\left\{x_{i\gamma}^\top\beta_{\gamma}\right\}\\
        \Rightarrow \mu^{\prime}(x_{i\gamma}^\top\beta_{\gamma}) &= \exp\left\{x_{i\gamma}^\top\beta_{\gamma}\right\}\\
        \Rightarrow \mu^{\prime\prime}(x_{i\gamma}^\top\beta_{\gamma}) &= \exp\left\{x_{i\gamma}^\top\beta_{\gamma}\right\}\\
        V(\mu(x_{i\gamma}^\top\beta_{\gamma})) &= \mu(x_{i\gamma}^\top\beta_{\gamma}) = \exp\left\{x_{i\gamma}^\top\beta_{\gamma}\right\}\\
        \Rightarrow V^{\prime}(\mu(x_{i\gamma}^\top\beta_{\gamma})) &= 1
    \end{align*}
    
    As a result, in both of the above scenarios the mean and variance functions verify the sufficient conditions of Proposition \ref{Prop:Concavity} i) that 
    \begin{align*}
        \left(\frac{V^{\prime}(\mu(x_{i\gamma}^\top\beta_{\gamma}))}{V(\mu(x_{i\gamma}^\top\beta_{\gamma}))}\left(\mu^\prime(x_{i\gamma}^\top\beta_{\gamma})\right)^2 - \mu^{\prime\prime}(x_{i\gamma}^\top\beta_{\gamma})\right) = 0.
    \end{align*}
\end{proof}

\subsubsection{Theorem \ref{thm:chi_squared}}

We restate a result from \cite{mccullagh1983quasi} that established the asymptotic distribution of the difference between quasi-log-likelihoods when comparing an over-specified model, $\gamma$ with $\gamma^{\ast}\subset \gamma$, to the correct specification $\gamma^{\ast}$.

\begin{theorem}[Difference in the quasi-log-likelihood for nested models (Equation 11 \citep{mccullagh1983quasi})]
    Assume regularity conditions A1 and A2.1-5. Let $\gamma$ index a model that contains the true model $\gamma^{\ast}$ but contains additional necessary parameters i.e. $\gamma^{\ast}\subset \gamma$. Then
    \begin{equation*}
        2nQ_n(y, X_{\gamma}, \hat{\beta}_{\gamma}, \psi^{\ast}) - 2nQ_n(y, X_{\gamma^{\ast}}, \hat{\beta}_{\gamma^{\ast}}, \psi^{\ast}) \sim \chi^2_{|\gamma|_0 - |\gamma^{\ast}|_0} + O_p(1/\sqrt{n}).
    \end{equation*}
    Further, the $O_p(1/\sqrt{n})$ error term is unaffected by the insertion of the $\hat{\psi}$ in \eqref{Equ:psi_est} in place of $\psi^{\ast}$.
    \label{thm:chi_squared}
\end{theorem}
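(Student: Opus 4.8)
\textbf{Proof plan for Theorem~\ref{thm:chi_squared}.}

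The plan is to reduce the statement to a standard second-order expansion of the quasi-log-likelihood around the model-specific maximisers, exploiting the nested structure $\gamma^{\ast}\subset\gamma$. First I would recall from \cite{mccullagh1983quasi} (their Eq.~(6)) that the quasi-score $\nabla_{\beta_\gamma} nQ_n(y,X_\gamma;\beta_\gamma,\psi^{\ast})$ has mean zero under $F_0$ at $\beta^{\ast}_\gamma$ (since the mean function is correctly specified, by A1.1) and finite covariance (by A2.7), so that a central limit theorem applies to the normalised score $\frac{1}{\sqrt n}\nabla_{\beta_\gamma} nQ_n$. Crucially, because the variance function is also correctly specified, the covariance of the normalised quasi-score equals the limiting expected negative Hessian $H(F_0;\beta^{\ast}_\gamma,\psi^{\ast})$ --- this is the quasi-likelihood analogue of the information identity, and it is exactly why the dispersion parameter cancels in the Bartlett-type computation and why the limiting law is a clean $\chi^2$ rather than a weighted sum of $\chi^2_1$'s.

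The key steps, in order, are: (1) Taylor-expand $nQ_n(y,X_\gamma;\hat\beta_\gamma,\psi^{\ast})$ about $\beta^{\ast}_\gamma$ to second order, using A2.2 (bounded third derivatives on $B_\gamma$) to control the remainder and the $\sqrt n$-consistency of $\hat\beta_\gamma$ from Theorem~\ref{Thm:ConsistencyRR23_Osqrtn}, giving
\begin{equation*}
2nQ_n(y,X_\gamma;\hat\beta_\gamma,\psi^{\ast}) - 2nQ_n(y,X_\gamma;\beta^{\ast}_\gamma,\psi^{\ast}) = Z_{n,\gamma}^\top H(F_0;\beta^{\ast}_\gamma,\psi^{\ast})^{-1} Z_{n,\gamma} + O_p(n^{-1/2}),
\end{equation*}
where $Z_{n,\gamma} := \frac{1}{\sqrt n}\nabla_{\beta_\gamma} nQ_n(y,X_\gamma;\beta^{\ast}_\gamma,\psi^{\ast})$; (2) do the same for the submodel $\gamma^{\ast}$, noting that since $\gamma^{\ast}\subset\gamma$ we have $\mu(x_i^\top\beta^{\ast})=\mu(x_{i\gamma^{\ast}}^\top\beta^{\ast}_{\gamma^{\ast}})=\mu(x_{i\gamma}^\top\beta^{\ast}_\gamma)$, so the shared evaluation point $nQ_n(\cdot;\beta^{\ast},\psi^{\ast})$ drops out when I subtract; (3) identify $Z_{n,\gamma}\Rightarrow N(0,H_\gamma)$ with $H_\gamma := H(F_0;\beta^{\ast}_\gamma,\psi^{\ast})$ positive definite by A2.4, and observe that $Z_{n,\gamma^{\ast}}$ is the subvector of $Z_{n,\gamma}$ corresponding to the $\gamma^{\ast}$-coordinates; (4) conclude that the difference converges to $W^\top H_\gamma^{-1} W - W_{\gamma^{\ast}}^\top H_{\gamma^{\ast}}^{-1} W_{\gamma^{\ast}}$ for $W\sim N(0,H_\gamma)$, which is the classical quadratic-form decomposition that equals (in distribution) a $\chi^2_{|\gamma|_0 - |\gamma^{\ast}|_0}$ random variable --- the projection-of-a-Gaussian argument underlying Wilks' theorem. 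For the final sentence, I would note that replacing $\psi^{\ast}$ by $\hat\psi$ multiplies the whole difference by $\psi^{\ast}/\hat\psi = 1 + O_p(n^{-1/2})$ (by Proposition~\ref{Prop:psi_hat_Osqrtn}), and since the leading term is $O_p(1)$, the perturbation is absorbed into the $O_p(n^{-1/2})$ remainder.

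The main obstacle I expect is step~(4): verifying that the difference of the two quadratic forms $W^\top H_\gamma^{-1} W - W_{\gamma^{\ast}}^\top H_{\gamma^{\ast}}^{-1} W_{\gamma^{\ast}}$ is genuinely $\chi^2$ with the right degrees of freedom requires the ``information identity'' $\mathrm{Cov}(Z_{n,\gamma}) \to H_\gamma$, i.e.\ that the asymptotic covariance of the normalised quasi-score coincides with the expected negative Hessian. For a genuine likelihood this is automatic, but for the quasi-log-likelihood it holds precisely because both the mean \emph{and} the variance function are correctly specified (A1.1): then $\mathrm{Var}_{F_0}[\ell'_{\psi^{\ast}}] = \psi^{\ast-2}\,\mathbb{E}_{F_0}[(y-\mu)^2]\,\mu'^2/V^2 = \psi^{\ast-1}\mu'^2/V = \mathbb{E}_{F_0}[-\ell''_{\psi^{\ast}}]$ at $\beta^{\ast}$. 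This is the one place where the result would fail under heteroscedastic misspecification, and it is also the content of \cite{mccullagh1983quasi}'s original argument; I would cite that work for the detailed Bartlett-identity verification rather than reprove it. A secondary technical point is the uniformity needed to push the Taylor remainder to $O_p(n^{-1/2})$ uniformly over a $\sqrt n$-neighbourhood of $\beta^{\ast}_\gamma$, which is exactly what A2.2 (uniformly bounded third derivatives) together with the consistency theorems are designed to supply.
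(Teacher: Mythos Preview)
The paper does not supply its own proof of this result: the theorem is explicitly introduced as a restatement of Equation~11 in \cite{mccullagh1983quasi}, and the surrounding text merely cites that reference and adds a remark contrasting the $O_p(1/\sqrt{n})$ remainder with the $O_p(1/n)$ rate for a genuine likelihood. Your plan is therefore not in competition with anything in the paper --- you have sketched a proof where the paper supplies only a citation.

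That said, your sketch is correct and is essentially McCullagh's original argument. The identification of the quasi-likelihood second-Bartlett identity (score covariance equals expected negative Hessian at $\beta^{\ast}$, precisely because both mean \emph{and} variance functions are correctly specified under A1.1) as the crux of step~(4) is exactly right, and the projection calculation for the difference of quadratic forms goes through cleanly once that identity is in place. Your handling of the substitution $\psi^{\ast}\mapsto\hat\psi$ via the multiplicative perturbation $\psi^{\ast}/\hat\psi = 1 + O_p(n^{-1/2})$ acting on an $O_p(1)$ leading term is also correct. One minor bookkeeping point: you invoke A2.7 and Proposition~\ref{Prop:psi_hat_Osqrtn} for the $\hat\psi$ claim, but the theorem as stated assumes only A1 and A2.1--5; the paper is equally silent on this (it does not prove the $\hat\psi$ clause either), and in any case Theorem~\ref{thm:chi_squared} is only ever applied inside the proof of Theorem~\ref{thm:variable_selection_quasi_posterior_RR23}, where the full set A2 is in force.
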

Note the difference to what would be obtained were correctly specified log-likelihoods used in place of is that the error term would be  $O_p(1/n)$ \citep{mccullagh1983quasi}. 

\subsection{Proof of Theorem \ref{Thm:ConsistencyRR23_Osqrtn}}

Before proving Theorem \ref{Thm:ConsistencyRR23_Osqrtn} it is useful to first establish that $\left|\left|\hat{\beta}_{\gamma} - \beta^{\ast}_{\gamma}\right|\right| = o_p(1)$, i.e. $\hat{\beta}_{\gamma}$ is a consistent estimate of  $\beta^{\ast}_{\gamma}$, for all $\gamma$.

\begin{theorem}[Consistency of $\hat{\beta}_{\gamma}$]
    Assume A1.1 and A2.1-5 and fix $\gamma\in \{0, 1\}^p$. Then $\hat{\beta}_{\gamma}\overset{P}{\longrightarrow}\beta_{\gamma}^{\ast}$ as $n\rightarrow\infty$.
    \label{Thm:ConsistencyMiller}
\end{theorem}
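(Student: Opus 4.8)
The plan is to obtain Theorem~\ref{Thm:ConsistencyMiller} as a direct application of the general consistency result for maximisers of criterion functions, Theorem~5 of \cite{miller2021asymptotic}, taking the empirical criterion to be $\beta_\gamma \mapsto Q_n(y, X_\gamma; \beta_\gamma, \psi)$ and its population counterpart $\beta_\gamma \mapsto Q(F_0; \beta_\gamma, \psi)$. Since $\hat{\beta}_\gamma$ and $\beta^{\ast}_\gamma$ are by construction the argmaxes of these two functions (existence of $\beta^{\ast}_\gamma$ in $B_\gamma$ being guaranteed by A2.1), it suffices to verify the hypotheses of that theorem: (a) pointwise-in-$\beta_\gamma$ convergence $Q_n \to Q$ in probability; (b) the limiting criterion is uniquely and well-separatedly maximised at $\beta^{\ast}_\gamma$; and (c) the empirical maximiser cannot escape to the complement of a suitable compact set.

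For (a), the summands $\ell_{\psi}(y_i; x_{i\gamma}, \beta_\gamma)$ are independent (A1.1) with means $Q(F_0; x_{i\gamma}, \beta_\gamma, \psi)$, and A2.3 bounds $\tfrac{1}{n^2}\sum_i \mathrm{Var}_{F_0}[\ell_{\psi}(y_i; x_{i\gamma}, \beta_\gamma)]$ uniformly in $n$. Kolmogorov's criterion for the strong law of large numbers for independent, non-identically-distributed variables — in the form of Proposition~6.13 (Chapter~3) of \cite{ccinlar2011probability}, as noted in Section~\ref{sec:regularity} — then yields $Q_n(y, X_\gamma; \beta_\gamma, \psi) \to Q(F_0; \beta_\gamma, \psi)$ almost surely, hence in probability, for each fixed $\beta_\gamma \in B_\gamma$; A2.3 also guarantees the limit is finite.

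For (b) and (c) we invoke A2.5. Under A2.5(i), the first clause is exactly the statement that $\beta^{\ast}_\gamma$ is the unique maximiser of $Q(F_0; \cdot, \psi)$ over the compact set $B^{\ast}_\gamma$, so that separation follows from continuity (A2.2) on a compact set, while the second clause, $\limsup_n \sup_{\beta_\gamma \notin B^{\ast}_\gamma} Q_n(y, X_\gamma; \beta_\gamma, \psi) < Q(F_0; \beta^{\ast}_\gamma, \psi)$, is precisely the no-escape condition. Under the stronger alternative A2.5(ii), $Q_n$ is concave, so the pointwise convergence from step (a) automatically upgrades to uniform convergence on compact subsets of $B_\gamma$ by the standard convex-analysis fact that pointwise limits of concave functions are locally uniform; the limit $Q(F_0;\cdot,\psi)$ is concave with $\nabla_{\beta} Q(F_0; \beta^{\ast}_\gamma, \psi) = 0$, and together with the positive definiteness of $H(F_0;\beta^{\ast}_\gamma,\psi)$ in A2.4 this makes $\beta^{\ast}_\gamma$ the unique, well-separated interior maximiser, with concavity again ruling out escape. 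In either case the hypotheses of Theorem~5 of \cite{miller2021asymptotic} hold, giving $\hat{\beta}_\gamma \overset{P}{\longrightarrow} \beta^{\ast}_\gamma$.

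The main obstacle is the under-specified regime $\gamma^{\ast}\not\subseteq\gamma$: there the criterion $Q(F_0;\cdot,\psi)$ is a genuinely misspecified quasi-objective, so nothing a priori forces its maximiser $\beta^{\ast}_\gamma$ to exist in the interior of $B_\gamma$, be unique, or be well separated — this is exactly what A2.1 and A2.5 are postulated to deliver, and the delicate point is to check that the population objective is indeed well behaved in the non-concave case, and that no mass of $\hat{\beta}_\gamma$ drifts to $\mathbb{R}^{|\gamma|_0}\setminus B^{\ast}_\gamma$. By contrast, over-specified models $\gamma^{\ast}\subseteq\gamma$ are straightforward, since then $\beta^{\ast}_\gamma$ is simply the corresponding sub-vector of $\beta^{\ast}$, concavity and the identification condition hold automatically (Proposition~\ref{Prop:Concavity} and Corollary~\ref{Cor:ConcavityCount}), and the argument reduces to the classical one of \cite{mccullagh1983quasi}.
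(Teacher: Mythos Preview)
Your proposal is correct and follows essentially the same strategy as the paper's proof: reduce the claim to Theorem~5 of \cite{miller2021asymptotic} and verify its hypotheses from A1.1 and A2.1--5, invoking Proposition~6.13 of \cite{ccinlar2011probability} for the pointwise law of large numbers and splitting on the two alternatives in A2.5. The only minor difference is that the paper enumerates Miller's smoothness conditions (continuous and uniformly bounded third derivatives from A2.2, and the positive-definite limiting Hessian from A2.4) more explicitly than your summary (a)--(c), but the substance is identical.
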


\begin{proof}
Theorem 5 of \cite{miller2021asymptotic} provide generally applicable sufficient conditions under which the minimiser of an in-sample \textit{loss function} $f_n(\cdot)$ converges to the minimiser of the expectation of $f_n(\cdot)$. This proof therefore proceeds by establishing that conditions A1.1 and A2.1-5 are sufficient for the conditions of Theorem 5 of \cite{miller2021asymptotic} for $f_n(\cdot) = - Q_n(y, X_{\gamma}, \cdot, \psi)$ with $\psi > 0$ (as the value of $\psi$ has no influence on $\hat{\beta}_{\gamma}$ or $\beta_{\gamma}^{\ast}$).

Theorem 5 of \cite{miller2021asymptotic} requires that, i) $\beta_{\gamma}^{\ast}\in B_{\gamma}$ where $B_{\gamma}\subseteq \Gamma\subseteq \mathbb{R}^{|\gamma|_0}$ is open and bounded, as assumed in A2.1, 
ii) $Q_n(y, X_{\gamma}, \cdot, \psi)$ has continuous third derivatives on $B_{\gamma}$, as assumed in A2.2,
iii) $Q_n(y, X_{\gamma}, \beta_{\gamma}, \psi)\rightarrow Q(F_0, \beta_{\gamma}, \psi)$ pointwise  which \cite{agnoletto2025bayesian} discusses that A2.3 is sufficient for the weak law of large numbers to imply (see Proposition 6.13 \citep{ccinlar2011probability}), 
iv) $Q^{''}(F_0, \beta^{\ast}_{\gamma}, \psi) = H(F_0; \beta_{\gamma}^{\ast}, \psi)$ is positive definite, as assumed by A2.4,  
v) $Q^{'''}(F_0, \beta^{\ast}_{\gamma}, \psi)$ is uniformly bounded on $B_{\gamma}$, as assumed in A2.2, and
vi) either
\begin{itemize}
    \item[(1)] $Q(F_0, \beta_{\gamma}, \psi) < Q(F_0, \beta_{\gamma}^{\ast}, \psi)$ for all $\beta_{\gamma}\in B_{\gamma}^{\ast}\setminus \{\beta_{\gamma}^{\ast}\}$ and $\lim\sup_n\sup_{\beta_{\gamma}\in\Gamma_{\gamma}\setminus B_{\gamma}^{\ast}}Q_n(y, X_{\gamma}, \beta_{\gamma}, \psi) < Q_n(y, X_{\gamma}, \beta_{\gamma}^{\ast}, \psi)$ for some compact $B_{\gamma}^{\ast}\subseteq B_{\gamma}$ with $\beta_{\gamma}^{\ast}$ in the interior of $B_{\gamma}^{\ast}$
    \item[(2)] $-Q_n(y, X_{\gamma}, \beta_{\gamma}, \psi)$ is convex i.e. $Q_n(y, X_{\gamma}, \beta_{\gamma}, \psi)$ is concave and $Q^{\prime}(F_0, \beta_{\gamma}, \psi) = 0$
\end{itemize}
which is assumed by A2.5. Therefore Theorem 5 of \cite{miller2021asymptotic} proves that under conditions A1.1 and A2.1-5, $\hat{\beta}_{\gamma}\overset{P}{\longrightarrow}\beta_{\gamma}^{\ast}$ as $n\rightarrow\infty$.
\end{proof}

We are now ready to prove Theorem \ref{Thm:ConsistencyRR23_Osqrtn} that $\left|\left|\hat{\beta}_{\gamma} - \beta^{\ast}_{\gamma}\right|\right| = O_p(1/\sqrt{n})$. This proof follows the same structure as the proof of Proposition 2 in \cite{rossell2023additive}.

\begin{proof}

Theorem 5.23 of \cite{van2000asymptotic} provides sufficient conditions under which $\sqrt{n}(\hat{\beta}_{\gamma} - \beta^{\ast}_{\gamma})$ is asymptomatically normal which is sufficient to show that $\left|\left|\hat{\beta}_{\gamma} - \beta^{\ast}_{\gamma}\right|\right| = O_p(1/\sqrt{n})$. As a result, all that is required is to establish that these conditions are satisfied by A1 and A2.1-6.


The conditions of Theorem 5.23 of \cite{van2000asymptotic} are i) that $\ell_{\psi}(y; x_{\gamma}, \beta_{\gamma})$ is differentiable at $\beta_{\gamma}^{\ast}$ for all $y$ and $x$ which is implied by A2.2, ii) that for every $\beta^{(1)}_{\gamma}$ and $\beta^{(2)}_{\gamma}$ in a neighbourhood of $\beta_{\gamma}^{\ast}$ that 
\begin{align*}
    \left|\ell_{\psi}(y; x_{\gamma}, \beta^{(1)}_{\gamma}) - \ell_{\psi}(y; x_{\gamma}, \beta^{(2)}_{\gamma})\right| \leq c_x(y)\left|\left|\beta^{(1)}_{\gamma} - \beta^{(2)}_{\gamma}\right|\right|_2
\end{align*}
with ${E}_{F_0}\left[c_x(y)^2\right] < \infty$ for all $x$. This is the same as requiring that $\ell_{\psi}(y; x_{\gamma}, \beta_{\gamma})$ is equi-Lipschitz in the region of $\beta_{\gamma}^{\ast}$ which Theorem 7 of \cite{miller2021asymptotic} showed to be the case under A2.2 and A2.6, 
iii) that $H(F_0; \beta_{\gamma}^{\ast}, \psi)$ exists and is positive definite which is assumed by A2.4, iv) $Q_n(y, X, \hat{\beta}_{\gamma}, \psi) \geq \sup_{\beta_{\gamma}} Q_n(y, X, \beta_{\gamma}, \psi) - o_p(1/n)$ which holds by the definition of $\hat{\beta}_{\gamma}$, and v) $||\hat{\beta}_{\gamma} -\beta_{\gamma}^{\ast}||_2 = o_p(1)$ which was established in Theorem \ref{Thm:ConsistencyMiller} under A1.1 and A2.1-5. 

Thus $\left|\left|\hat{\beta}_{\gamma} - \beta^{\ast}_{\gamma}\right|\right| = O_p(1/\sqrt{n})$ follows from Theorem 5.23 of \cite{van2000asymptotic} and Theorem \ref{Thm:ConsistencyMiller}.

\end{proof}

\subsection{Proof of Theorem \ref{thm:variable_selection_quasi_posterior_RR23}}

Before proving Theorem \ref{thm:variable_selection_quasi_posterior_RR23} it is important to first establish that $\hat{\psi} - \psi^{\ast} = O_p(1/\sqrt{n})$, i.e. that $\hat{\psi}$ is $\sqrt{n}$-consistent for $\psi^{\ast}$.

\begin{proposition}[$\sqrt{n}$-consistency of $\hat{\psi}$]
\label{Prop:psi_hat_Osqrtn}
Assume A1.1 and A2.1-7. Then $\hat{\psi} - \psi^{\ast} = O_p(1/\sqrt{n})$.
\end{proposition}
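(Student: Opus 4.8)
The plan is to view $\hat\psi$ as a smooth functional of the full--model quasi--likelihood maximiser $\hat\beta$ and of an empirical average, and to propagate the $\sqrt n$--rates already established. I would write $r_i(\beta) := \bigl(y_i - \mu(x_i^\top\beta)\bigr)^2 / V\bigl(\mu(x_i^\top\beta)\bigr)$, so that $\hat\psi = \tfrac{n}{n-p}\cdot\tfrac1n\sum_{i=1}^n r_i(\hat\beta)$. Since $\tfrac{n}{n-p} = 1 + O(1/n)$ and the argument below will show $\tfrac1n\sum_i r_i(\hat\beta) = O_p(1)$, the effect of this prefactor is $O_p(1/n) = o_p(1/\sqrt n)$ and can be discarded; it therefore suffices to prove $\tfrac1n\sum_{i=1}^n r_i(\hat\beta) - \psi^\ast = O_p(1/\sqrt n)$.

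First I would split this difference into a fixed--parameter part and an estimation part, $\bigl(\tfrac1n\sum_i r_i(\beta^\ast) - \psi^\ast\bigr) + \tfrac1n\sum_i\bigl(r_i(\hat\beta) - r_i(\beta^\ast)\bigr)$, which I will call $(\mathrm{I})$ and $(\mathrm{II})$. For $(\mathrm{I})$, A1.1 gives $\mathbb{E}_{F_0}[r_i(\beta^\ast)] = \mathrm{Var}_{F_0}[y_i\mid x_i]/V(\mu(x_i^\top\beta^\ast)) = \psi^\ast$ for every $i$, so $(\mathrm{I})$ is a mean--zero average of independent terms with finite averaged variance $\tfrac1n\sum_i\mathrm{Var}_{F_0}[r_i(\beta^\ast)]$ by A2.7, and Chebyshev's inequality gives $(\mathrm{I}) = O_p(1/\sqrt n)$. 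For $(\mathrm{II})$, a first--order mean--value expansion yields $r_i(\hat\beta) - r_i(\beta^\ast) = \nabla_\beta r_i(\bar\beta_i)^\top(\hat\beta - \beta^\ast)$ for some $\bar\beta_i$ on the segment between $\hat\beta$ and $\beta^\ast$, so on the event $\{\hat\beta\in B\}$ --- which has probability tending to one by Theorem~\ref{Thm:ConsistencyMiller}, with $B$ a fixed bounded neighbourhood of $\beta^\ast$ --- one has $|(\mathrm{II})| \le \bigl(\tfrac1n\sum_i \sup_{\beta\in B}\lVert\nabla_\beta r_i(\beta)\rVert\bigr)\,\lVert\hat\beta - \beta^\ast\rVert_2$. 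Theorem~\ref{Thm:ConsistencyRR23_Osqrtn} (applied to the full model) gives $\lVert\hat\beta - \beta^\ast\rVert_2 = O_p(1/\sqrt n)$, while the smoothness and boundedness of $\mu,V$ on $B$ (A2.2), combined with the averaged second--moment content of A1.1 and A2.7, makes $\tfrac1n\sum_i\sup_{\beta\in B}\lVert\nabla_\beta r_i(\beta)\rVert$ an $O_p(1)$ quantity (its expectation $\tfrac1n\sum_i\mathbb{E}_{F_0}[\sup_{\beta\in B}\lVert\nabla_\beta r_i(\beta)\rVert]$ is controlled by the averaged second moments of $y_i$, and Markov's inequality then handles the average). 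Hence $(\mathrm{II}) = O_p(1/\sqrt n)$, and combining $(\mathrm{I})$, $(\mathrm{II})$ and the prefactor gives $\hat\psi = \psi^\ast + O_p(1/\sqrt n)$.

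\textbf{Main obstacle.} The hard part will be step $(\mathrm{II})$: because $r_i$ is quadratic in $y_i$, its gradient involves $y_i$ and $y_i^2$, so asserting $\tfrac1n\sum_i\sup_{\beta\in B}\lVert\nabla_\beta r_i(\beta)\rVert = O_p(1)$ requires a genuine integrability bound --- essentially finiteness of suitable (averaged) moments of $y_i$, which follows from A1.1 and A2.7 provided $V$ is bounded away from $0$ and $\infty$ on $\mu(B)$ and the mean/variance functions are bounded along the design. The cleanest way to make this airtight, mirroring condition (ii) in the proof of Theorem~\ref{Thm:ConsistencyRR23_Osqrtn}, is to posit a square--integrable envelope $c_x(y)$ for the family $\{\nabla_\beta r_i(\beta):\beta\in B\}$; with such an envelope in hand, $(\mathrm{II})$ drops out immediately from the $\sqrt n$--consistency of $\hat\beta$ and no further moment bookkeeping is needed.
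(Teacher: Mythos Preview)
Your proposal is correct and follows essentially the same decomposition as the paper: split $\hat\psi - \psi^\ast$ into a fixed--parameter fluctuation term handled by Chebyshev via A1.1 and A2.7 (your term $(\mathrm{I})$, the paper's $\tilde R_n(\beta^\ast) - \psi^\ast$), a plug--in error term controlled by the $\sqrt n$--consistency of $\hat\beta$ from Theorem~\ref{Thm:ConsistencyRR23_Osqrtn} (your term $(\mathrm{II})$, the paper's $R_n(\hat\beta) - R_n(\beta^\ast)$), and the $n/(n-p)$ correction. The only substantive difference is that for $(\mathrm{II})$ the paper simply invokes continuity of $\mu,V$ and the continuous mapping theorem, whereas your mean--value expansion with an explicit envelope on $\nabla_\beta r_i$ is the more careful way to actually transfer the $O_p(1/\sqrt n)$ rate --- the continuous mapping theorem alone does not deliver rates, so your version of this step is in fact more rigorous than the paper's.
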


\begin{proof}
Firstly, define
\begin{align*}
    R_n(\beta) &= \frac{1}{n-p}\sum_{i=1}^n\frac{\left(y_i - \mu(x_i^\top\beta)\right)^2}{V(\mu(x_i^\top\beta))}\\
    \tilde{R}_n(\beta) &= \frac{1}{n}\sum_{i=1}^n\frac{\left(y_i - \mu(x_i^\top\beta)\right)^2}{V(\mu(x_i^\top\beta))}
\end{align*}
and recall that $\hat{\psi} = R_n(\hat{\beta})$ and 
therefore, the result requires that $R_n(\hat{\beta}) - \psi^{\ast} = O_p(1/\sqrt{n})$.

Firstly, we add and subtract $R_n(\beta^{\ast})$ so that
\begin{align*}
    R_n(\hat{\beta}) - \psi^{\ast} &= R_n(\hat{\beta}) - R_n(\beta^{\ast}) + R_n(\beta^{\ast}) - \psi^{\ast},
\end{align*}
and separately show that the first two and last two terms are both $O_p(1/\sqrt{n})$. 

For the first two terms recall that Theorem \ref{Thm:ConsistencyRR23_Osqrtn} and A1.1 and A2.1-6 provide that $\left|\left|\hat{\beta} - \beta^{\ast}\right|\right| = O_p(1/\sqrt{n})$ and therefore the continuity of $\mu(\cdot)$ and $V(\cdot)$, which is implied by A2.2, provides that $R_n(\hat{\beta}) - R_n(\beta^{\ast}) = O_p(1/\sqrt{n})$ via the continuous mapping theorem.

For the second two terms we further add and subtract $\tilde{R}_n(\beta^{\ast})$ so that we can write 
\begin{align*}
    R_n(\beta^{\ast})  - \psi^{\ast} = R_n(\beta^{\ast}) - \tilde{R}_n(\beta^{\ast}) + \tilde{R}_n(\beta^{\ast}) - \psi^{\ast}.
\end{align*}

Then, under A1.1 , 
     \begin{align*}
         {E}_{F_0}\left[\sqrt{n}\left(\tilde{R}_n(\beta^{\ast}) - \psi^{\ast}\right)\right] &= 0\\
         var_{F_0}\left[\sqrt{n}\left(\tilde{R}_n(\beta^{\ast}) - \psi^{\ast}\right)\right] &= var_{F_0}\left[\frac{1}{\sqrt{n}}\sum_{i=1}^n\frac{\left(y_i - \mu(x_i^\top\beta^{\ast})\right)^2}{V(\mu(x_i^\top\beta^{\ast}))}\right]\\
         &= \frac{1}{n}\sum_{i=1}^n var_{F_0}\left[\frac{\left(y_i - \mu(x_i^\top\beta^{\ast})\right)^2}{V(\mu(x_i^\top\beta^{\ast}))}\right]
     \end{align*}
     where $\frac{1}{n}\sum_{i=1}^n var_{F_0}\left[\frac{\left(y_i - \mu(x_i^\top\beta^{\ast})\right)^2}{V(\mu(x_i^\top\beta^{\ast}))}\right] < \infty$ by A2.7. Then we can apply Chebyshev's inequality to  $A_n = \sqrt{n}\left(\tilde{R}_n(\beta^{\ast}) - \psi^{\ast}\right)$ which provides that 
     \begin{align*}
         P\left(\sqrt{n}\left(\tilde{R}_n(\beta^{\ast}) - \psi^{\ast}\right)\geq M\right)\leq \frac{\frac{1}{n}\sum_{i=1}^n var_{F_0}\left[\frac{\left(y_i - \mu(x_i^\top\beta^{\ast})\right)^2}{V(\mu(x_i^\top\beta^{\ast}))}\right]}{M^2}
     \end{align*}
     which is sufficient to prove that $\tilde{R}_n(\beta^{\ast}) - \psi^{\ast} = O_p(1/\sqrt{n})$.
Lastly, 
\begin{align*}
    \sqrt{n}\left(R_n(\beta^{\ast}) - \tilde{R}_n(\beta^{\ast})\right) &= \frac{\sqrt{n}}{n-p}\sum_{i=1}^n\frac{y_i - \mu(x_i^\top\beta^{\ast})}{V(\mu(x_i^\top\beta^{\ast}))} - \frac{\sqrt{n}}{n}\sum_{i=1}^n\frac{y_i - \mu(x_i^\top\beta^{\ast})}{V(\mu(x_i^\top\beta^{\ast}))}\\
    &= \frac{\sqrt{n}}{n(n-p)}\sum_{i=1}^n\frac{n\left(y_i - \mu(x_i^\top\beta^{\ast})\right) - (n-p)\left(y_i - \mu(x_i^\top\beta^{\ast})\right)}{V(\mu(x_i^\top\beta^{\ast}))}\\
    &= \frac{1}{\sqrt{n}(n-p)}\sum_{i=1}^n\frac{p\left(y_i - \mu(x_i^\top\beta^{\ast})\right)}{V(\mu(x_i^\top\beta^{\ast}))}\\
    &= O_p(1/\sqrt{n}),
\end{align*}
as $p$ is fixed. Combining the three results above is sufficient to establish that $R_n(\hat{\beta}) - R(\beta^{\ast}) = O_p(1/\sqrt{n})$ as required.
\end{proof}

We are now ready to prove Theorem \ref{thm:variable_selection_quasi_posterior_RR23}. 
This proof follows the structure of the proof of Proposition 3 in \cite{rossell2023additive}.

\begin{proof}
    We aim to characterise the asymptotic behavior of Laplace-approximated Bayes factors
    \begin{align}
        \log\left(\tilde{B}^{\textsc{LA}}_{\gamma\gamma^{\ast}}\right) &:= \log\left(\frac{\tilde{f}^{\textsc{LA}}(y\mid X, \gamma; \hat{\psi})}{\tilde{f}^{\textsc{LA}}(y\mid X, \gamma^{\ast}; \hat{\psi})}\right) = \frac{|\gamma|_0 - |\gamma^{\ast}|_0}{2}\log(2\pi) + T_1 + \log(T_2) + \log(T_3)\label{Equ:BayesFactor_expanded}\\
        T_1 &= nQ_n(y, X, \tilde{\beta}_{\gamma}, \hat{\psi}) - nQ_n(y, X, \tilde{\beta}_{\gamma^{\ast}}, \hat{\psi})\nonumber\\
        T_2 &= \frac{\pi(\tilde{\beta}_{\gamma}\mid \gamma)}{\pi(\tilde{\beta}_{\gamma^{\ast}}\mid \gamma^{\ast})}\nonumber\\
        T_3 &= \frac{\left|nH_n(y; X, \tilde{\beta}_{\gamma^{\ast}}, \hat{\psi}) - \nabla^2_{\beta_{\gamma}}\log \pi(\tilde{\beta}_{\gamma^{\ast}})\right|^{1/2}}{\left|nH_n(y; X, \tilde{\beta}_{\gamma}, \hat{\psi}) - \nabla^2_{\beta_{\gamma}}\log \pi(\tilde{\beta}_{\gamma})\right|^{1/2}}.\nonumber
    \end{align}
    We will characterize each term in \eqref{Equ:BayesFactor_expanded} individually, then combine the results. 
    
    
    First, $(2\pi)^{\frac{|\gamma|_0 - |\gamma^{\ast}|_0}{2}}$ is constant since $|\gamma|_0$ and $|\gamma^{\ast}|_0$ are fixed by assumption. Now, note that
    \begin{align*}
        T_3 = n^{\frac{|\gamma^{\ast}|_0 - |\gamma|_0}{2}}\frac{\left|H_n(y; X, \tilde{\beta}_{\gamma^{\ast}}, \hat{\psi}) - \frac{1}{n}\nabla^2_{\beta_{\gamma}}\log \pi(\tilde{\beta}_{\gamma^{\ast}})\right|^{1/2}}{\left|H_n(y; X, \tilde{\beta}_{\gamma}, \hat{\psi}) - \frac{1}{n}\nabla^2_{\beta_{\gamma}}\log \pi(\tilde{\beta}_{\gamma})\right|^{1/2}}
    \end{align*}
    and by Theorem \ref{Thm:ConsistencyRR23_Osqrtn} and A1.2  we have that $\left|\left|\tilde{\beta}_{\gamma^{\ast}} - \beta^{\ast}_{\gamma^{\ast}}\right|\right|_2 = O_p(1/\sqrt{n})$ and $\left|\left|\tilde{\beta}_{\gamma}-\beta^{\ast}_{\gamma}\right|\right|_2 = O_p(1/\sqrt{n})$ and by Proposition \ref{Prop:psi_hat_Osqrtn} we have that $\left|\hat{\psi} - \psi^{\ast}\right| = O_p(1/\sqrt{n})$. 
    As a result,
    \begin{align*}
        &\left|\left|H_n(y; X, \tilde{\beta}_{\gamma}, \hat{\psi}) - \frac{1}{n}\nabla^2_{\beta_{\gamma}}\log \pi(\tilde{\beta}_{\gamma}) - H(F_0; \beta^{\ast}_{\gamma}, \psi^{\ast})\right|\right|_2\\
        =&\left|\left|\frac{\psi^{\ast}}{\hat{\psi}}H_n(y; X, \tilde{\beta}_{\gamma}, \psi^{\ast}) - \frac{1}{n}\nabla^2_{\beta_{\gamma}}\log \pi(\tilde{\beta}_{\gamma}) - H(F_0; \beta^{\ast}_{\gamma}, \psi^{\ast})\right|\right|_2\\
        \leq&\left|\left|\frac{\psi^{\ast}}{\hat{\psi}}H_n(y; X, \tilde{\beta}_{\gamma}, \psi^{\ast}) - H(F_0; \beta^{\ast}_{\gamma}, \psi^{\ast})\right|\right|_2 + \left|\left|\frac{1}{n}\nabla^2_{\beta_{\gamma}}\log \pi(\tilde{\beta}_{\gamma})\right|\right|_2\\
        =&\left|\left|\frac{\psi^{\ast}}{\hat{\psi}}H_n(y; X, \tilde{\beta}_{\gamma}, \psi^{\ast}) - \frac{\psi^{\ast}}{\hat{\psi}}H_n(y, X, \beta^{\ast}_{\gamma}, \psi^{\ast}) + \frac{\psi^{\ast}}{\hat{\psi}}H_n(y, X, \beta^{\ast}_{\gamma}, \psi^{\ast}) - H(F_0; \beta^{\ast}_{\gamma}, \psi^{\ast})\right|\right|_2\\
        & + \left|\left|\frac{1}{n}\nabla^2_{\beta_{\gamma}}\log \pi(\tilde{\beta}_{\gamma})\right|\right|_2\\
        =&\left|\left|\frac{\psi^{\ast}}{\hat{\psi}}H_n(y; X, \tilde{\beta}_{\gamma}, \psi^{\ast}) - \frac{\psi^{\ast}}{\hat{\psi}}H_n(y, X, \beta^{\ast}_{\gamma}, \psi^{\ast}) + \frac{\psi^{\ast}}{\hat{\psi}}H_n(y, X, \beta^{\ast}_{\gamma}, \psi^{\ast}) - H_n(y, X, \beta^{\ast}_{\gamma}, \psi^{\ast})\right.\right.\\
        &\left.\left. + H_n(y, X, \beta^{\ast}_{\gamma}, \psi^{\ast}) - H(F_0; \beta^{\ast}_{\gamma}, \psi^{\ast}) - \right|\right|_2 + \left|\left|\frac{1}{n}\nabla^2_{\beta_{\gamma}}\log \pi(\tilde{\beta}_{\gamma})\right|\right|_2\\
        \leq& \frac{\psi^{\ast}}{\hat{\psi}}\left|\left|H_n(y; X, \tilde{\beta}_{\gamma}, \psi^{\ast}) - H_n(y, X, \beta^{\ast}_{\gamma}, \psi^{\ast})\right|\right|_2 + \left|\left|\frac{\psi^{\ast}}{\hat{\psi}}H_n(y, X, \beta^{\ast}_{\gamma}, \psi^{\ast}) - H_n(y, X, \beta^{\ast}_{\gamma}, \psi^{\ast})\right|\right|_2\\
        & + \left|\left|H_n(y, X, \beta^{\ast}_{\gamma}, \psi^{\ast}) - H(F_0; \beta^{\ast}_{\gamma}, \psi^{\ast})\right|\right|_2 + \left|\left|\frac{1}{n}\nabla^2_{\beta_{\gamma}}\log \pi(\tilde{\beta}_{\gamma})\right|\right|_2\\
        \leq& \frac{\psi^{\ast}}{\hat{\psi}}\left|\left|H_n(y; X, \tilde{\beta}_{\gamma}, \psi^{\ast}) - H_n(y, X, \beta^{\ast}_{\gamma}, \psi^{\ast})\right|\right|_2 + \left|\left|\frac{\psi^{\ast}}{\hat{\psi}}H_n(y, X, \beta^{\ast}_{\gamma}, \psi^{\ast}) - H_n(y, X, \beta^{\ast}_{\gamma}, \psi^{\ast})\right|\right|_2\\
        & + \left|\left|H_n(y, X, \beta^{\ast}_{\gamma}, \psi^{\ast}) - H(F_0; \beta^{\ast}_{\gamma}, \psi^{\ast})\right|\right|_2 + \left|\left|\frac{1}{n}\nabla^2_{\beta_{\gamma}}\log \pi(\tilde{\beta}_{\gamma}) - \frac{1}{n}\nabla^2_{\beta_{\gamma}}\log \pi(\beta^{\ast}_{\gamma})\right|\right|_2 + \left|\left|\frac{1}{n}\nabla^2_{\beta_{\gamma}}\log \pi(\beta^{\ast}_{\gamma})\right|\right|_2.
    \end{align*}
   Once again, we now consider each term above. Theorem 7 of \cite{miller2021asymptotic} prove that $H_n(y; X, \beta_{\gamma}, \psi)$ is equi-Lipschitz under A1.1 -A2.1-6 which combined with Theorem \ref{Thm:ConsistencyRR23_Osqrtn} and Proposition \ref{Prop:psi_hat_Osqrtn} 
   shows that \newline $\frac{\psi^{\ast}}{\hat{\psi}}\left|\left|H_n(y; X, \tilde{\beta}_{\gamma}, \psi^{\ast}) - H_n(y, X, \beta^{\ast}_{\gamma}, \psi^{\ast})\right|\right|_2 = O_p(1/\sqrt{n})$ and $\left|\left|\frac{\psi^{\ast}}{\hat{\psi}}H_n(y, X, \beta^{\ast}_{\gamma}, \psi^{\ast}) - H_n(y, X, \beta^{\ast}_{\gamma}, \psi^{\ast})\right|\right|_2 = O_p(1/\sqrt{n})$, further by the weak law of large numbers $\left|\left|H_n(y, X, \beta^{\ast}_{\gamma}, \psi^{\ast}) - H(F_0; \beta^{\ast}_{\gamma}, \psi^{\ast})\right|\right|_2 = o_p(1)$, and by the continuous mapping theorem and A1.2 $\left|\left|\frac{1}{n}\nabla^2_{\beta_{\gamma}}\log \pi(\tilde{\beta}_{\gamma}) - \frac{1}{n}\nabla^2_{\beta_{\gamma}}\log \pi(\beta^{\ast}_{\gamma})\right|\right|_2 = O_p(1/\sqrt{n})$ and $\left|\left|\frac{1}{n}\nabla^2_{\beta_{\gamma}}\log \pi(\beta^{\ast}_{\gamma})\right|\right|_2 = O(1/n)$. All of the above arguments also apply if we consider model $\gamma^{\ast}$ in place of model $\gamma$. As a result,
    \begin{align*}
          \frac{\left|H_n(y; X, \tilde{\beta}_{\gamma^{\ast}}, \hat{\psi}) - \frac{1}{n}\nabla^2_{\beta_{\gamma}}\log \pi(\tilde{\beta}_{\gamma^{\ast}})\right|^{1/2}}{\left|H_n(y; X, \tilde{\beta}_{\gamma}, \hat{\psi}) - \frac{1}{n}\nabla^2_{\beta_{\gamma}}\log \pi(\tilde{\beta}_{\gamma})\right|^{1/2}}\overset{P}{\longrightarrow}\frac{\left|H(F_0; \beta^{\ast}_{\gamma^{\ast}}, \psi^{\ast})\right|^{1/2}}{\left|H(F_0; \beta^{\ast}_{\gamma}, \psi^{\ast})\right|^{1/2}}
    \end{align*}
    where the right-hand side is fixed as $\beta^{\ast}_{\gamma^{\ast}}$ and $\beta^{\ast}_{\gamma}$ are fixed by assumption and $H(F_0; \beta^{\ast}_{\gamma^{\ast}}, \psi^{\ast})$ and $H(F_0; \beta^{\ast}_{\gamma}, \psi^{\ast})$ are positive definite by A2.3.    
    Therefore
    \begin{align}
        T_3n^{\frac{|\gamma|_0 - |\gamma^{\ast}|_0}{2}}\overset{P}{\longrightarrow}\frac{\left|H(F_0; \beta^{\ast}_{\gamma^{\ast}}, \psi^{\ast})\right|^{1/2}}{\left|H(F_0; \beta^{\ast}_{\gamma}, \psi^{\ast})\right|^{1/2}}\in (0, \infty).\label{equ:T3}
    \end{align}

    For $T_2$, under Theorem \ref{Thm:ConsistencyRR23_Osqrtn} and by A1.2 and the continuous mapping theorem we have that $\pi(\tilde{\beta}_{\gamma}\mid\gamma)= \pi(\beta^{\ast}_{\gamma}\mid \gamma) + O_p(1/\sqrt{n})$ for any $\gamma$, hence
    \begin{align}
        T_2= \frac{\pi(\beta^{\ast}_{\gamma^{\ast}}\mid \gamma^{\ast})}{\pi(\beta^{\ast}_{\gamma}\mid \gamma)} + O_p(1/\sqrt{n})\label{equ:T2}
    \end{align}

    To characterise $T_1$, we must consider separately the case where $\gamma^{\ast}\not\subset\gamma$ and the case where $\gamma^{\ast}\subset \gamma$. This separates the cases when the limiting quasi-log likelihood of model $\gamma$ is strictly less than the limiting quasi-log-likelihood of the generating model $\gamma^{\ast}$ or equal to it.

    \begin{enumerate}
        \item[i)] \textbf{Case} $\gamma^{\ast}\not\subset \gamma$. In this case, for any $\gamma$ we have that
        \begin{align*}
            &\left|Q_n(y, X, \tilde{\beta}_{\gamma}, \hat{\psi}) - Q(F_0, \beta^{\ast}_{\gamma}, \psi^{\ast})\right|\\
            =&\left|\frac{\hat{\psi}}{\psi^{\ast}}Q_n(y, X, \tilde{\beta}_{\gamma}, \psi^{\ast}) - Q(F_0, \beta^{\ast}_{\gamma}, \psi^{\ast})\right|\\
            =&\left|\frac{\hat{\psi}}{\psi^{\ast}}Q_n(y, X, \tilde{\beta}_{\gamma}, \psi^{\ast}) - \frac{\hat{\psi}}{\psi^{\ast}}Q_n(y, X, \beta^{\ast}_{\gamma}, \psi^{\ast}) + \frac{\hat{\psi}}{\psi^{\ast}}Q_n(y, X, \beta^{\ast}_{\gamma}, \psi^{\ast}) - Q_n(y, X, \beta^{\ast}_{\gamma}, \psi^{\ast}) \right.\\
            &\left.+ Q_n(y, X, \beta^{\ast}_{\gamma}, \psi^{\ast}) - Q(F_0, \beta^{\ast}_{\gamma}, \psi^{\ast}) \right|\\
            \leq&\frac{\hat{\psi}}{\psi^{\ast}}\left|Q_n(y, X, \tilde{\beta}_{\gamma}, \psi^{\ast}) - Q_n(y, X, \beta^{\ast}_{\gamma}, \psi^{\ast})\right| + \left|\frac{\hat{\psi}}{\psi^{\ast}}Q_n(y, X, \beta^{\ast}_{\gamma}, \psi^{\ast}) - Q_n(y, X, \beta^{\ast}_{\gamma}, \psi^{\ast})\right|\\
            &+ \left|Q_n(y, X, \beta^{\ast}_{\gamma}, \psi^{\ast}) - Q(F_0, \beta^{\ast}_{\gamma}, \psi^{\ast})\right|
        \end{align*}
     where Theorem 7 of \cite{miller2021asymptotic} prove that $Q_n(y; X, \beta_{\gamma}, \psi)$ is equi-Lipschitz under A1.1 and A2.1-6 which combined with Theorem \ref{Thm:ConsistencyRR23_Osqrtn} shows $\left|Q_n(y; X, \tilde{\beta}_{\gamma}, \psi^{\ast}) - Q_n(y, X, \beta^{\ast}_{\gamma}, \psi^{\ast})\right| = O_p(1/\sqrt{n})$, and Proposition \ref{Prop:psi_hat_Osqrtn} 
     ensures that $\frac{\hat{\psi}}{\psi^{\ast}}\left|Q_n(y, X, \tilde{\beta}_{\gamma}, \psi^{\ast}) - Q_n(y, X, \beta^{\ast}_{\gamma}, \psi^{\ast})\right| = O_p(1/\sqrt{n})$ and $\left|\frac{\hat{\psi}}{\psi^{\ast}}Q_n(y, X, \beta^{\ast}_{\gamma}, \psi^{\ast}) - Q_n(y, X, \beta^{\ast}_{\gamma}, \psi^{\ast})\right| = O_p(1/\sqrt{n})$. 
     Further, 
     \begin{align*}
         {E}_{F_0}\left[\sqrt{n}\left(Q_n(y, X, \beta^{\ast}_{\gamma}, \psi^{\ast}) - Q(F_0, \beta^{\ast}_{\gamma}, \psi^{\ast})\right)\right] &= 0\\
         var_{F_0}\left[\sqrt{n}\left(Q_n(y, X, \beta^{\ast}_{\gamma}, \psi^{\ast}) - Q(F_0, \beta^{\ast}_{\gamma}, \psi^{\ast})\right)\right] &= var_{F_0}\left[\frac{1}{\sqrt{n}}\sum_{i=1}^n\ell_{\psi^{\ast}}(y_i; x_{i\gamma}, \beta^{\ast}_{\gamma})\right]\\
         &= \frac{1}{n}\sum_{i=1}^n var_{F_0}\left[\ell_{\psi^{\ast}}(y_i; x_{i\gamma}, \beta^{\ast}_{\gamma})\right]
     \end{align*}
     where $\frac{1}{n}\sum_{i=1}^n var_{F_0}\left[\ell_{\psi^{\ast}}(y_i; x_{i\gamma}, \beta^{\ast}_{\gamma})\right] < \infty$ by A2.7. Then we can apply Chebyshev's inequality to $A_n = \sqrt{n}\left(Q_n(y, X, \beta^{\ast}_{\gamma}, \psi^{\ast}) - Q(F_0, \beta^{\ast}_{\gamma}, \psi^{\ast})\right)$ which provides that 
     \begin{align*}
         P\left(\left|\sqrt{n}\left(Q_n(y, X, \beta^{\ast}_{\gamma}, \psi^{\ast}) - Q(F_0, \beta^{\ast}_{\gamma}, \psi^{\ast})\right)\right|\geq M\right)\leq \frac{\frac{1}{n}\sum_{i=1}^n var_{F_0}\left[\ell_{\psi^{\ast}}(y_i; x_{i\gamma}, \beta^{\ast}_{\gamma})\right]}{M^2}
     \end{align*}
     which is sufficient to prove that $\left|Q_n(y, X, \beta^{\ast}_{\gamma}, \psi^{\ast}) - Q(F_0, \beta^{\ast}_{\gamma}, \psi^{\ast})\right| = O_p(1/\sqrt{n})$.
     
     Therefore
        \begin{align*}
            \frac{1}{n}T_1 = Q_n(y, X, \tilde{\beta}_{\gamma}, \hat{\psi}) - Q_n(y, X, \tilde{\beta}_{\gamma^{\ast}}, \hat{\psi}) = Q(F_0, \beta^{\ast}_{\gamma}, \psi^{\ast}) - Q(F_0, \beta^{\ast}_{\gamma^{\ast}}, \psi^{\ast})  + O_p(1/\sqrt{n})\label{equ:T1}
        \end{align*}
        with $Q(F_0, \beta^{\ast}_{\gamma}, \psi^{\ast}) - Q(F_0, \beta^{\ast}_{\gamma^{\ast}}, \psi^{\ast}) < 0$. 
        
        Therefore, combining  \eqref{equ:T3} \eqref{equ:T2} and \eqref{equ:T1} gives that 
        \begin{align*}
            &B_{\gamma\gamma^{\ast}} = \frac{|\gamma|_0 - |\gamma^{\ast}|_0}{2}\log(2\pi) + T_1 + \log(T_2) + \log(T_3)\\
            =& \frac{|\gamma|_0 - |\gamma^{\ast}|_0}{2}\log\left(\frac{2\pi}{n}\right) + n\left\{Q(F_0, \beta^{\ast}_{\gamma}, \psi^{\ast}) - Q(F_0, \beta^{\ast}_{\gamma^{\ast}}, \psi^{\ast})\right\}\\
            &+ \log\left(\frac{\pi(\beta^{\ast}_{\gamma^{\ast}}\mid \gamma^{\ast})}{\pi(\beta^{\ast}_{\gamma}\mid \gamma)}\right) + \log\left(\frac{\left|H(F_0; \beta^{\ast}_{\gamma^{\ast}}, \psi^{\ast})\right|^{1/2}}{\left|H(F_0; \beta^{\ast}_{\gamma}, \psi^{\ast})\right|^{1/2}}\right) + O_p(\sqrt{n})\\
            =& n\left\{Q(F_0, \beta^{\ast}_{\gamma}, \psi^{\ast}) - Q(F_0, \beta^{\ast}_{\gamma^{\ast}}, \psi^{\ast})\right\} + O_p(\sqrt{n})
        \end{align*}
        as required.
        \item[ii)] \textbf{Case} $\gamma^{\ast}\subset \gamma$. In this case A1.2 and Theorem \ref{thm:chi_squared} provide that  
        \begin{align}
            T_1 = \frac{1}{2}\chi^2_{|\gamma|_0 - |\gamma^{\ast}|_0} + O_p(1/\sqrt{n})
        \end{align}
        which combined with \eqref{equ:T3} and \eqref{equ:T2} gives that 
        \begin{align*}
            &B_{\gamma\gamma^{\ast}} = \frac{|\gamma|_0 - |\gamma^{\ast}|_0}{2}\log(2\pi) + T_1 + \log(T_2) + \log(T_3)\\
            =& \frac{|\gamma|_0 - |\gamma^{\ast}|_0}{2}\log(2\pi) + \frac{1}{2}\chi^2_{|\gamma|_0 - |\gamma^{\ast}|_0} + \log\left(\frac{\pi(\beta^{\ast}_{\gamma^{\ast}}\mid \gamma^{\ast})}{\pi(\beta^{\ast}_{\gamma}\mid \gamma)}\right) + \frac{|\gamma^{\ast}|_0 - |\gamma^{}|_0}{2}\log(n)\\
            &+\log\left(\frac{\left|H(F_0; \beta^{\ast}_{\gamma^{\ast}}, \psi^{\ast})\right|^{1/2}}{\left|H(F_0; \beta^{\ast}_{\gamma}, \psi^{\ast})\right|^{1/2}}\right) + o_p(1) \\
            =&\frac{|\gamma|_0 - |\gamma^{\ast}|_0}{2}\log\left(\frac{2\pi}{n}\right) + \frac{1}{2}\chi^2_{|\gamma|_0 - |\gamma^{\ast}|_0} + \log\left(\frac{\pi(\beta^{\ast}_{\gamma^{\ast}}\mid \gamma^{\ast})}{\pi(\beta^{\ast}_{\gamma}\mid \gamma)}\right) + \log\left(\frac{\left|H(F_0; \beta^{\ast}_{\gamma^{\ast}}, \psi^{\ast})\right|^{1/2}}{\left|H(F_0; \beta^{\ast}_{\gamma}, \psi^{\ast})\right|^{1/2}}\right) + o_p(1),
        \end{align*}
        as required.
    \end{enumerate}
\end{proof}

Note that in Case ii) an error term of the order $O_p(1/\sqrt{n})$ could be achieved with slightly stronger conditions on the quasi-log-likelihood hessian matrix. However, we avoided this extra technicality here as it is not relevant to the validity of our proposed procedure. 

\subsection{Proof of Proposition \ref{Prop:Lapalce_accuracy}}

We prove Proposition \ref{Prop:Lapalce_accuracy} regarding the accuracy of the Laplace approximation to the quasi-marginal-likelihood.

\begin{proof}
    This results follows directly Theorem 5 of \cite{miller2021asymptotic} and the proof of Theorem \ref{Thm:ConsistencyMiller} which established that A1 and A2.1-5 imply the conditions of Theorem 5 of \cite{miller2021asymptotic} for $f_n(\cdot) = - Q_n(y, X_{\gamma}, \cdot, \psi)$ with $\psi > 0$.
\end{proof}

We further, note that \cite{rossell2023additive} contains the same result under slightly weaker conditions that applies when the quasi-log-likelihood is concave.

\section{Additional implementation and experimentation details}{\label{sec:supp_experiments}}

Section~\ref{sec:supp_experiments} contains additional details on the Bayesian FDR variable selection procedure, the simplified Gibbs sampler for the model quasi-posterior with fixed $w$, and supplementary results from the simulation studies and real data analyses.

\subsection{Controlling the Bayesian False Discovery Rate}\label{sec:FDR-control}


We briefly review how we deploy the thresholding rule of \citet{mueller:2004} in order to control Bayesian expected false discovery rate (FDR) of our variable selection procedure.
Under our framework, each covariate $j$ is associated with a binary inclusion indicator $\gamma_j = \mathbb{I}\{\beta_j \neq 0\}$, and we denote by $\pi_j = \tilde{\pi}(\gamma_j = 1 \mid y, X)$ the marginal (quasi) posterior inclusion probability for covariate $j$.  Let $S = \{j: \gamma_j = 1\}$ be the set of selected covariates, where $|S|_0 = \sum_{j=1}^p \gamma_j$. The posterior expected number of false discoveries is the expected count of included covariates that are actually spurious, namely  $\sum_{j \in S} \tilde{\pi}(\gamma_j = 0 \mid y, X) \;=\; \sum_{j \in S} (1 - \pi_j)$. Thus, the (quasi) posterior expected proportion of false discoveries is given by
\[
\frac{1}{|S|}\sum_{j \in S}(1-\pi_j)
\;=\; 1 - \frac{1}{|S|}\sum_{j \in S}\pi_j,
\quad |S|>0,
\]
that is, one minus the average posterior inclusion probability among the selected covariates.  Naturally, maintaining this quantity below a target level $\alpha$ is equivalent to requiring
$\frac{1}{|S|}\sum_{j \in S}\pi_j \;\ge\; 1-\alpha.$

This motivates a simple selection rule: include as many covariates as possible while keeping their average posterior inclusion probability above $1-\alpha$.  
To operationalise this, we order the inclusion probabilities as $\pi_{(1)} \ge \pi_{(2)} \ge \cdots \ge \pi_{(p)}$ and select the top $k$ covariates, where $k$ is the largest index such that 
$
\frac{1}{k}\sum_{i=1}^k \pi_{(i)} \;\ge\; 1-\alpha$. Hence, this procedure yields an adaptive and data-driven framework to variable selection, since $\pi_{(k)}$ becomes the implicit threshold. 

\subsection{Gibbs sampler for fixed $w$} 

Algorithm \ref{Alg:Gibbs} provides a special case of Algorithm \ref{Alg:Gibbs_BetaBinom} to sample from $\tilde{\pi}(\gamma\mid y, X)$ when $w$ is assumed fixed.

\begin{algorithm}
\small
\caption{Gibbs Sampling from $\tilde{\pi}(\gamma\mid y, X; \hat{\psi})$}\label{Alg:Gibbs}
\KwIn{Data $y$ and $X$, dispersion $\hat{\psi}$, prior hyperparameters $\{w, s\}$, initial state $\gamma^{(0)}$ and number of sample $N$.}
\KwOut{$\{\gamma^{(t)}\}_{t=1}^N$, approximate samples from $\tilde{\pi}(\gamma\mid y, X; \hat{\psi})$.} 
\For{$t =1, \ldots, N$}{
    Generate $l=(l_1,\ldots,l_p)$ a random permutation of $\{1,\ldots,p\}$. Set $\gamma = \gamma^{(t-1)}$. \\
    \For{$j =1, \ldots, p$}{
        Set $\gamma^{+}_{-l_j} = \gamma_{-l_j}$, $\gamma^{+}_{l_j} = 1$, $\gamma^{-}_{-l_j} = \gamma_{-l_j}$, and $\gamma^{-}_{l_j} = 1$.\\
        Set $\gamma_{l_j} = 1$ with probability $\left(1 + \tilde{B}_{\gamma^{-}\gamma^{+}}\frac{(1-w)}{w}\right)^{-1}$ or $\gamma_{l_j} = 0$ otherwise. \\
    }
Set $\gamma^{(t)} = \gamma$.
}
\end{algorithm}

\subsection{Variable Selection Performance metrics}\label{Sec:metrics}

In Sections \ref{sec:simul_studies} and \ref{Sec:real_data} 
variable selection performance is evaluated using the false discovery rate (FDR), statistical power,  F1 score and the Matthews correlation coefficient (MCC). 
Let TP, FP, TN, and FN denote the number of true positives, false positives, true negatives, and false negatives, respectively, where the positive class corresponds to active variables i.e. $\beta_j\neq 0$. The metrics are defined as
\begin{align*}
    \text{FDR} &= \frac{\mathrm{FP}}{\mathrm{TP}+\mathrm{FP}}, \quad \text{Power} = \frac{\mathrm{TP}}{\mathrm{TP}+\mathrm{FN}}, \quad \text{F1} = \frac{2 \, \mathrm{TP}}{2 \, \mathrm{TP} + \mathrm{FP} + \mathrm{FN}},\\
    &\text{MCC} = \frac{\mathrm{TP}\cdot \mathrm{TN} - \mathrm{FP}\cdot \mathrm{FN}}{\sqrt{(\mathrm{TP}+\mathrm{FP})(\mathrm{TP}+\mathrm{FN})(\mathrm{TN}+\mathrm{FP})(\mathrm{TN}+\mathrm{FN})}}.
\end{align*}

\subsection{Additional Simulation Results}

\subsubsection{Overdispersed Count Data}
Figure \ref{fig:coutsim-metrics-0.5} presents additional results  for the overdispersed count data simulated example from Section \ref{Sec:sim_count}.
The top panel shows that under the median model posterior inclusion probability thresholding we see a slight increase in the FDR of the QP relative to the negative-binomial model, but for $n > 25$ we see that the QP has the same FDR and improved power. The Poisson model continues to have very high FDR. The bottom panel shows that across samples sizes the QP assigns lower probability to truly inactive covariates than the Poisson model and higher probability to truly active covariates than the negative binomial model. 

\begin{figure}[ht!]
    \centering
    \includegraphics[width=0.95\textwidth]{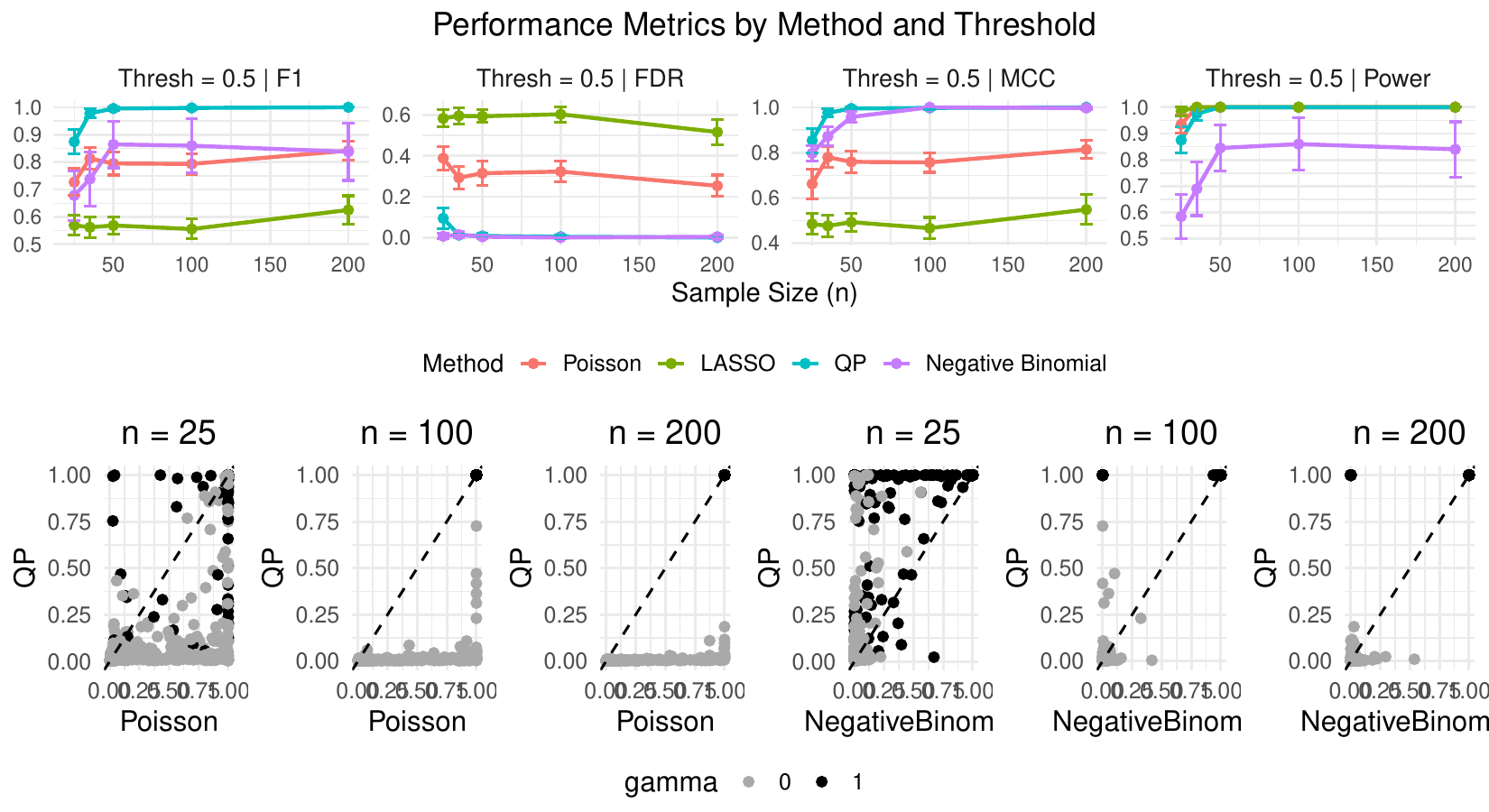}
    \caption{
    Overdispersed count data. 
Top: Variable selection performance of quasi–posterior (QP), negative binomial (NB), and Poisson (Pois) models across sample sizes using fixed cutoff at $\pi_j \geq 0.5$. Error bars denote $\pm 2$ standard errors across simulation replicates. 
Bottom: Estimated posterior probabilities of inclusion  across repeats, coloured according to whether the generating $\beta^\ast$ was zero (grey) or non-zero (black).}
    \label{fig:coutsim-metrics-0.5}
\end{figure}

\subsubsection{Linear Regression with non-Gaussian errors}


\paragraph{Heavy-tailed linear regression}

Figure \ref{fig:multcomp_probs_homo_student0.5} presents additional results for the heavy tailed linear regression simulated example from Section \ref{Sec:sim_heavy}. The top panel shows that the median model selection procedure achieves higher power for both \texttt{mombf} and the QP, relative to Figure \ref{fig:multcomp_probs_homo_student}, at the cost of a inflated FDR at the smallest sample size. This plot also demonstrates that while the power of LASSO procedure is better than the both Bayesian methods, this comes at the cost of an FDR of around 0.5 which does not appear to decrease as the sample size increases. The comparison of the probabilities in the bottom panel shows that for $n \geq 125$ the inclusion probabilities estimated by the QP and \texttt{mombf} are largely the same, with the QP assigning larger probability to truly active covariates in a small number of cases.

\begin{figure}[ht!]
    \centering

    \includegraphics[width=0.95\linewidth]{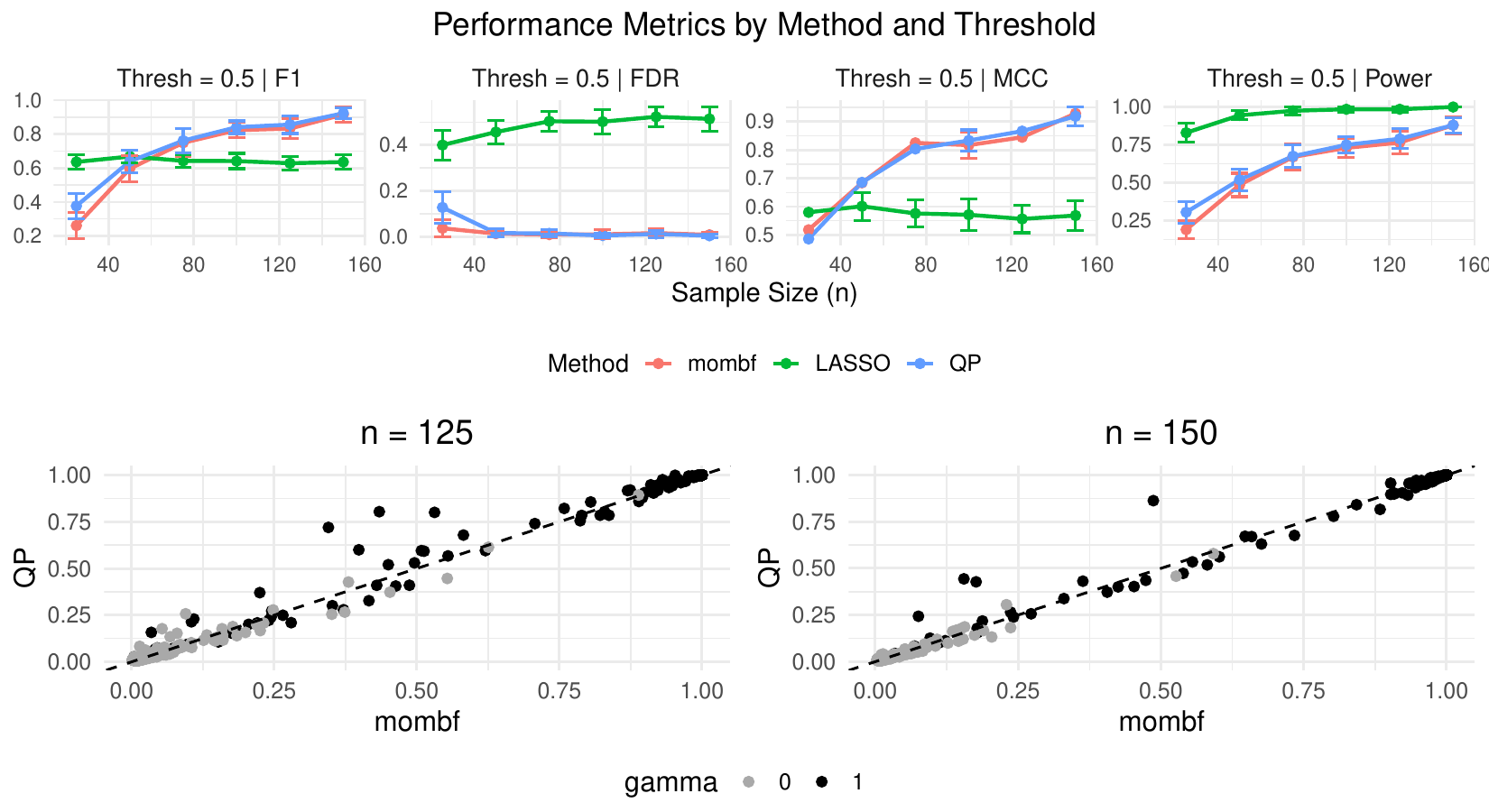}
    \caption{
    Heavy-tailed linear regression. 
Top: Variable selection performance of quasi–posterior (QP), traditional posterior (mombf) and frequentist LASSO models across sample sizes using fixed cutoff at $\pi_j \ge 0.5$. Error bars denote $\pm 2$ standard errors across simulation replicates. 
Bottom: Estimated posterior probabilities of inclusion  across repeats, coloured according to whether the generating $\beta^\ast$ was zero (grey) or non-zero (black).}
    \label{fig:multcomp_probs_homo_student0.5}
\end{figure}

\paragraph{Linear regression with inliers}

Figure \ref{fig:multcomp_selestion_homo_inlier0.5} and \ref{fig:multcomp_metrics_homo_inlier_psiLASSO} presents additional results for the inlier linear regression simulated example from Section \ref{Sec:inliers}. 
The top panel of Figure \ref{fig:multcomp_selestion_homo_inlier0.5} shows that relative to Figure \ref{fig:multcomp_selestion_homo_inlier}, the QP and \texttt{mombf} both have increased FDR and power under the median model selection procedure. The probability comparison in the bottom panel shows that for $n\leq 170$ the QP generally assigns much higher inclusion probability to inactive covariates than \texttt{mombf}, which results in its inflated FDR for small $n$. However, the opposite is the case for $n\geq 750$ which demonstrates why the QP has a lower FDR in this setting.

Figure \ref{fig:multcomp_metrics_homo_inlier_psiLASSO} presents a comparison of the model selection performance of the QP with \texttt{mombf} when a LASSO penalised estimate of $\beta$ is used in place of $\hat{\beta}$ in \eqref{Equ:psi_est} to calculate $\hat{\psi}$. This massively reduced the FDR of the QP for small $n$ at the cost of a small reduction in power. The behaviour for large $n$ is similar to what is observed in Figures \ref{fig:multcomp_selestion_homo_inlier} and \ref{fig:multcomp_selestion_homo_inlier0.5}.

\begin{figure}[ht!]
    \centering

    \includegraphics[width=0.95\linewidth]{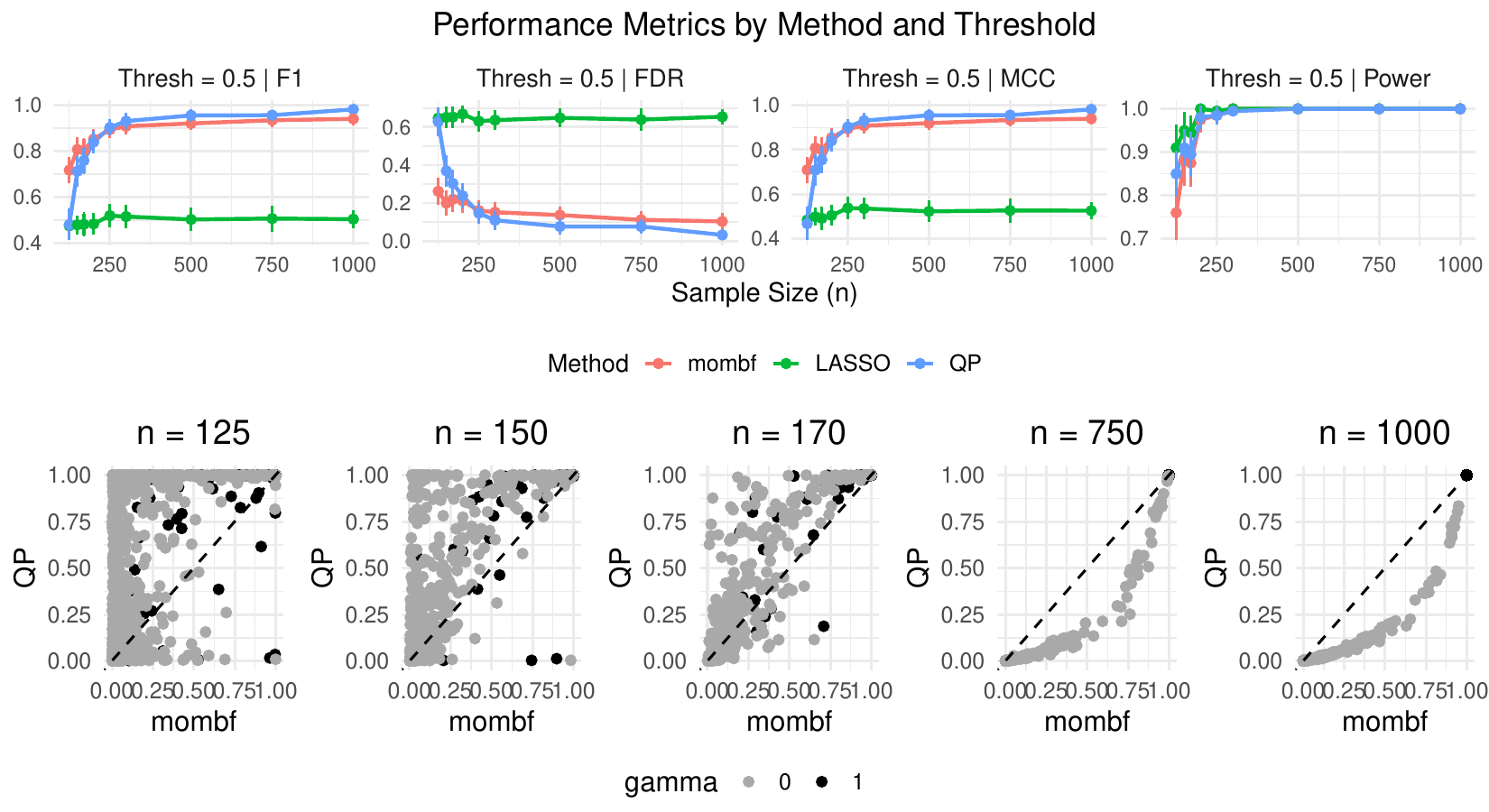}
    \caption{
    Linear regression with inliers. 
    Top: Variable selection performance of quasi–posterior (QP), traditional posterior (mombf) and frequentist LASSO models across sample sizes using fixed cutoff at $\pi_j \ge 0.5$. Error bars denote $\pm 2$ standard errors across simulation replicates. 
Bottom: Estimated posterior probabilities of inclusion  across repeats, coloured according to whether the generating $\beta^\ast$ was zero (grey) or non-zero (black).}
    \label{fig:multcomp_selestion_homo_inlier0.5}
\end{figure}

\begin{figure}[ht!]
    \centering
    \includegraphics[width=0.95\linewidth]{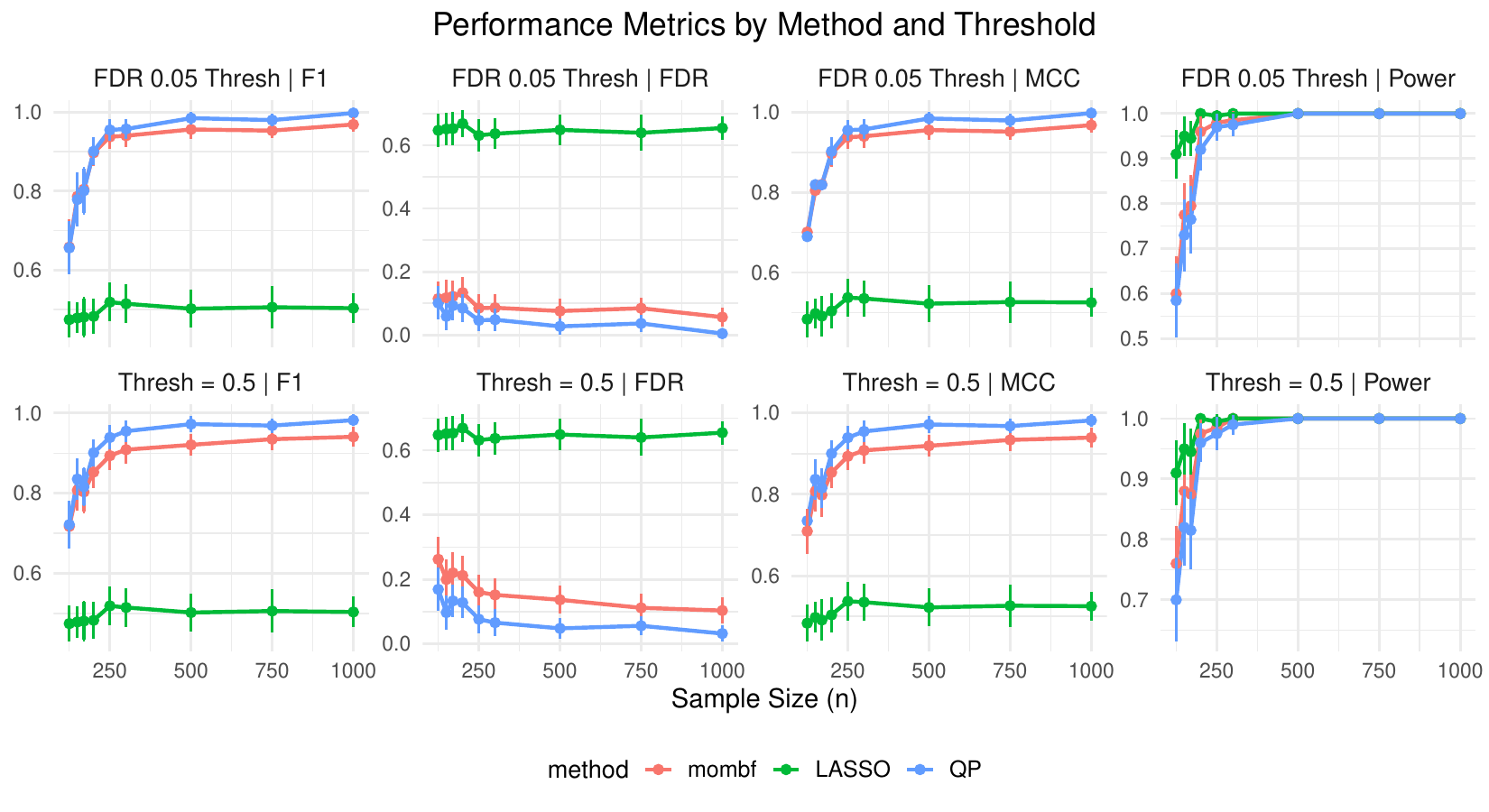}
    \caption{
    Linear regression with inliers.
    Variable selection performance of quasi–posterior (QP), traditional posterior (mombf) and frequentist LASSO models across sample sizes.  
    Results are shown under two selection rules: (top) fixed cutoff at $\pi_j \ge 0.5$ and (bottom) Bayesian FDR control at $\alpha = 0.05$. 
    Error bars denote $\pm 2$ standard errors across simulation replicates. For the QP $\psi$ estimated here using LASSO regularisation to estimate $\hat{\beta}$ in \eqref{Equ:psi_est}.}
    \label{fig:multcomp_metrics_homo_inlier_psiLASSO}
\end{figure}

\subsection{Additional Results for Real Data}
In this section, we present additional analyses for the real data applications, providing further diagnostic and predictive assessments of the competing models.

\subsubsection{NMES 1988 dataset}

The NMES dataset is available in the \texttt{AER} package in \textsf{R}. All continuous covariates were standardized to have mean zero and unit variance prior to analysis. 

\noindent \textbf{Posterior inlcusion probabilities} Figure \ref{fig:NMES1988_ppi_comparison}  reports the posterior inclusion probabilities of the QP, Poisson and negative-binomial (NB) models for the NMES data. The QP and NB results closely align, while the Poisson model highlights additional predictors that are not supported under the others.

\begin{figure}[ht!]
  \centering
  \includegraphics[width=0.8\textwidth]{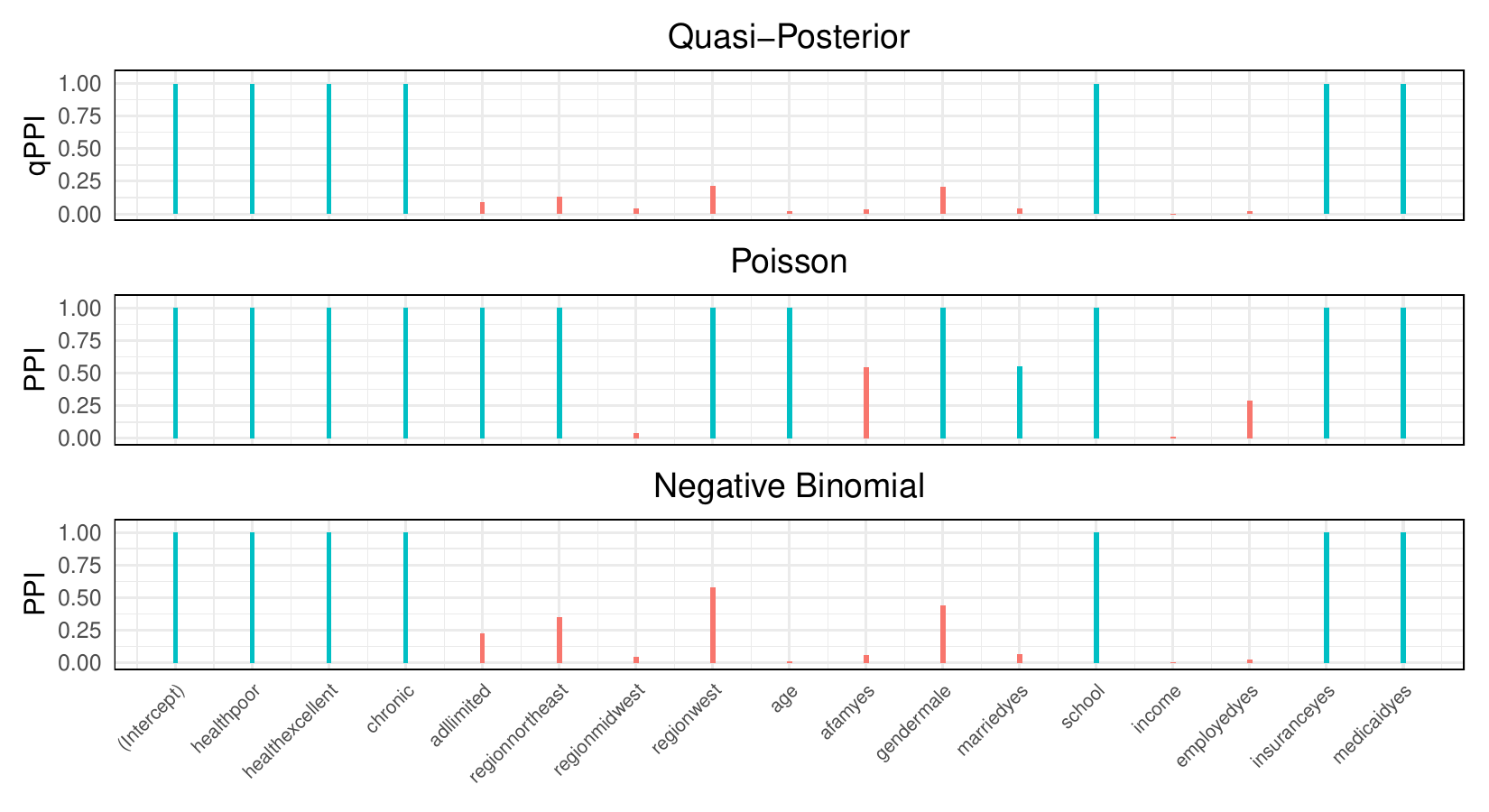}
  \caption{
  NMES data.
Each vertical line represents the marginal posterior probability of inclusion for a covariate. Blue lines correspond to selected predictors using Bayesian FDR control $\alpha = 0.05$.
The three panels show: (top)  QP model, (center) Poisson, (bottom) Negative-binomial.}
  \label{fig:NMES1988_ppi_comparison}
\end{figure}

\noindent \textbf{Mean-Variance Function Diagnostics} \label{sec:variance_function_diagnostics}
To evaluate whether the proposed mean and variance assumption adequately describes the NMES data, we adopt a diagnostic similar to that of \citet{agnoletto2025bayesian} (Supplement~S5), which compares empirical means and variances within bins to model--implied mean and variance functions.   All model parameters were estimated using the maximum a posteriori (MAP) values of the parameters, conditional on the selected $\gamma$ for each model,  with the selected set allowed to differ across models.
Unlike the example in \citet{agnoletto2025bayesian}, which involves a single covariate and allows binning directly along that variable, our models include multiple predictors.  
Accordingly, rather than conditioning on one covariate at a time, we use the fitted mean  
\(\mu_i = \exp(x_i^{\top}\beta)\)  
as a one--dimensional summary index that captures the combined covariate effect.  
To avoid favoring any single model in defining the bins, we construct a neutral index
\[
\bar{\mu}_i = \tfrac{1}{3}\!\left(\hat{\mu}_i^{\text{Pois}} + \hat{\mu}_i^{\text{NB}} + \hat{\mu}_i^{\text{QP}}\right),
\]
take deciles of \(\bar{\mu}_i\), and within each bin \(b\) (retaining those with at least 20 observations) compute the response $y$'s empirical mean \(\bar{y}_b\) and variance \(s_b^2\).

For each fitted model $m\in\{\text{Poisson},\text{NB},\text{QP}\}$ we also compute the bin--averaged fitted mean $\bar\mu^{(m)}_b$ and the corresponding theoretical variance
\[
v_b^{\text{Pois}} \,=\, \bar\mu_b^{\text{Pois}},\qquad
v_b^{\text{NB}} \,=\, \bar\mu_b^{\text{NB}} + \frac{\big(\bar\mu_b^{\text{NB}}\big)^2}{\hat{\theta}},\qquad
v_b^{\text{QP}} \,=\, \hat{\psi}\,\bar\mu_b^{\text{QP}},
\]
where $\hat{\theta}$ is the negative binomial dispersion  and $\hat{\psi}$ is the quasi--Poisson dispersion, both estimated from the respective selected models (each uses its own subset $\gamma$ of predictors).

The left panel of Figure \ref{fig:mean_variance-adequacy_NMES1988} compared empirical \(\bar{y}_b\) with model implied $\bar{\mu}_i^{\text{Pois}}$, $\bar{\mu}_i^{\text{NB}}$, and $\bar{\mu}_i^{\text{QP}}$ and shows that the three models yield very similar mean fits, reproducing the empirical bin averages with small binned errors. 
The right panel of Figure \ref{fig:mean_variance-adequacy_NMES1988} compares the relationship between empirical \(\bar{y}_b\) and $s_b^2$ with the relationships between model-implied $\bar{\mu}_i^{(m)}$ and $v_b^{(m)}$ for each model. The Poisson model markedly underestimates variability across the range of means, consistent with strong overdispersion. The negative binomial captures the increasing variance more faithfully but tends to overshoot at the largest means. The Bayesian quasi--Poisson curve tracks the empirical mean--variance pattern most closely over all bins. 

Table~\ref{tab:adequacy-scores}(a) quantifies the quality of the binned mean estimation by calculating
\[
\text{MSE}^{(m)}=\tfrac{1}{K}\sum_{b=1}^K \big(\bar{y}_b-\bar{\mu}_b^{(m)}\big)^2,\qquad
\text{MAE}^{(m)}=\tfrac{1}{K}\sum_{b=1}^K \big|\bar{y}_b-\bar{\mu}_b^{(m)}\big|,
\]
confirming that all three methods appear to accurately capture the mean structure in the data.
Table~\ref{tab:adequacy-scores}(b) analogously quantifies how well each method capture the binned variances by calculting
\[
\text{MSE}^{(m)}=\tfrac{1}{K}\sum_{b=1}^K \big(s_b^2-v_b^{(m)}\big)^2,\qquad
\text{MAE}^{(m)}=\tfrac{1}{K}\sum_{b=1}^K \big|s_b^2-v_b^{(m)}\big|,
\]
confirming substantial overdispersion in this dataset and indicating that the model quasi-posterior provides the most accurate description of the empirical variability among the candidates considered.



\begin{table}[htbp]
\centering
\small
\begin{subtable}[t]{0.47\textwidth}
  \centering
  \begin{tabular}{lcc}
    \toprule
    \textbf{Model} & \textbf{MSE} & \textbf{MAE} \\
    \midrule
    Poisson  & \textbf{0.246} & \textbf{0.406} \\
    NegBin   & 0.436 & 0.499 \\
    Bayes QP & 0.281 & 0.429 \\
    \bottomrule
  \end{tabular}
  \caption{Mean-function.}
  \label{tab:adequacy-mean}
\end{subtable}
\hfill
\begin{subtable}[t]{0.47\textwidth}
  \centering
  \begin{tabular}{lcc}
    \toprule
    \textbf{Model} & \textbf{MSE} & \textbf{MAE} \\
    \midrule
    Poisson  & 1570.0 & 35.4 \\
    NegBin   &  263.0 & 10.6 \\
    Bayes QP & \textbf{62.3} & \textbf{6.67} \\
    \bottomrule
  \end{tabular}
  \caption{Variance-function.}
  \label{tab:adequacy-var}
\end{subtable}
\caption{Binned error metrics for (a) mean- and (b) variance-function adequacy, computed over deciles of a neutral average of the fitted means across models.  
Entries report mean-squared error (MSE) and mean-absolute error (MAE) between empirical quantities and model-based expectations.  
Each method uses its own selected subset $\gamma$.  
Smaller values indicate better fit; best results are shown in bold.}
\label{tab:adequacy-scores}
\end{table}

While our diagnostic is inspired by the graphical checks proposed by \citet{agnoletto2025bayesian}, there are several important methodological differences that are essential in the present variable--selection and multi--predictor setting. In \citet{agnoletto2025bayesian}, the diagnostic is developed for models with a single covariate, allowing observations to be binned directly along that covariate. In contrast, our models include multiple predictors and distinct selected subsets $\gamma$ across competing specifications, making direct conditioning on any single covariate both arbitrary and potentially misleading. To address this, we introduce a one--dimensional summary index based on the fitted mean, which aggregates the combined effect of all active predictors under the log link and provides a natural scale on which both mean and variance are defined. Moreover, to avoid favoring any particular model when defining the bins, we construct a neutral indexing variable by averaging fitted means across the Poisson, negative binomial, and Bayesian quasi--Poisson models. This neutral binning strategy is new and ensures that the empirical summaries are not implicitly optimized for any single variance specification. Finally, unlike the purely graphical checks in \citet{agnoletto2025bayesian}, we complement the visual diagnostics with quantitative binned error metrics (MSE and MAE), enabling direct numerical comparison of variance and mean adequacy across competing models. Together, these extensions adapt the original diagnostic framework to high--dimensional model--selection settings and provide a principled, model--agnostic assessment of both mean and variance specification.

\vspace{0.2cm}

\noindent \textbf{Gibbs sampling mixing diagnostic: } We assess the mixing behavior of the Gibbs sampler by tracking cumulative posterior inclusion probabilities across iterations, which stabilize rapidly for all three models (Figure~\ref{fig:NMES_diags}).

\begin{figure}[htbp]
  \centering
  \includegraphics[width=\linewidth]{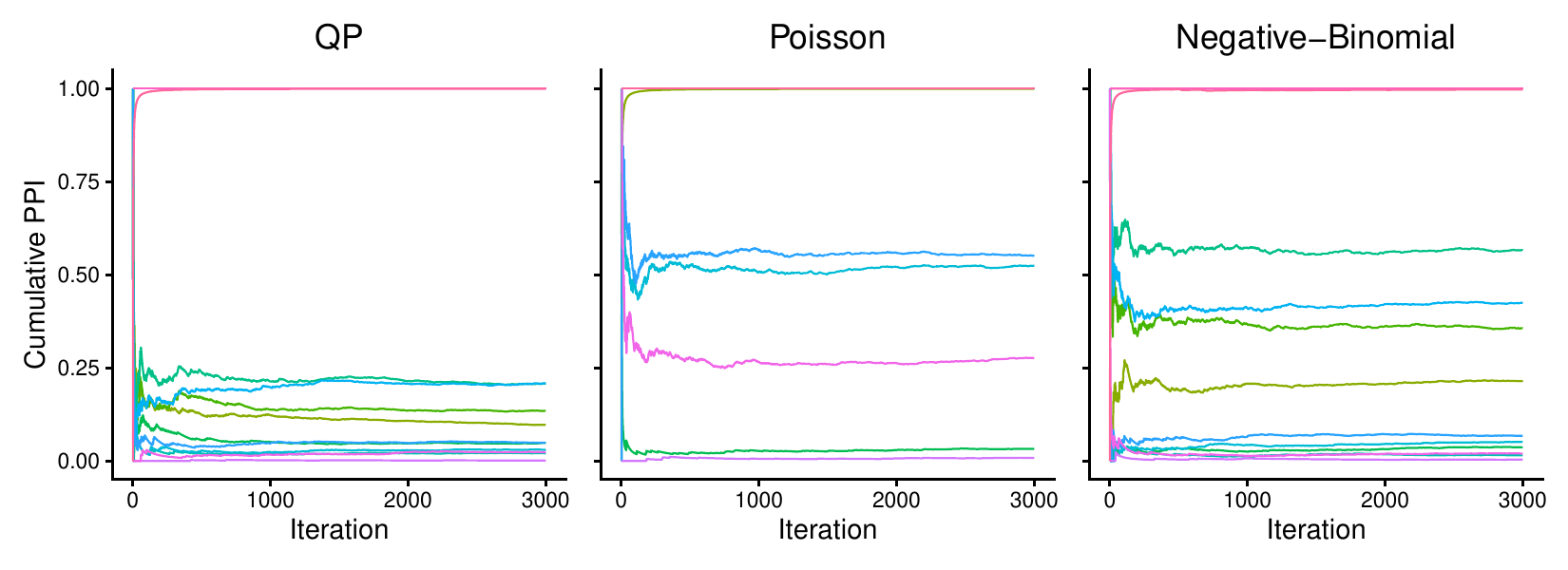}
  \caption{Cumulative estimate of posterior inclusion probabilities with increasing Gibbs sampler iterations for the quasi-posterior, Poisson and negative binomial models.}
  \label{fig:NMES_diags}
\end{figure}

\vspace{0.2cm}

\noindent \textbf{Out of Sample-Predictive Performances:} To evaluate out--of--sample performance, we performed a $10$--fold cross--validation using a variance--weighted mean squared error (WMSE) criterion.  
In each fold,  $10\%$ of the data were held out as a test set, and the remaining observations were used to estimate parameters.  
Within the training set of data, variable selection proceeded as before, and all parameters were obtained as maximum a posteriori (MAP) estimates conditional on the selected subset~$\gamma$.  
For each held--out observation $i\in\mathcal{I}_{\text{test}}$, the predicted mean $\hat\mu_i^{(m)}$ and variance $\hat V_i^{(m)}$ were computed under model $m\in\{\text{Poisson},\text{NB},\text{QP}\}$ using their theoretical variance forms,
\[
\hat V_i^{\text{Pois}} = \hat\mu_i^{\text{Pois}}, \qquad
\hat V_i^{\text{NB}} = \hat\mu_i^{\text{NB}} + \frac{(\hat\mu_i^{\text{NB}})^2}{\hat\theta}, \qquad
\hat V_i^{\text{QP}} = \hat\psi\,\hat\mu_i^{\text{QP}}.
\]
Predictive accuracy was then summarized by the variance--weighted mean squared error,
\[
\mathrm{WMSE}^{(m)} = 
\frac{1}{n_{\text{test}}}
\sum_{i\in\mathcal{I}_{\text{test}}}
\frac{(y_i-\hat\mu_i^{(m)})^2}{\hat V_i^{(m)}},
\]
where larger predictive residuals are penalized more heavily when the model assigns low variance.  
Lower values correspond to better calibrated predictive performance.

\begin{table}[htbp]
\centering
\small
\begin{tabular}{lcc}
  \toprule
  \textbf{Model} & \textbf{WMSE} & \textbf{SE} \\
  \midrule
  QP & \textbf{1.04} & 0.10 \\
  NegBin   & 1.25 & 0.15 \\
  Poisson  & 6.78 & 0.58 \\
  \bottomrule
\end{tabular}
\caption{Ten--fold cross--validation results based on the variance--weighted mean squared error (WMSE).  
Entries report fold--averaged means and standard errors.  
Smaller values indicate higher predictive accuracy; the best performance is shown in bold.}
\label{tab:wmse-results}
\end{table}

The results in Table~\ref{tab:wmse-results} show that the model quasi-posterior achieves the lowest WMSE across folds, indicating superior predictive calibration and robustness to overdispersion.

\subsubsection{DLD dataset}




\noindent \textbf{Assessing the mean-variance Assumption: }{\label{Sec:homo_diagnostics}} Analogously to the mean/variance–function assessment described in Section~\ref{sec:variance_function_diagnostics} for the count models, we examined the adequacy of the mean and variance assumptions for the DLD dataset using a diagnostic similar to that of \citet{agnoletto2025bayesian}. 
The left panel of Figure~\ref{fig:DLD_homo_vs_hetero} demonstrates that the QP model provides an excellent approximation of the binned average responses. The right panel of Figure~\ref{fig:DLD_homo_vs_hetero} compares the empirical binned mean and variance relationship with the QP's homoskedastic assumption. 
This indicates that the constant–variance assumption is adequate for these data, a conclusion further supported by the Goldfeld–Quandt test ($p = 0.99$).

\begin{figure}[ht!]
  \centering
  \includegraphics[width=0.95\linewidth]{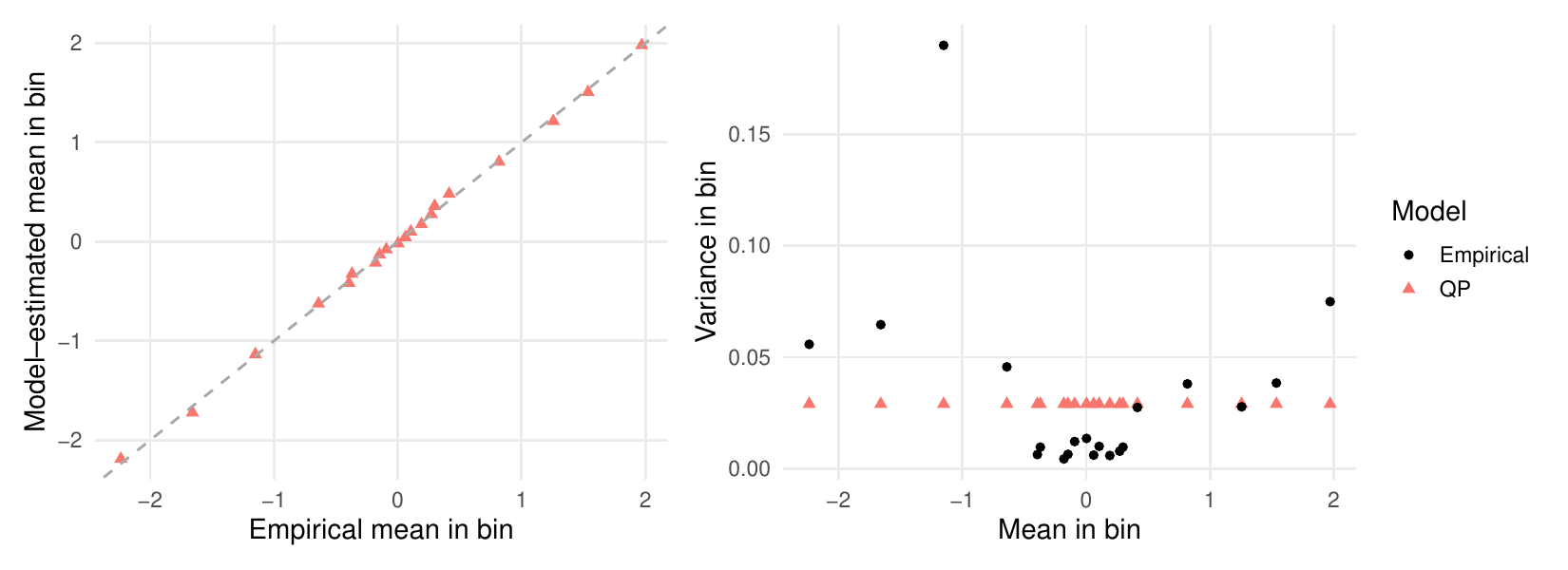}
  \caption{Empirical variances (black points) and model–implied variance functions under the homoscedastic (blue dashed) and heteroscedastic (red solid) quasi–posterior models for the DLD dataset.}
  \label{fig:DLD_homo_vs_hetero}
\end{figure}

\noindent \textbf{Gibbs sampling convergence diagnostics} Convergence of the Gibbs sampler for estimating posterior inclusion probabilities is verified by Figure \ref{fig:DLD_diags}

\begin{figure}[ht!]
  \centering
  \includegraphics[width=0.8\linewidth, trim=0.cm 0.cm 8.75cm  0.cm,clip]{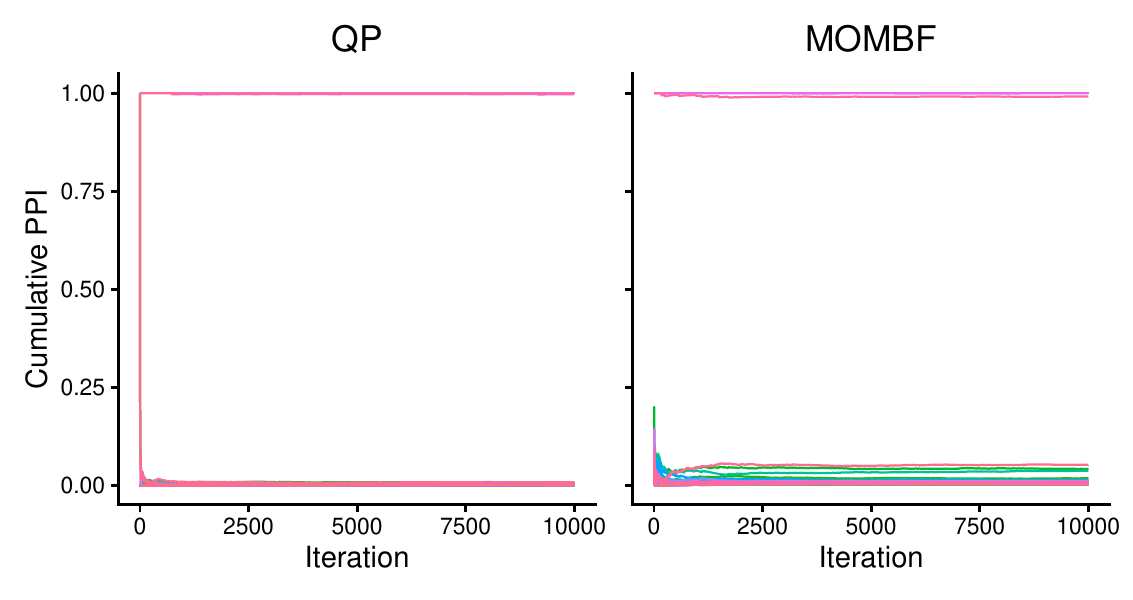}
  \caption{Cumulative estimate of posterior inclusion probabilities with increasing Gibbs sampler iterations for the quasi-posterior and Gaussian (mombf).
  }
  \label{fig:DLD_diags}
\end{figure}

\end{document}